\documentclass[runningheads,a4paper]{elsarticle}

\usepackage[english]{babel}
\usepackage[utf8]{inputenc}
\usepackage{amsthm}
\usepackage{amsmath,amssymb,amsfonts}
\usepackage{graphicx}
\usepackage[colorinlistoftodos]{todonotes}
\usepackage{xspace}
\usepackage{comment}
\usepackage{shuffle}

\usepackage{hyperref}

\usepackage{mathrsfs}

\usepackage{array}
\newcommand{\SV}[1]{}
\newcommand{\LV}[1]{#1}

\usepackage{fancybox}
\usepackage[mathscr]{euscript}
\usepackage{calligra}
\usepackage[titlenumbered,ruled]{algorithm2e}

\providecommand{\tabularnewline}{\\}

\theoremstyle{plain}
\newtheorem{theorem}{Theorem}
\theoremstyle{definition}
\newtheorem{definition}[theorem]{Definition}
\theoremstyle{plain}
\newtheorem{proposition}[theorem]{Proposition}
\theoremstyle{plain}
\newtheorem{lemma}[theorem]{Lemma}
\theoremstyle{plain}
\newtheorem{corollary}[theorem]{Corollary}
\theoremstyle{remark}
\newtheorem{example}[theorem]{Example}
\theoremstyle{remark}
\newtheorem{remark}[theorem]{Remark}

\DeclareMathAlphabet{\mathcalligra}{T1}{calligra}{m}{n}

\DeclareMathOperator{\exactbcov2}{\textsc{EBC}_2}
\DeclareMathOperator{\npclass}{\textsf{NP}}

\DeclareMathOperator{\pclass}{\textsf{P}}
\DeclareMathOperator{\nlclass}{\textsf{NL}}
\DeclareMathOperator{\pspaceclass}{\textsf{PSPACE}}
\DeclareMathOperator{\fa}{FA}
\DeclareMathOperator{\jfa}{JFA}
\DeclareMathOperator{\perm}{perm}

\DeclareMathOperator{\landau}{\mathcal{O}}

\definecolor{darkgreen}{rgb}{0,.7,0}
\definecolor{darkmagenta}{cmyk}{0,1,0,0.2}
\definecolor{HLB}{rgb}{.9,0,0}

\newcommand{\suchthat}{:}

\DeclareFontFamily{U}{bigshuffle}{}
\DeclareFontShape{U}{bigshuffle}{m}{n}{
  <5-8> s*[1.7] shuffle7
  <8->  s*[1.7] shuffle10
}{}
\DeclareSymbolFont{BigShuffle}{U}{bigshuffle}{m}{n}
\DeclareMathSymbol\bigshuffle{\mathop}{BigShuffle}{"001}
\DeclareMathSymbol\bigcshuffle{\mathop}{BigShuffle}{"002}


\usetikzlibrary{arrows}
\usetikzlibrary{arrows,automata}
\usetikzlibrary{automata,positioning,arrows}
\usetikzlibrary{shapes}
\usetikzlibrary{shapes,backgrounds}
\usetikzlibrary{matrix}
\usetikzlibrary{chains,fit,shapes}

\newcommand{\emptyword}{\varepsilon}
\newcommand{\shufflestar}{^{\shuffle,*}}
\newcommand{\shufflen}{^{\shuffle,n}}
\newcommand{\shufflenplusone}{^{\shuffle,n+1}}
\newcommand{\shufflei}{^{\shuffle,i}}
\newcommand{\shuffleiminusone}{^{\shuffle,i-1}}
\newcommand{\shufflezero}{^{\shuffle,0}}

\begin{document}
\title{Characterization and Complexity Results on Jumping Finite Automata}
  \author[add1]{Henning~Fernau\corref{cor1}}
  \ead{Fernau@uni-trier.de}
  \author[add1]{Meenakshi~Paramasivan}
  \ead{Paramasivan@uni-trier.de}
  \author[add1]{Markus~L.~Schmid}
  \ead{MSchmid@uni-trier.de}
   \author[add2]{Vojt\v{e}ch~Vorel}
  \ead{vorel@ktiml.mff.cuni.cz}

  \cortext[cor1]{Corresponding author}
  \address[add1]{Fachbereich 4 -- Abteilung Informatik, Universit\"at Trier, D-54286 Trier, Germany}
  \address[add2]{Faculty of Mathematics and Physics, Charles University, Malostransk\'{e} n\'{a}m. 25, Prague, Czech~Republic}

\begin{abstract}
In a jumping finite automaton, the input head can jump to an arbitrary position within the remaining input after reading and consuming a symbol. 
 We characterize the corresponding class of languages in terms of special shuffle expressions and survey other equivalent notions from the existing literature.
 Moreover, we present several results concerning computational hardness and algorithms for parsing and other basic tasks concerning jumping finite automata.
\end{abstract}

\begin{keyword}
Jumping Finite Automata\sep 
Shuffle Expressions\sep 
Commutative Languages\sep
Semilinear Sets\sep  
$\npclass$-hard problems\sep 
Exponential Time Hypothesis
\end{keyword}

\maketitle 


\section{Introduction}


Throughout the history of automata theory, the classical finite automaton has been extended in many different ways: two-way automata, multi-head automata, automata with additional resources (counters, stacks, etc.), and so on. However, for all these variants, it is always the case that the input is read in a continuous fashion. On the other hand, there exist models that are closer to the classical model in terms of computational resources, but that differ in how the input is processed (e.\,g., restarting automata \cite{Ott2006} and biautomata \cite{KliPol2012}). One such model that has drawn comparatively little attention are the jumping finite automata ($\jfa$) introduced by Meduna and Zemek~\cite{MedZem2012a,MedZem2014}, which are like classical finite automata with the only difference that after reading (i.\,e., consuming) a symbol and changing into a new state, 
the input head can jump to an arbitrary position of the remaining input. \par
We provide a characterization of the $\jfa$ languages in terms of expressions using shuffle, union, and 
iterated shuffle
, which enables us to put them into the context of classical formal language results.
Actually, we have discovered and became aware of many more connections of JFA languages with classical approaches to Formal Languages since we presented our
paper at CIAA in August 2015.
Hence,  this paper considerably deviates from and adds on to the conference version.

The contributions of this paper can be summarized as follows.
\begin{itemize}   
\item We introduce a variant of regular-like expressions, called alphabetic shuffle expressions, that characterize JFA languages and put them into the context of earlier literature
in the area of shuffle expressions. This also resolves several questions from \cite{MedZem2014}, as we show (this way) that JFA languages are closed under iterated shuffle. This approach also clarifies the closure properties under Boolean operations.\par
Moreover, we also investigate the intersection of the JFA languages with the regular languages and consider the problems of deciding for a given regular language or a given JFA language, whether or not it is also a JFA language or a regular language, respectively.
\item Alphabetic shuffle expressions are naturally related to semilinear sets, which allows us to derive a star-height one  normal form for these expressions.
\item We also discuss generalized variants of the two models presented so far, namely, \emph{general jumping finite automata} (\emph{GJFA}) and \emph{SHUF expressions}. In these models, transition labels (or axioms, respectively) are words, not single symbols only. We prove the incomparability of these two language classes and show how they relate to the class of JFA languages.
\item Finally, we arrive at several complexity results, in particular, regarding the parsing complexity of the various mechanisms. This is also one of the questions raised in \cite{MedZem2014}. 
For instance, it is demonstrated that there are fixed $\npclass$-complete GJFA languages. Furthermore, we strengthen the known hardness of the universal word problem for JFA by giving a lower bound based on the exponential time hypothesis.
\end{itemize}
\SV{Due to space restrictions, results marked with $(*)$ are not proven here.\par}

%

\section{Preliminaries}

In this section, we present basic language-theoretical definitions and present the main concepts of this work, i.\,e., jumping finite automata and special types of shuffle expressions.

\subsection{Basic Definitions}\label{sec:basicDefinitions}

We assume the reader to be familiar with the standard terminology in formal language theory and language operations like catenation, union, and iterated catenation, i.\,e., Kleene star. For a word $w \in \Sigma^*$ and $a \in \Sigma$, by $|w|_a$ we denote the number of occurrences of symbol $a$ in $w$.\par
First, let us define the language operations of shuffle and permutation, and the
notion of semilinearity.

\begin{definition}
The \emph{shuffle operation}, denoted by $\shuffle$, is defined by 
\begin{eqnarray*}
u\shuffle v&=&\left\{ x_{1}y_{1}x_{2}y_{2}\ldots x_{n}y_{n}\suchthat \begin{array}{l}
u=x_{1}x_{2}\ldots x_{n},v=y_{1}y_{2}\ldots y_{n},\\
x_{i},y_{i}\in\Sigma^{*},1\leq i\leq n,n\geq1
\end{array}\right\}, \\
L_1\shuffle L_2&=&\bigcup_{\substack{x \in L_1\\ y \in L_2}} \left(u \shuffle v\right),
\end{eqnarray*}
for $u,v \in \Sigma^*$ and $L_1,L_2 \subseteq \Sigma^*$.
\label{def:1}
\end {definition}

\begin{remark}\label{rem-shuffle-inductive}
The following inductive definition of the shuffle operation,
equivalent to Definition~\ref{def:1},
shall 
be useful in some of our proofs. We have 
\begin{eqnarray*}
\emptyword \shuffle u &=& \{u\}, \\
u \shuffle \emptyword &=& \{u\}, \\
au\shuffle bv &=&  a(u \shuffle bv) \cup b(au \shuffle v),
\end{eqnarray*}
for every $u,v \in \Sigma^*$ and $a,b \in \Sigma$.
\end{remark}

The set of permutations of a word can be then conveniently defined using the shuffle operation.

\begin{definition}\label{def-perm}
The set $\perm(w)$ of all permutations of $w$ is inductively defined as follows: $\perm(\emptyword) = \{\emptyword\}$ and, for every $a \in \Sigma$ and $u \in \Sigma^*$, $\perm (a u) = \{a\} \shuffle \perm (u)$.
\end {definition}

The permutation operator extends to languages in the natural way. More precisely, 
$\perm(L_1) = \bigcup_{w \in L_1} \perm(w)$ for $L_1, L_2 \subseteq \Sigma^*$.
Analogously to the iterated catenation, an iterated shuffle operation can be defined as follows.

\begin{definition}
For $L \subseteq \Sigma^*$, the \emph{iterated shuffle} of $L$ is
$$
L\shufflestar = \bigcup^{\infty}_{n=0} L\shufflen ,
$$
where $L\shufflezero = \{\emptyword\}$ and $ L\shufflei = L\shuffleiminusone \shuffle L$.
\label{def:2}
\end {definition}

Let $\mathbb{N}$ denote the set of nonnegative integers and, for $n \geq 1$, let $\mathbb{N}^n$ be the $n$-fold Cartesian Product of $\mathbb{N}$ with itself. For $x,y\in \mathbb{N}^n$, i.\,e., $x=(x_1,\dots,x_n)$ and $y=(y_1,\dots,y_n)$, 
let $x + y = (x_1+y_1,\ldots,x_n+y_n)$ and for $c\in \mathbb{N}$, let $cx = (cx_1,\ldots,cx_n)$.

\begin{definition}
A subset $A \subseteq \mathbb{N}^n$ is said to be \emph{linear} if there are
$v, v_1, \ldots,v_m \in \mathbb{N}^n$ such that $$A = \{v + k_1v_1+k_2v_2+\cdots +k_mv_m \suchthat k_1,k_2,\ldots,k_m \in \mathbb{N}\}.$$ A subset $A \subseteq \mathbb{N}^n$ is said to be \emph{semilinear} if it is a finite union of linear sets.
\label{linearsemilinear}
\end {definition}

A permutation of the coordinates in $\mathbb{N}^n$ preserves semilinearity. Let $\Sigma$ be a finite set of $n$ elements. A \emph{Parikh mapping} $\psi$ from ${\Sigma}^*$ into $\mathbb{N}^n$ is a mapping defined by first choosing an enumeration $a_1, \ldots, a_n$ of the elements of $\Sigma$ and then defining inductively $\psi(\varepsilon) = (0,\ldots,0)$, $\psi(a_i) = (\delta_{1,i},\ldots,\delta_{n,i})$, where $\delta_{j,i} = 0$ if $i \neq j$ and $\delta_{j,i} = 1$ if $i = j$, and $\psi(au)=\psi(a)+\psi(u)$ for all $a\in\Sigma$, $u\in\Sigma^*$.
Any two Parikh mappings from ${\Sigma}^*$ into $\mathbb{N}^n$ differ only by a permutation of the coordinates of $\mathbb{N}^n$. Hence, the concept introduced in the following definition is well-defined.

\begin{definition}Let $\Sigma$ be a finite set of $n$ elements. 
A subset $A \subseteq {\Sigma}^*$ is said to be a \emph{language with the semilinear property}, or \emph{slip} language for short, if $\psi(A)$ is a semilinear subset of $\mathbb{N}^n$ for a Parikh mapping $\psi$ of ${\Sigma}^*$ into $\mathbb{N}^n$. The class of all slip languages is denoted by $\mathscr{PSL}$.
\label{sliplanguage}
\end {definition}

\subsection{Jumping Finite Automata and Shuffle Expressions}

Following Meduna and Zemek \cite{MedZem2012a,MedZem2014}, we denote a \emph{general finite machine} as $M = (Q, \Sigma, R, s, F)$, where $Q$ is a finite set of \emph{states}, $\Sigma$ is the \emph{input alphabet}, $\Sigma \cap Q = \emptyset$, $R$ is a finite set of \emph{rules}\footnote{We also refer to rules as \emph{transitions} with \emph{labels} from $\Sigma^*$.} of the form $py \rightarrow q$, where $p,q \in Q$ and $y \in \Sigma^*$, $s \in Q$ is the \emph{start state} and $F\subseteq Q$ is a set of \emph{final states}. If all rules $py \rightarrow q \in R$ satisfy $|y| \leq 1$, then $M$ is a \emph{finite machine}.\par
We interpret $M$ in two ways.
\begin{itemize}
\item As a (general) finite automaton: a \emph{configuration} of $M$ is any string in $Q\Sigma^*$, the binary \emph{move relation}
on $Q\Sigma^*$, written as $\Rightarrow$, is defined as follows:
$$pw\Rightarrow qz \iff \exists \ py \rightarrow q \in R \suchthat w = yz\,.$$
\item As a (general) jumping finite automaton: a \emph{configuration} of $M$ is any string in $\Sigma^*Q\Sigma^*$, the binary \emph{jumping relation} on $\Sigma^*Q\Sigma^*$, written as  $\curvearrowright$, satisfies: 
$$vpw\curvearrowright v'qz \iff \exists \ py \rightarrow q \in R \ \exists \ z\in \Sigma^* \suchthat w = yz \ \wedge \ vz = v'z'\,.$$
\end{itemize}
\noindent We hence obtain the following languages from a (general) finite machine $M$ :
\begin{align*}
L_{\fa}(M) &= \{w\in \Sigma^* \suchthat \exists \ f \in F \suchthat sw \Rightarrow^* f\},\\
L_{\jfa}(M) &= \{w\in \Sigma^* \suchthat \exists \ u,v \in \Sigma^* \ \exists \ f \in F \suchthat w=uv \wedge usv \curvearrowright^* f\}\,. 
\end{align*}
\noindent This defines the language classes $\mathscr{REG}$ (accepted by finite automata), $\mathscr{JFA}$ (accepted by jumping finite automata, or JFAs for short) and $\mathscr{GJFA}$ (accepted by general jumping finite automata, or GJFAs for short). As usual, $\mathscr{CFL}$ denotes the class of context-free languages.\par
Next, we define a special type of expressions that use the shuffle operator. Such shuffle expressions have been an active field of study over decades; we only point the reader to \cite{Jan81}, \cite{Jan85} and \cite{JedSze2001}. We first recall the definition of the \textsc{SHUF} expressions introduced by Jantzen~\cite{Jan79a},
from which we then derive $\alpha$-\textsc{SHUF} expressions, which are tightly linked to jumping finite automata.

\begin{definition}
The symbols $\emptyset, \emptyword$ and each $w \in \Sigma^+$ are
(atomic) \textsc{SHUF} expressions. If $S_1, S_2$ are \textsc{SHUF} expressions, then $(S_1+S_2), (S_1 \shuffle S_2)$ and ${S_1}\shufflestar$ are \textsc{SHUF} expressions.
\label{def:10}
\end {definition}

The semantics of \textsc{SHUF} expressions is defined in the expected way, i.\,e., $L(\emptyset) = \emptyset$, $L(\emptyword) = \{\emptyword\}$, $L(w) = \{w\}$, $w \in \Sigma^+$, and, for \textsc{SHUF} expressions $S_1$ and $S_2$, $L(S_1+S_2) = L(S_1) \cup L(S_2), L(S_1 \shuffle S_2) = L(S_1) \shuffle L(S_2)$, and $L({S_1}\shufflestar) = L({S_1})\shufflestar$.\par
A \textsc{SHUF} expression is an \emph{$\alpha$-SHUF expression}, if its atoms are only $\emptyset, \emptyword$ or single symbols $a \in \Sigma$. Since $\alpha$-\textsc{SHUF} expressions are \textsc{SHUF} expressions, the semantics are already defined.

Notice that we could introduce (classical) regular expressions in the very same way (i.\,e., we only have to substitute the shuffle operation by the catenation and the iterated shuffle by the Kleene star). Clearly, these characterize the regular languages.\par
Sometimes, to avoid confusion with arithmetics, we also write $\cup$ in expressions instead of $+$.\par
Let us illustrate the concepts defined above by two examples.

\begin{example}\label{ex:1}
Let $M$ be the finite machine depicted in Figure~\ref{fig:exampleFMM}, which accepts the regular language $L_{\fa}(M) = L((abc)^*)$. However, if we interpret $M$ as a jumping finite automaton, it accepts the non-context-free language $L_{\jfa}(M) = \{w \in \{a, b, c\}^* \suchthat |w|_a = |w|_b = |w|_c\}$. Obviously, $L_{\jfa}(M)$ is also defined by the $\alpha$-\textsc{SHUF} expressions $(a \shuffle b \shuffle c)\shufflestar$ and, furthermore, $\perm(L_{\fa}(M)) = L_{\jfa}(M)$. As shall be demonstrated later, every JFA-language can be expressed by an $\alpha$-\textsc{SHUF} expression and $\perm(L_{\fa}(M) = L_{\jfa}(M)$ holds for every finite machine $M$.
\end {example}
\begin{example}\label{ex:2}

The general finite machine $M'$ depicted in Figure~\ref{fig:exampleGFM} accepts the regular language $L_{\fa}(M') = L((abcd)^*)$. However, it is not easy to describe the language $L_{\jfa}(M')$ in a simple way.
Obviously, $\perm(L_{\fa}(M')) \neq L_{\jfa}(M')$ since $bacd \notin L_{\jfa}(M')$ and, furthermore, the \textsc{SHUF} expression $(ab \shuffle cd)^*$ does not describe $L_{\jfa}(M')$ either.
\end {example}


\begin{figure}[!htb]
\centering
\begin{minipage}{0.5\textwidth}
\centering
\begin{tikzpicture}[->,>=stealth',shorten >=1pt,auto,node distance=5.5em, thick]
  \tikzstyle{every state}=[fill=none,draw=black,text=black]
  \node[initial,state,accepting] (s)   [fill= none, draw=black,text=black]    {};
  \node[state]         (r)   [right of=s]                           {}; 
  \node[state]         (t)   [right of=r]                           {};
  \path (s) edge node [below] {$a$} (r)
   		(r) edge node [below] {$b$} (t)
        (t) edge [bend right] node [above] {$c$} (s);
\end{tikzpicture}
\caption{Finite Machine $M$.}
\label{fig:exampleFMM}
\end{minipage}%
\begin{minipage}{0.5\textwidth}
\centering
\begin{tikzpicture}[->,>=stealth',shorten >=1pt,auto,node distance=5.5em, thick]
  \tikzstyle{every state}=[fill=none,draw=black,text=black]
  \node[initial,state,accepting] (s)   [fill= none, draw=black,text=black]    {};
  \node[state]         (r)   [right of=s]                           {}; 
  \path (s) edge [bend left] node {$ab$} (r)
   		(r) edge [bend left] node {$cd$} (s);					
\end{tikzpicture}
\caption{General Finite Machine $M'$.}
\label{fig:exampleGFM}
\end{minipage}
\end{figure}


\subsection{Discussion of Related Concepts}

It is hard to trace back all origins and names of the concepts introduced so far.
We only mention a few of these sources in this subsection, also to facilitate finding the names of the concepts and understanding the connections to
other parts of mathematics and computer science. This subsection is not meant to be a survey on all the neighboring areas. Rather, it should give some impression about the richness of interrelations.
\begin{itemize}
\item Shuffle expressions have been introduced and studied to understand the semantics of parallel programs. This was undertaken, as it appears to be, independently by Campbell and
Habermann \cite{CamHab74}, by  Mazurkiewicz  \cite{Maz75} and by Shaw \cite{Sha78}. These expressions (also known as \emph{flow expressions}) allow for sequential operators (catenation and iterated catenation)
as well as for parallel operators (shuffle and iterated shuffle).
In view of the results obtained in this paper,
let us only highlight one particular result for these flow expressions. The universality problem was shown to be undecidable for such expressions, even when restricted to binary alphabets; see \cite{Iwa83}.
\item Further on, different variants of shuffle expressions in general were studied; see, e.\,g., \cite{FliKud2012b, Jan85, Jed90,  MaySto94}. We only mention  SHUF as a subclass of Shuffle Expressions. 
Actually, even the $\alpha$-SHUF expressions that we introduced in~\cite{FerParSch2015} have been
considered as a special case before; 
the corresponding class of languages was termed $\mathcal{L}_3$ in~\cite{HopOpp76}, possibly a bit
confusing given the traditional meaning of the term $\mathcal{L}_3$ for the regular languages. 
\item Semilinear subsets of $\mathbb{N}^n$ 
 show up quite naturally in many branches of mathematics.
 In our context of Theoretical Computer Science, the first important ingredient was the paper \emph{Semigroups, Presburger formulas, and languages} \cite{GinSpa66} whose title pretty much spells out the connections between logic and formal languages.\footnote{Interestingly enough, Presburger's original work
 \emph{Über die Vollständigkeit eines gewissen Systems der Arithmetik ganzer Zahlen, in welchem die Addition als einzige Operation hervortritt} has also a title that basically summarizes the contents of the paper.} 
  Later, connections to the theory of Petri nets, in particular, the famous reachability problem, and in this way also connections to many interesting properties of models for concurrent computations became a vivid research topic. 
We refrain from giving detailed further references here, as this would be surely  beyond this short historical summary; the interested reader can find many more references within the following papers:~\cite{DarDMM2012,EspNie94,FinLer2015,Rei2005}.
\item 
Eilenberg and Schützenberger started in \cite{EilSch69} the study of subsets (languages) of the free commutative monoid of some given alphabet, relating this again to earlier studies of Ginsburg and Spanier~\cite{GinSpa64} on bounded languages, which are kind of natural representatives of the permutation equivalence classes. 
With the notions given in \cite{EilSch69} for the definition of rational sets on the level of commutative monoids, if we replace the $+$ by $\shuffle$ and Kleene star by iterated shuffle, then we
basically arrive at the $\alpha$-SHUF expressions that we introduced in~\cite{FerParSch2015}.
The studies of \cite{EilSch69} were
later continued, e.\,g., by Huynh~\cite{Huy82,Huy83}.
\item Relations to blind counter automata become obvious if one considers
the way that JFAs process the input, basically only counting occurrences of different symbols. The connections to semilinear sets and to Petri nets were already discussed by Greibach~\cite{Gre78}; also, see \cite{Rei2005}.
Even more general yet related structures are studied in \cite{FerSti02,FerSti02a,MitSti2001,Zet2015}.
The main formal difference is that with JFA, the input is not processed continuously, while all devices
mentioned in this paragraph do process the input in a continuous manner, although the way that the (counter) storage can be operated allows these devices to incorporate some jump-like features.
\item Recently, K\v{r}ivka and Meduna~\cite{KriMed2015} studied \emph{jumping grammars} and also showed two variants of regular grammars that characterize $\mathscr{JFA}$ and $\mathscr{GJFA}$.
This type of grammar needs to be further compared to different varieties of 
commutative grammars. We only mention \cite{Esp97,Huy83,Nag0810,Nag2009}.
\end{itemize}

\section{Basic Algebraic Properties of Shuffle and Permutation}

In this section, we state some basic (algebraic) properties of the shuffle and permutation operations. To this end, we first recall the following computation rules for the shuffle operator from \cite{Jan79a}.

\begin{proposition} Let $M_1,M_2,M_3$ be arbitrary languages.
\begin{enumerate}
	\item $M_1 \shuffle M_2 = M_2 \shuffle M_1$ (commutative law),
	\item $(M_1 \shuffle M_2) \shuffle M_3 = M_1 \shuffle (M_2 \shuffle M_3)$ (associative law),
	\item $M_1 \shuffle (M_2 \cup M_3) = M_1 \shuffle M_2 \cup M_1 \shuffle M_3$ (distributive law),
	\item ${(M_1 \cup M_2)}\shufflestar = (M_1)\shufflestar \shuffle (M_2)\shufflestar$,
	\item ${({M_1}\shufflestar)}\shufflestar = (M_1)\shufflestar$,
	\item $(M_1\shuffle {{M_2}\shufflestar)}\shufflestar = (M_1 \shuffle{(M_1 \cup M_2)}\shufflestar)\cup \{\emptyword\}$.
\end{enumerate}
\label{laws:1}
\end{proposition}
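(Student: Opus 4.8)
The plan is to verify the six identities in order, using the inductive characterization of shuffle from Remark~\ref{rem-shuffle-inductive} and the definition of iterated shuffle (Definition~\ref{def:2}) together with the finitary identities (1)--(3) once they are established. Identities (1)--(3) are the genuinely "base-level" facts: commutativity (1) is immediate from Definition~\ref{def:1}, since the defining set for $u \shuffle v$ is symmetric in $u$ and $v$; the distributive law (3) follows directly by unfolding $L_1 \shuffle L_2 = \bigcup_{x\in L_1, y\in L_2}(x\shuffle y)$ and splitting the union over $M_2 \cup M_3$. For associativity (2) I would argue that both $(x\shuffle y)\shuffle z$ and $x\shuffle(y\shuffle z)$ equal the set of all interleavings $x_1 y_1 z_1 x_2 y_2 z_2 \cdots x_n y_n z_n$ with $x = x_1\cdots x_n$, $y = y_1\cdots y_n$, $z = z_1 \cdots z_n$ (a "ternary shuffle" normal form); each of the two nested shuffles is then shown to coincide with this set by a routine regrouping of factors. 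All three then lift to languages by the distributive law.

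For (4)--(6), which involve iterated shuffle, I would use the definition $L\shufflestar = \bigcup_{n\ge 0} L\shufflen$ and prove containment in both directions, typically by induction on $n$. For (4), $(M_1\cup M_2)\shufflestar \subseteq (M_1)\shufflestar \shuffle (M_2)\shufflestar$: an element of $(M_1\cup M_2)\shufflen$ is an interleaving of $n$ words each from $M_1$ or $M_2$; using commutativity and associativity (1)--(2) one regroups the $M_1$-words together and the $M_2$-words together, exhibiting the element in $(M_1)\shufflek \shuffle (M_2)\shuffle^{\shuffle,n-k}$. The reverse inclusion is immediate since $M_1, M_2 \subseteq M_1 \cup M_2$ and $\shuffle$ is monotone. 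Identity (5) follows from (4) (with $M_1 = M_2$) plus the evident fact $M_1 \shufflestar \shuffle M_1\shufflestar = M_1\shufflestar$ and $M_1 \subseteq M_1\shufflestar \subseteq (M_1\shufflestar)\shufflestar$; alternatively, directly: $(M_1\shufflestar)\shufflen$ is a shuffle of $n$ elements each itself a shuffle of finitely many $M_1$-words, hence by associativity a single element of $M_1\shufflestar$, and conversely $M_1\shufflestar \subseteq (M_1\shufflestar)\shuffle^{\shuffle,1}$.

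Identity (6) is the one I expect to be the main obstacle, since it is not a pure monotonicity/regrouping argument and the two sides are genuinely different-looking. The inclusion "$\supseteq$" is the easier half: $\emptyword$ is in the left side ($n=0$ term), and for $w \in M_1 \shuffle (M_1\cup M_2)\shufflestar$ one writes the second factor via (4) as an element of $M_1\shufflestar \shuffle M_2\shufflestar$, so $w$ is a shuffle of $\ge 1$ copies of $M_1$-words with some $M_2$-words, which one recognizes as lying in $(M_1 \shuffle M_2\shufflestar)\shufflen$ for suitable $n\ge 1$ by grouping each $M_1$-word with a (possibly empty) block of $M_2$-words. For "$\subseteq$", the nonempty elements of $(M_1\shuffle M_2\shufflestar)\shufflen$ with $n\ge 1$ are shuffles of $n$ blocks, each block being one $M_1$-word shuffled with some $M_2$-words; one $M_1$-word can be peeled off to the front (using commutativity/associativity) and everything else—the remaining $n-1$ $M_1$-words and all the $M_2$-words—collapses into $(M_1\cup M_2)\shufflestar$ via (4). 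The bookkeeping of empty versus nonempty iterates, and making the "peel off one $M_1$-word" step precise, is where care is needed; I would phrase it as: $(M_1 \shuffle M_2\shufflestar)\shufflen = M_1\shufflen \shuffle M_2\shufflestar$ for $n \ge 0$ (again by (4) and associativity), reducing (6) to showing $\bigcup_{n\ge 1}(M_1\shufflen \shuffle M_2\shufflestar) = M_1 \shuffle (M_1\cup M_2)\shufflestar$, which after factoring out one $M_1$ becomes $\bigcup_{n\ge 0}(M_1\shufflen \shuffle M_2\shufflestar) = (M_1\cup M_2)\shufflestar$, i.e. exactly (4) again. Since the proposition is quoted from \cite{Jan79a}, I would keep the write-up brief, proving (1)--(3) and (6) and indicating (4)--(5) as analogous inductions.
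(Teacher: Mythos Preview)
The paper does not give its own proof of this proposition: it is explicitly \emph{recalled} from \cite{Jan79a} and stated without argument. So there is no ``paper's proof'' to compare against; your proposal supplies a proof where the paper chose to cite one.

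Your sketch is mathematically sound and the reduction of (6) to (4) via the identity $(M_1 \shuffle M_2\shufflestar)\shufflen = M_1\shufflen \shuffle M_2\shufflestar$ is a clean way to handle the hardest item. One small slip: that identity fails for $n=0$, since the left side is $\{\emptyword\}$ while the right side is $M_2\shufflestar$. You only need it for $n\geq 1$ (the $n=0$ term is the isolated $\{\emptyword\}$ on both sides of (6)), and indeed your subsequent reduction starts the union at $n\geq 1$, so the argument goes through once you restrict the claim accordingly.
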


The second, third and fifth rule are also true for (iterated) catenation  instead of (iterated) shuffle. This is no coincidence, as we will see. 
We can deduce from the first three computation rules the following result.

\begin{proposition}\SV{(*)}\label{prop-shuffle-semiring}
$(2^{\Sigma^*},\cup,\shuffle,\emptyset,\{\emptyword\})$ is a commutative semiring.
\end{proposition}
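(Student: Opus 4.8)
The plan is to verify the semiring axioms directly, drawing on Proposition~\ref{laws:1} for the multiplicative structure and on elementary set-theoretic facts for the additive one. Recall that a commutative semiring $(R,+,\cdot,0,1)$ requires: $(R,+,0)$ is a commutative monoid; $(R,\cdot,1)$ is a commutative monoid; multiplication distributes over addition on both sides; and $0$ is absorbing, i.e.\ $0 \cdot a = a \cdot 0 = 0$. Here $R = 2^{\Sigma^*}$, addition is union, multiplication is shuffle, the additive identity is $\emptyset$, and the multiplicative identity is $\{\emptyword\}$.

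First I would dispatch the additive monoid: $(2^{\Sigma^*},\cup,\emptyset)$ is a commutative monoid since union of sets is associative and commutative and $\emptyset \cup L = L$ for every $L \subseteq \Sigma^*$; this is entirely standard. Next, for the multiplicative monoid $(2^{\Sigma^*},\shuffle,\{\emptyword\})$, commutativity and associativity are exactly items~1 and~2 of Proposition~\ref{laws:1}, and it remains only to check that $\{\emptyword\}$ is a two-sided identity for $\shuffle$; by commutativity it suffices to show $\{\emptyword\} \shuffle L = L$, which follows immediately from the inductive characterization in Remark~\ref{rem-shuffle-inductive} (the base case $\emptyword \shuffle u = \{u\}$), together with the fact that $\shuffle$ distributes over the union defining $L$ as $\bigcup_{u \in L}\{u\}$.

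Then I would handle distributivity: $M_1 \shuffle (M_2 \cup M_3) = (M_1 \shuffle M_2) \cup (M_1 \shuffle M_3)$ is item~3 of Proposition~\ref{laws:1}, and the other side follows by commutativity (item~1). Finally, for the absorbing property $\emptyset \shuffle L = \emptyset$: from Definition~\ref{def:1}, $\emptyset \shuffle L = \bigcup_{u \in \emptyset,\, v \in L}(u \shuffle v)$, and an empty index set yields the empty union, hence $\emptyset$; the other side again follows by commutativity. This completes the verification of all axioms.

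I do not expect any genuine obstacle here: the statement is essentially a bookkeeping exercise, and Proposition~\ref{laws:1} already supplies the only nontrivial identities (associativity and distributivity of $\shuffle$). The only point requiring a moment's care is that the first three rules of Proposition~\ref{laws:1} are stated for arbitrary languages, so they apply verbatim to all of $2^{\Sigma^*}$; and one should be slightly careful to note that the distributive law as stated in item~3 is the left-distributive law, with the right-distributive law obtained from commutativity. Everything else is immediate from the definitions of shuffle and union.
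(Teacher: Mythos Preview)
Your proposal is correct and follows essentially the same approach as the paper's own proof: both verify the additive monoid via elementary set theory, invoke items~1--3 of Proposition~\ref{laws:1} for the commutative multiplicative monoid and distributivity, and note that $\{\emptyword\}$ is the shuffle identity. Your treatment is in fact slightly more thorough than the paper's, since you explicitly verify the absorbing property $\emptyset \shuffle L = \emptyset$ and observe that right-distributivity comes from commutativity, points the paper leaves implicit.
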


\LV{\begin{proof}
Instead of giving a complete formal argument, let us mostly recall what 
needs to be shown, giving then appropriate hints.
First, 
$(2^{\Sigma^*},\cup,\emptyset)$ is a commutative monoid;
this is a well-known set-theoretic statement.
Second, %
$(2^{\Sigma^*},\shuffle,\{\emptyword\})$ is a commutative monoid;
this corresponds to the first two computation rules, plus the fact that 
$\{\emptyword\}$ is the neutral element with respect to the shuffle operation.
Third, 
the distributive law was explicitly stated as the third computation rule.
\end{proof}

We are now discussing some special properties of the $\perm$ operator from a different (algebraic) viewpoint. 
 \LV{Reminiscent of the presentation in \cite{FerSem00}, there is an alternative way of looking at the permutation operator. Namely, let $w\in\Sigma^n$ be a word of length $n$, spelled out as
$w=a_1\cdots a_n$ for $a_i\in\Sigma$. Then, $u\in\perm(w)$ if and only if 
there exists a bijection $\pi:\{1,\dots,n\}\to\{1,\dots,n\}$ such that 
$u=a_{\pi(1)}\cdots a_{\pi(n)}$. In combinatorics, such bijections  
are also known as permutations. This also shows that 
$|\perm(w)|\leq (|w|)!$.
}

Next, we summarize two important properties of the operator $\perm$ in the following two lemmas.

\begin{lemma}\label{hullOperatorLemma}
$\perm:2^{\Sigma^*}\to 2^{\Sigma^*}$ is a hull operator, i.\,e., it is extensive ($L\subseteq \perm(L)$), increasing (if $L_1\subseteq L_2$, then $\perm(L_1)\subseteq \perm(L_2)$), and idempotent ($\perm(\perm(L))=\perm(L)$).
\end{lemma}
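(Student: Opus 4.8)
The plan is to verify the three defining properties of a hull operator directly from Definition~\ref{def-perm}, using the inductive structure of $\perm$ and the elementary properties of shuffle recorded in Proposition~\ref{laws:1} and Remark~\ref{rem-shuffle-inductive}.

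For \emph{extensivity} ($L \subseteq \perm(L)$), it suffices to show $w \in \perm(w)$ for every word $w$, since $\perm(L) = \bigcup_{w \in L}\perm(w)$. This follows by a straightforward induction on $|w|$: the base case $\emptyword \in \perm(\emptyword) = \{\emptyword\}$ is immediate, and for $w = au$ we have $u \in \perm(u)$ by induction hypothesis, hence $au \in \{a\}\shuffle\perm(u) = \perm(au)$ because $a u \in \{a\}\shuffle\{u\}$ (using, e.g., the inductive shuffle rule $au \shuffle \emptyword$). Alternatively, one can observe that the combinatorial description already noted in the excerpt ($u \in \perm(w)$ iff $u = a_{\pi(1)}\cdots a_{\pi(n)}$ for some permutation $\pi$) makes extensivity trivial by taking $\pi = \mathrm{id}$.

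For \emph{monotonicity} (if $L_1 \subseteq L_2$ then $\perm(L_1)\subseteq\perm(L_2)$), this is immediate from $\perm(L) = \bigcup_{w\in L}\perm(w)$: enlarging the index set of a union enlarges the union.

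\emph{Idempotence} ($\perm(\perm(L)) = \perm(L)$) is the main point. The inclusion $\perm(L) \subseteq \perm(\perm(L))$ is just extensivity applied to $\perm(L)$. For the reverse inclusion, by the union characterization it suffices to show that $u \in \perm(w)$ and $v \in \perm(u)$ imply $v \in \perm(w)$, i.e. that ``being a permutation of'' is transitive. Using the combinatorial description this is immediate: if $u$ arises from $w$ by the bijection $\pi$ and $v$ arises from $u$ by the bijection $\sigma$, then $v$ arises from $w$ by $\pi\circ\sigma$. If one prefers to argue purely from the inductive Definition~\ref{def-perm}, one instead proves by induction on $|w|$ that $\perm(\perm(w)) = \perm(w)$, which reduces to the identity $\perm(\{a\}\shuffle\perm(u)) = \{a\}\shuffle\perm(\perm(u))$; this in turn needs the fact that $\perm$ commutes with shuffle in the sense $\perm(L_1 \shuffle L_2) = \perm(L_1)\shuffle\perm(L_2)$, together with commutativity and associativity of $\shuffle$ from Proposition~\ref{laws:1}. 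The cleanest route, and the one I would take, is the combinatorial permutation-of-coordinates argument, since it dispatches all three properties with essentially no computation; the only mild obstacle is making sure the equivalence between the inductive definition of $\perm$ and the ``$u = a_{\pi(1)}\cdots a_{\pi(n)}$'' description is properly justified, but that equivalence is already asserted in the excerpt and follows by a routine induction.
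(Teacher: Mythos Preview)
Your proposal is correct and your preferred route---the combinatorial permutation-of-coordinates argument showing transitivity of ``being a permutation of'' via composition of bijections---is exactly what the paper does for idempotence; the paper likewise dismisses extensivity and monotonicity as easy.

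One caution about your alternative inductive route for idempotence: you invoke the identity $\perm(L_1 \shuffle L_2) = \perm(L_1)\shuffle\perm(L_2)$, but in the paper's development that identity is Lemma~\ref{thm:1}, whose proof (via Lemmas~\ref{lem:11} and~\ref{lem:14}) explicitly uses that $\perm$ is a hull operator. So that alternative would be circular as stated; you were right to prefer the combinatorial argument.
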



\LV{\begin{proof}We are going to show only the last of 
the three properties, the other two are easy to see.
Let $w\in \perm(\perm((L))\cap\Sigma^n$ with $w=a_1\cdots a_n$ for  $a_i\in\Sigma$.
This means that there is a permutation $\pi:\{1,\dots,n\}\to\{1,\dots,n\}$ such that $u=a_{\pi(1)}\cdots a_{\pi(n)}$ for some $u\in\perm(L)$.
This means that there is another permutation $\pi'$ such that 
$u'=a_{\pi'(\pi(1))}\cdots a_{\pi'(\pi(n))}\in L$.
As the composition of $\pi$ and $\pi'$ is again a permutation, we
find that $w\in \perm(L)$. Hence, $\perm(\perm(L))\subseteq \perm(L)$,
and as $\perm$ is extensive, $\perm(\perm(L))=\perm(L)$.
\end{proof}
}

\LV{Due to}\SV{By} the well-known correspondence between hull operators and (systems of) closed sets, we will
also speak \SV{of}\LV{about} \emph{perm-closed languages} in the following, i.\,e., languages $L$ satisfying $\perm(L)=L$. Such languages are also called 
\emph{commutative}\LV{, see \cite{Lat79a}}.
\begin{lemma}
The set $\left\{ \perm(w)\suchthat w\in \Sigma^* \right\}$
is a partition of $\Sigma^*$.
There is a natural bijection between this partition and
the set of functions $\mathbb{N}^\Sigma$, given by the \emph{Parikh mapping} $\pi_\Sigma:\Sigma^*\to \mathbb{N}^\Sigma, w\mapsto (a\mapsto |w|_a)$, where $|w|_a$ is the number of occurrences of $a$ in $w$.
Namely, $\perm(w)=\pi_\Sigma^{-1}(\pi_\Sigma(w))$ for $w\in \Sigma^*$.
%
\label{obs:1}
\label{permcharac}
\end {lemma}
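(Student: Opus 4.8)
The plan is to establish the three assertions in turn: that the permutation classes partition $\Sigma^*$, that the Parikh mapping $\pi_\Sigma$ induces a bijection between these classes and $\mathbb{N}^\Sigma$, and finally the explicit identity $\perm(w)=\pi_\Sigma^{-1}(\pi_\Sigma(w))$.

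First I would prove the identity $\perm(w)=\pi_\Sigma^{-1}(\pi_\Sigma(w))$, since the other two claims follow readily from it. The inclusion $\perm(w)\subseteq \pi_\Sigma^{-1}(\pi_\Sigma(w))$ is the statement that permuting a word does not change its letter counts; I would prove this by induction on $|w|$ using the inductive definition of $\perm$ from Definition~\ref{def-perm} (namely $\perm(au)=\{a\}\shuffle\perm(u)$), together with the observation that shuffling in a single letter $a$ increases $|\cdot|_a$ by one and leaves all other counts unchanged. For the reverse inclusion, suppose $\pi_\Sigma(u)=\pi_\Sigma(w)$; using the alternative combinatorial description of $\perm$ recalled just before this lemma (that $u\in\perm(w)$ iff $u=a_{\sigma(1)}\cdots a_{\sigma(n)}$ for some bijection $\sigma$ of $\{1,\dots,n\}$ when $w=a_1\cdots a_n$), one builds the required bijection by matching, for each letter $a\in\Sigma$, the positions of $a$ in $w$ with the positions of $a$ in $u$ — both sets have the same cardinality $|w|_a=|u|_a$, so such a matching exists, and the union over all $a$ is a bijection $\sigma$ realizing $u\in\perm(w)$.

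Granting this identity, the partition claim is immediate: the sets $\pi_\Sigma^{-1}(\vec n)$ for $\vec n\in\mathbb{N}^\Sigma$ are the fibers of the map $\pi_\Sigma$, hence they are pairwise disjoint and cover $\Sigma^*$; and every fiber is nonempty (given $\vec n$, the word $\prod_{a\in\Sigma} a^{\vec n(a)}$ maps to it), so every fiber is of the form $\perm(w)$ for some $w$, and conversely every $\perm(w)$ is a fiber. For the bijection, the map $\perm(w)\mapsto \pi_\Sigma(w)$ is well-defined (if $\perm(w)=\perm(w')$ then $w'\in\perm(w)$, so $\pi_\Sigma(w')=\pi_\Sigma(w)$ by the identity), injective (if $\pi_\Sigma(w)=\pi_\Sigma(w')$ then $\perm(w)=\pi_\Sigma^{-1}(\pi_\Sigma(w))=\perm(w')$), and surjective onto $\mathbb{N}^\Sigma$ by the nonemptiness of fibers just noted.

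I expect no serious obstacle here; the only point requiring a little care is making the reverse inclusion $\pi_\Sigma^{-1}(\pi_\Sigma(w))\subseteq\perm(w)$ fully rigorous, i.e. explicitly exhibiting the bijection $\sigma$ from the equal letter-count data, and checking that the two descriptions of $\perm$ (the inductive shuffle one in Definition~\ref{def-perm} and the combinatorial one in terms of bijections) genuinely agree — the latter is stated in the excerpt but, strictly, is itself proved by a short induction on $|w|$ which I would either cite or fold in. Everything else is bookkeeping with fibers of a function.
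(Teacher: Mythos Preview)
Your proof is correct and complete. Note, however, that the paper does not actually supply a proof for this lemma: it is stated as a basic observation and left unproved, presumably because it is considered folklore (the surrounding text treats it as an elementary fact about the Parikh mapping). So there is nothing to compare your argument against; you have simply filled in the omitted details, and you did so soundly --- the key identity $\perm(w)=\pi_\Sigma^{-1}(\pi_\Sigma(w))$ is indeed the right thing to prove first, and the partition and bijection claims then drop out as you describe.
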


Note that there exists a possibly better known semiring in formal language theory, using catenation instead of shuffle; let us make this explicit in the following statement.

\begin{proposition}\label{prop-catenation-semiring}
$(2^{\Sigma^*},\cup,\cdot ,\emptyset,\{\emptyword\})$
is a  semiring that is (in general) not commutative.
\end{proposition}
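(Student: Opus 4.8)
The plan is to verify the semiring axioms for $(2^{\Sigma^*},\cup,\cdot,\emptyset,\{\emptyword\})$ directly, reusing as much of the structure from Proposition~\ref{prop-shuffle-semiring} as possible, since the only difference is that shuffle is replaced by catenation. First I would observe that $(2^{\Sigma^*},\cup,\emptyset)$ is a commutative monoid — this is exactly the same set-theoretic fact used in the proof of Proposition~\ref{prop-shuffle-semiring} and requires nothing new. Second, I would check that $(2^{\Sigma^*},\cdot,\{\emptyword\})$ is a monoid: associativity of catenation of languages follows from associativity of catenation of words (lifted to sets), and $\{\emptyword\}$ is clearly the two-sided identity since $\{\emptyword\}\cdot L = L\cdot\{\emptyword\} = L$ for every $L\subseteq\Sigma^*$. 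Third, the left and right distributive laws $L_1\cdot(L_2\cup L_3) = L_1\cdot L_2 \cup L_1\cdot L_3$ and $(L_1\cup L_2)\cdot L_3 = L_1\cdot L_3 \cup L_2\cdot L_3$ hold by unfolding the definition of language catenation as a union over pairs; note that here, unlike the shuffle case, we must state both distributive laws separately because catenation is not commutative. Fourth, $\emptyset$ annihilates: $\emptyset\cdot L = L\cdot\emptyset = \emptyset$, since there are no pairs to concatenate. These four points establish that the structure is a semiring.

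It then remains to justify the phrase ``(in general) not commutative,'' i.e., to exhibit an alphabet and two languages witnessing $L_1\cdot L_2 \neq L_2\cdot L_1$. The cleanest witness is over any alphabet with at least two symbols $a\neq b$: take $L_1 = \{a\}$ and $L_2 = \{b\}$, so that $L_1\cdot L_2 = \{ab\}$ while $L_2\cdot L_1 = \{ba\}$, and these differ. (If $|\Sigma|=1$, catenation of languages is in fact commutative, which is why the qualifier ``in general'' is needed; I would mention this parenthetically.) This finishes the argument.

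I do not anticipate a genuine obstacle here: every step is a routine verification lifting a word-level identity to the powerset via the definition of the induced operation, and the non-commutativity claim is settled by a one-line example. The only thing to be slightly careful about is not to overlook the second distributive law, which is vacuous in the commutative shuffle setting of Proposition~\ref{prop-shuffle-semiring} but is needed here. Following the style of the earlier proof, I would not write out the elementwise chases in full but simply indicate that each axiom is inherited from the corresponding property of catenation on words, and then give the $\{a\}$ versus $\{b\}$ example for non-commutativity.

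\begin{proof}
As in the proof of Proposition~\ref{prop-shuffle-semiring}, we only recall what must be shown and indicate the arguments. First, $(2^{\Sigma^*},\cup,\emptyset)$ is a commutative monoid, which is a standard set-theoretic fact (and is shared with Proposition~\ref{prop-shuffle-semiring}). Second, $(2^{\Sigma^*},\cdot,\{\emptyword\})$ is a monoid: associativity of the catenation of languages is inherited from the associativity of the catenation of words, and $\{\emptyword\}$ is a two-sided neutral element since $\{\emptyword\} \cdot L = L \cdot \{\emptyword\} = L$ for every $L \subseteq \Sigma^*$. Third, both distributive laws hold, i.\,e., $L_1 \cdot (L_2 \cup L_3) = (L_1 \cdot L_2) \cup (L_1 \cdot L_3)$ and $(L_1 \cup L_2) \cdot L_3 = (L_1 \cdot L_3) \cup (L_2 \cdot L_3)$; each follows at once by unfolding the definition of language catenation as a union over pairs of words. (Unlike in the shuffle case of Proposition~\ref{prop-shuffle-semiring}, both laws are needed, since $\cdot$ is not commutative.) Fourth, $\emptyset$ is annihilating: $\emptyset \cdot L = L \cdot \emptyset = \emptyset$. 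Hence the structure is a semiring.

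Finally, this semiring is not commutative in general: if $\Sigma$ contains two distinct symbols $a \neq b$, then $\{a\} \cdot \{b\} = \{ab\} \neq \{ba\} = \{b\} \cdot \{a\}$. (If $|\Sigma| = 1$, the catenation of languages is in fact commutative, which is why the qualification ``in general'' is necessary.)
\end{proof}
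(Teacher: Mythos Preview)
Your proof is correct. The paper does not actually supply a proof for this proposition; it states it as a well-known fact immediately after Proposition~\ref{prop-shuffle-semiring} and moves on. Your verification mirrors the paper's proof of Proposition~\ref{prop-shuffle-semiring} in style and structure, which is entirely appropriate, and your explicit example $\{a\}\cdot\{b\}\neq\{b\}\cdot\{a\}$ together with the remark about the unary case cleanly justifies the ``in general'' qualifier.
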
}




Another algebraic interpretation can be given as follows:
\begin{proposition}
Parikh mappings can be interpreted as a semiring morphisms from
$(2^{\Sigma^*},\cup,\shuffle,\emptyset,\{\emptyword\}) $ to $(2^{\mathbb{N}^\Sigma},\cup,+,\emptyset,\{\vec{0}\} )$.
\end{proposition}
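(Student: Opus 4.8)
The plan is to verify that the Parikh mapping, extended to languages, respects the two binary operations and the two constants of the semirings in question, and that these are exactly the structural data of a semiring morphism. Fix an alphabet $\Sigma$ and let $\pi_\Sigma \colon \Sigma^* \to \mathbb{N}^\Sigma$ be the Parikh mapping from Lemma~\ref{permcharac}, extended to a map $\Psi \colon 2^{\Sigma^*} \to 2^{\mathbb{N}^\Sigma}$ by $\Psi(L) = \{\pi_\Sigma(w) : w \in L\}$. There are four things to check: (i) $\Psi(L_1 \cup L_2) = \Psi(L_1) \cup \Psi(L_2)$; (ii) $\Psi(L_1 \shuffle L_2) = \Psi(L_1) + \Psi(L_2)$, where $+$ on sets is the elementwise (Minkowski) sum $\{x+y : x \in \Psi(L_1), y \in \Psi(L_2)\}$; (iii) $\Psi(\emptyset) = \emptyset$; (iv) $\Psi(\{\emptyword\}) = \{\vec 0\}$. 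Items (i), (iii), (iv) are immediate from the definition of the direct image of a set. So the whole content sits in (ii).

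For (ii) I would first record the single-word fact: for all $u, v \in \Sigma^*$ and every $w \in u \shuffle v$ one has $\pi_\Sigma(w) = \pi_\Sigma(u) + \pi_\Sigma(v)$, since a shuffle of $u$ and $v$ merely interleaves their letters and hence has exactly $|u|_a + |v|_a$ occurrences of each $a \in \Sigma$. This can be proved cleanly by induction using the inductive characterization of $\shuffle$ in Remark~\ref{rem-shuffle-inductive}: the base cases $\emptyword \shuffle u$ and $u \shuffle \emptyword$ are trivial, and in the inductive step $au \shuffle bv = a(u \shuffle bv) \cup b(au \shuffle v)$ one applies $\pi_\Sigma(ax) = \pi_\Sigma(a) + \pi_\Sigma(x)$ together with commutativity and associativity of $+$ in $\mathbb{N}^\Sigma$ and the induction hypothesis. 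Given this, the inclusion $\Psi(L_1 \shuffle L_2) \subseteq \Psi(L_1) + \Psi(L_2)$ follows by unfolding $L_1 \shuffle L_2 = \bigcup_{u \in L_1, v \in L_2}(u \shuffle v)$ and applying the single-word fact to each $w \in u \shuffle v$.

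For the reverse inclusion $\Psi(L_1) + \Psi(L_2) \subseteq \Psi(L_1 \shuffle L_2)$, take $x = \pi_\Sigma(u) + \pi_\Sigma(v)$ with $u \in L_1$, $v \in L_2$; then the plain concatenation $uv$ lies in $u \shuffle v \subseteq L_1 \shuffle L_2$ (the shuffle always contains the catenation, taking $x_1 = u$, $y_1 = \emptyword$, etc.), and $\pi_\Sigma(uv) = \pi_\Sigma(u) + \pi_\Sigma(v) = x$, so $x \in \Psi(L_1 \shuffle L_2)$. This establishes (ii) and hence the proposition; one can optionally remark that $\Psi$ is surjective (every element of $\mathbb{N}^\Sigma$ is hit by a word) but that is not needed for the morphism claim. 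I do not expect a genuine obstacle here: the only mildly delicate point is keeping the two notions of $+$ straight — addition of vectors in $\mathbb{N}^\Sigma$ versus the induced Minkowski sum on subsets of $\mathbb{N}^\Sigma$ — and making sure the target structure $(2^{\mathbb{N}^\Sigma}, \cup, +, \emptyset, \{\vec 0\})$ is indeed a (commutative) semiring, which follows from the commutative monoid structure of $(\mathbb{N}^\Sigma, +, \vec 0)$ in exactly the way $(2^{\Sigma^*}, \cup, \shuffle, \emptyset, \{\emptyword\})$ arose from $(\Sigma^*, \shuffle, \emptyword)$ via Proposition~\ref{prop-shuffle-semiring}.
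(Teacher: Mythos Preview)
Your proof is correct and complete. The paper itself does not supply a proof for this proposition---it is stated as an observation without argument---so your verification of the four morphism conditions, with the only substantive step being the equality $\Psi(L_1 \shuffle L_2) = \Psi(L_1) + \Psi(L_2)$ via the single-word fact $\pi_\Sigma(w) = \pi_\Sigma(u) + \pi_\Sigma(v)$ for $w \in u \shuffle v$, is exactly the natural way to fill this in and is more detailed than anything the paper provides.
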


Especially, we conclude:

\begin{proposition}\label{prop-perm-Parikh}
For $u,v\in\Sigma^*$, $\perm(u)=\perm(v)$ if and only if $\pi_\Sigma(u)=\pi_\Sigma(v)$. 
For $L_1,L_2\subseteq\Sigma^*$, $\perm(L_1)=\perm(L_2)$
\LV{if and only if}\SV{iff}
$\pi_\Sigma(L_1)=\pi_\Sigma(L_2)$.
\end{proposition}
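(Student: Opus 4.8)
The plan is to reduce Proposition~\ref{prop-perm-Parikh} to facts already established in the excerpt, namely Lemma~\ref{permcharac} and the hull-operator properties from Lemma~\ref{hullOperatorLemma}. First I would dispose of the word-level statement. For $u,v\in\Sigma^*$, Lemma~\ref{permcharac} states that $\perm(w)=\pi_\Sigma^{-1}(\pi_\Sigma(w))$; hence $\perm(u)=\perm(v)$ iff $\pi_\Sigma^{-1}(\pi_\Sigma(u))=\pi_\Sigma^{-1}(\pi_\Sigma(v))$. Since the sets $\{\perm(w)\suchthat w\in\Sigma^*\}$ form a partition of $\Sigma^*$ and are in bijection with $\mathbb{N}^\Sigma$ via $\pi_\Sigma$, the fibers $\pi_\Sigma^{-1}(\cdot)$ coincide exactly when the arguments coincide, so $\pi_\Sigma^{-1}(\pi_\Sigma(u))=\pi_\Sigma^{-1}(\pi_\Sigma(v))$ iff $\pi_\Sigma(u)=\pi_\Sigma(v)$. (Equivalently: $u\in\perm(u)$ always, and $u\in\perm(v)$ iff $\pi_\Sigma(u)=\pi_\Sigma(v)$ by the same lemma, which gives both implications directly.)

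Next I would lift this to languages. For the ``if'' direction, suppose $\pi_\Sigma(L_1)=\pi_\Sigma(L_2)$. Using $\perm(L)=\bigcup_{w\in L}\perm(w)=\bigcup_{w\in L}\pi_\Sigma^{-1}(\pi_\Sigma(w))=\pi_\Sigma^{-1}(\pi_\Sigma(L))$, the hypothesis immediately yields $\perm(L_1)=\pi_\Sigma^{-1}(\pi_\Sigma(L_1))=\pi_\Sigma^{-1}(\pi_\Sigma(L_2))=\perm(L_2)$. For the ``only if'' direction, suppose $\perm(L_1)=\perm(L_2)$. Applying $\pi_\Sigma$ to both sides gives $\pi_\Sigma(\perm(L_1))=\pi_\Sigma(\perm(L_2))$, so it suffices to observe $\pi_\Sigma(\perm(L))=\pi_\Sigma(L)$: the inclusion $\supseteq$ holds because $L\subseteq\perm(L)$ (extensivity, Lemma~\ref{hullOperatorLemma}), and $\subseteq$ holds because every $w'\in\perm(L)$ satisfies $w'\in\perm(w)$ for some $w\in L$, whence $\pi_\Sigma(w')=\pi_\Sigma(w)\in\pi_\Sigma(L)$ by the word-level statement.

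I do not anticipate a genuine obstacle here; the proposition is essentially a repackaging of Lemma~\ref{permcharac}, and the only thing to be careful about is keeping the set-valued equalities $\perm(L)=\pi_\Sigma^{-1}(\pi_\Sigma(L))$ and $\pi_\Sigma(\perm(L))=\pi_\Sigma(L)$ straight, since both the forward and backward language implications hinge on one of these two identities. If one wants maximal brevity, the cleanest route is: prove $\perm(L)=\pi_\Sigma^{-1}(\pi_\Sigma(L))$ once (it follows from Lemma~\ref{permcharac} by taking unions), then note that two sets of the form $\pi_\Sigma^{-1}(X)$ with $X\subseteq\mathbb{N}^\Sigma$ are equal iff the corresponding $X$ are equal (because $\pi_\Sigma$ is surjective onto $\mathbb{N}^\Sigma$, so $\pi_\Sigma(\pi_\Sigma^{-1}(X))=X$), and apply this with $X=\pi_\Sigma(L_1)$ and $X=\pi_\Sigma(L_2)$; the word-level claim is then the special case $L_i=\{u\},\{v\}$.
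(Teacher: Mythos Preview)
Your proof is correct. The paper does not actually spell out a proof of Proposition~\ref{prop-perm-Parikh}; it just writes ``Especially, we conclude:'' after Lemma~\ref{permcharac} and the semiring-morphism remark, treating the proposition as immediate. Your argument is exactly the intended unpacking: the identity $\perm(w)=\pi_\Sigma^{-1}(\pi_\Sigma(w))$ from Lemma~\ref{permcharac} extends by unions to $\perm(L)=\pi_\Sigma^{-1}(\pi_\Sigma(L))$, and surjectivity of $\pi_\Sigma$ onto $\mathbb{N}^\Sigma$ then gives the equivalence in both the word and language cases. Nothing is missing.
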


Due to Proposition~\ref{prop-perm-Parikh}, we can call
$u,v\in\Sigma^*$ (and also $L_1,L_2\subseteq \Sigma^*$)
\emph{per\-mu\-ta\-tion-equivalent} or 
\emph{Parikh-equivalent} if $\pi_\Sigma(u)=\pi_\Sigma(v)$
($\pi_\Sigma(L_1)=\pi_\Sigma(L_2)$, respectively).

The relation between (iterated) catenation and (iterated) shuffle can now be neatly expressed as follows.

\begin{theorem}\SV{(*)}\label{thm-semiring-morph}
$\perm:2^{\Sigma^*}\to 2^{\Sigma^*}$ is a semiring
morphism from the semiring
$(2^{\Sigma^*},\cup,\cdot,\emptyset,\{\emptyword\})$
to the semiring
$(2^{\Sigma^*},\cup,\shuffle,\emptyset,\{\emptyword\})$ that also respects the iterated catenation resp. iterated shuffle operation.
\end{theorem}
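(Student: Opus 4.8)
The plan is to verify the semiring morphism axioms one by one, using the characterization of $\perm$ via the Parikh mapping established in Proposition~\ref{prop-perm-Parikh} together with the fact (Lemma~\ref{permcharac}) that $\perm(w) = \pi_\Sigma^{-1}(\pi_\Sigma(w))$. The key observation is that $\perm$ factors through the Parikh mapping: $\perm(L) = \pi_\Sigma^{-1}(\pi_\Sigma(L))$ for every $L \subseteq \Sigma^*$, so it suffices to show that $\pi_\Sigma$ sends $(\cup,\cdot)$ to $(\cup,+)$ (on multisets/vectors in $\mathbb{N}^\Sigma$) and that $\pi_\Sigma^{-1}$ pulls $(\cup,+)$ back to $(\cup,\shuffle)$. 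Actually, the cleanest route is to use the Proposition stating that a Parikh mapping is a semiring morphism from $(2^{\Sigma^*},\cup,\shuffle,\emptyset,\{\emptyword\})$ to $(2^{\mathbb{N}^\Sigma},\cup,+,\emptyset,\{\vec 0\})$, and the (standard, easy) dual fact that it is also a semiring morphism from $(2^{\Sigma^*},\cup,\cdot,\emptyset,\{\emptyword\})$ to the same target. Then $\perm$ is the composition of the latter morphism with a one-sided inverse of the former.

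First I would record the elementary facts: $\perm(\emptyset) = \emptyset$ and $\perm(\{\emptyword\}) = \{\emptyword\}$, so units and zeroes are preserved. Next, additivity $\perm(L_1 \cup L_2) = \perm(L_1) \cup \perm(L_2)$ is immediate from the definition $\perm(L) = \bigcup_{w \in L}\perm(w)$. The substantive point is the multiplicative one: $\perm(L_1 \cdot L_2) = \perm(L_1) \shuffle \perm(L_2)$. For the inclusion $\subseteq$, take $w = uv$ with $u \in L_1$, $v \in L_2$, and $x \in \perm(w)$; then $\pi_\Sigma(x) = \pi_\Sigma(u) + \pi_\Sigma(v)$, and one exhibits $x$ as an element of $\perm(u) \shuffle \perm(v)$ by an induction on $|w|$ using the inductive description of shuffle from Remark~\ref{rem-shuffle-inductive} (peel off the first letter of $x$; it came from $u$ or from $v$). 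For the inclusion $\supseteq$, take $u' \in \perm(u)$, $v' \in \perm(v)$ and $x \in u' \shuffle v'$; then $\pi_\Sigma(x) = \pi_\Sigma(u') + \pi_\Sigma(v') = \pi_\Sigma(u) + \pi_\Sigma(v) = \pi_\Sigma(uv)$, so $x \in \perm(uv)$ by Lemma~\ref{permcharac}. Alternatively — and this is shorter — both inclusions follow at once by applying $\pi_\Sigma$ and invoking that $\pi_\Sigma$ is a semiring morphism in both the catenation and the shuffle sense, since $\perm(L) = \pi_\Sigma^{-1}(\pi_\Sigma(L))$ and $\pi_\Sigma(L_1 \shuffle L_2) = \pi_\Sigma(L_1) + \pi_\Sigma(L_2) = \pi_\Sigma(L_1 \cdot L_2)$ forces $\perm(L_1 \cdot L_2) = \perm(L_1 \shuffle L_2) = \perm(\perm(L_1) \shuffle \perm(L_2)) = \perm(L_1) \shuffle \perm(L_2)$, where the last equality uses that $\perm(L_1)\shuffle\perm(L_2)$ is already perm-closed.

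Finally, for the iterated operations I would show $\perm(L^*) = (\perm(L))\shufflestar$. Using the multiplicative identity just proved and additivity, $\perm(L^n) = (\perm(L))\shufflen$ follows by induction on $n$ (with the base case $\perm(L^0) = \perm(\{\emptyword\}) = \{\emptyword\} = (\perm(L))\shufflezero$), and then $\perm(L^*) = \perm\big(\bigcup_{n\geq 0} L^n\big) = \bigcup_{n \geq 0}\perm(L^n) = \bigcup_{n\geq 0}(\perm(L))\shufflen = (\perm(L))\shufflestar$, where the second equality is additivity extended to countable unions, which is again immediate from the definition of $\perm$ on languages.

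The main obstacle, such as it is, is the multiplicative step $\perm(L_1 \cdot L_2) = \perm(L_1) \shuffle \perm(L_2)$: one must be careful that shuffling permutations of the two factors does not produce anything outside the permutation class of a product word, and conversely that every rearrangement of a product word arises as such a shuffle. Both directions are clean once phrased through the Parikh mapping, so the real content is simply assembling the earlier propositions correctly; there is no genuinely hard estimate or construction here, only bookkeeping of which letter of a shuffled word originated in which argument.
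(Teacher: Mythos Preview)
Your proposal is correct and complete in substance. The route, however, differs from the paper's. The paper builds the multiplicative identity $\perm(L_1\cdot L_2)=\perm(L_1)\shuffle\perm(L_2)$ bottom-up from the inductive definition $\perm(au)=\{a\}\shuffle\perm(u)$ (Definition~\ref{def-perm}): it first proves $\perm(u\cdot v)=\perm(u)\shuffle\perm(v)$ for single words by induction on $|u|$ (Lemma~\ref{lem:12}), then derives $\perm(u\shuffle v)=\perm(u\cdot v)$ and lifts everything to languages (Lemmas~\ref{lem:11}, \ref{cor:2}, \ref{lem:14}, \ref{thm:1}); the iterated case is handled by two separate inductions showing $(\perm L)\shufflen=\perm(L\shufflen)$ and $\perm(L\shufflen)=\perm(L^n)$. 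Your argument instead factors $\perm$ through the Parikh mapping and leans on the (stated but unproved) proposition that $\pi_\Sigma$ is a semiring morphism with respect to shuffle; this is more conceptual and shorter, and your single induction $\perm(L^n)=(\perm L)\shufflen$ for the star is cleaner than the paper's two-step detour. The trade-off is that your ``shorter'' variant asserts without argument that $\perm(L_1)\shuffle\perm(L_2)$ is already perm-closed; this is precisely the nontrivial content (equivalently, that $\pi_\Sigma^{-1}(A+B)=\pi_\Sigma^{-1}(A)\shuffle\pi_\Sigma^{-1}(B)$), and while your direct ``peel off a letter'' induction does establish it, you should flag that this is where the work happens rather than presenting it as a side remark.
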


Clearly, $\perm$ cannot be an isomorphism, as the catenation semiring is not commutative, while the shuffle semiring is, see Proposition~\ref{prop-shuffle-semiring}. 

\LV{The proof of the previous theorem, broken into several statements that are also interesting in their own right, is presented in the following. Notice that in the terminology of \'Esik and Kuich~\cite{EsiKui2012},
Theorem~\ref{thm-semiring-morph} can also be stated as follows:
$\perm:2^{\Sigma^*}\to 2^{\Sigma^*}$ is a starsemiring
morphism from the starsemiring
$(2^{\Sigma^*},\cup,\cdot,{}^*,\emptyset,\{\emptyword\})$
to the starsemiring
$(2^{\Sigma^*},\cup,\shuffle,{}\shufflestar,\emptyset,\{\emptyword\})$.


\begin{lemma} $\forall u,v\in\Sigma^*$: 
$\perm(u \cdot v) = \perm(u) \shuffle \perm(v)$.
\label{lem:12}
\end{lemma}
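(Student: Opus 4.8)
The plan is to proceed by induction on $|u|$, using the inductive definition of $\perm$ from Definition~\ref{def-perm} together with the associativity of the shuffle operation (the second computation rule in Proposition~\ref{laws:1}).

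For the base case $u = \emptyword$, we have $\perm(\emptyword \cdot v) = \perm(v)$, while $\perm(\emptyword) \shuffle \perm(v) = \{\emptyword\} \shuffle \perm(v) = \perm(v)$, since $\{\emptyword\}$ is the neutral element for $\shuffle$ (see the proof of Proposition~\ref{prop-shuffle-semiring}); so both sides coincide.

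For the inductive step, write $u = a u'$ with $a \in \Sigma$ and $u' \in \Sigma^*$, and assume the claim holds for $u'$ paired with an arbitrary second argument. Then Definition~\ref{def-perm} gives $\perm(a u' \cdot v) = \perm\bigl(a (u' v)\bigr) = \{a\} \shuffle \perm(u' v)$; the induction hypothesis rewrites $\perm(u' v) = \perm(u') \shuffle \perm(v)$; associativity turns $\{a\} \shuffle \bigl(\perm(u') \shuffle \perm(v)\bigr)$ into $\bigl(\{a\} \shuffle \perm(u')\bigr) \shuffle \perm(v)$; and a final application of Definition~\ref{def-perm} identifies $\{a\} \shuffle \perm(u')$ with $\perm(a u') = \perm(u)$. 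This yields $\perm(u \cdot v) = \perm(u) \shuffle \perm(v)$ and closes the induction.

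There is no genuine obstacle here; the only points demanding care are bookkeeping ones — stating the induction hypothesis uniformly over the second argument $v$, and invoking associativity of $\shuffle$ at exactly the right step. One could alternatively argue via Parikh images, using $\pi_\Sigma(uv) = \pi_\Sigma(u) + \pi_\Sigma(v)$ and Proposition~\ref{prop-perm-Parikh}, but that route would still require separately verifying that $\perm(u) \shuffle \perm(v)$ is perm-closed, so the direct induction above is the cleaner choice.
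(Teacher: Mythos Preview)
Your proof is correct and follows essentially the same route as the paper: induction on $|u|$, unfolding Definition~\ref{def-perm} to peel off the leading letter, applying the induction hypothesis, and using associativity of $\shuffle$ to reassemble. The only cosmetic difference is that you start the base case at $u=\emptyword$ while the paper starts at $|u|=1$; your choice is arguably cleaner since it covers the full range of the quantifier directly.
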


\begin{proof}
We prove this lemma by induction on $|u|$. 

\noindent Induction Basis: $|u| = 1$. So, $u\in\Sigma$. By
Definition~\ref{def-perm}, 
$\perm(u \cdot v) = 
\{u
\} \shuffle \perm(v) = \perm(u) \shuffle \perm(v)$.

\noindent Induction Hypothesis: For $u\in\Sigma^n$, 
$\perm(u \cdot v) = \perm(u) \shuffle \perm(v)$.

\noindent Induction Step: Consider $|u| = n+1$. Let $u = x_1 x_2 \ldots x_{n+1}, \ x_i \in \Sigma^*$. We now claim that $\perm(x_1 x_2 \ldots x_{n+1} \cdot v) = \perm(x_1 x_2 \ldots x_{n+1}) \shuffle \perm(v)$.

\begin{align*}
\perm(x_1 x_2 \ldots x_{n+1} \cdot v) 	& =  \{x_1\} \shuffle \perm(x_2 \ldots x_{n+1} \cdot v) \ \text{(by Definition~\ref{def-perm})} \\
																			& =  \{x_1\} \shuffle \perm(x_2 \ldots x_{n+1}) \shuffle \perm(v) \ \text{(IH)} \\
																			& =  \perm(x_1 x_2 \ldots x_{n+1}) \shuffle \perm(v) \ \text{(by Definition~\ref{def-perm})}.			
\end{align*}

\noindent Therefore, $\perm(u \cdot v) = \perm(u) \shuffle \perm(v)$.
\end{proof}

\begin{lemma} $\forall u,v\in\Sigma^*$: 
$u \shuffle v \subseteq \perm(u \cdot v)$.
\label{lem:11}
\end{lemma}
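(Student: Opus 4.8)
\textbf{Proof plan for Lemma~\ref{lem:11}.} The statement to prove is $u \shuffle v \subseteq \perm(u \cdot v)$ for all $u, v \in \Sigma^*$. The plan is to proceed by induction on $|u| + |v|$, using the inductive characterization of shuffle from Remark~\ref{rem-shuffle-inductive} together with the defining property of $\perm$ and Lemma~\ref{lem:12} (which was already established just above).

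First I would dispatch the base cases: if $u = \emptyword$, then $u \shuffle v = \{v\}$ and $u \cdot v = v$, so $u \shuffle v = \{v\} \subseteq \perm(v) = \perm(u \cdot v)$ by extensivity of $\perm$ (Lemma~\ref{hullOperatorLemma}); symmetrically for $v = \emptyword$. For the inductive step, write $u = a u'$ and $v = b v'$ with $a, b \in \Sigma$ and $u', v' \in \Sigma^*$. By Remark~\ref{rem-shuffle-inductive}, $u \shuffle v = a(u' \shuffle bv') \cup b(au' \shuffle v')$. I would handle the two halves of this union separately. For the first half: by the induction hypothesis applied to the pair $(u', bv')$ (whose total length is one less), $u' \shuffle bv' \subseteq \perm(u' \cdot bv')$, hence $a(u' \shuffle bv') \subseteq a \cdot \perm(u' \cdot bv') = \{a\} \shuffle_{\text{prefix}} \perm(u'bv')$; but by Definition~\ref{def-perm}, $\perm(a \cdot u' b v') = \{a\} \shuffle \perm(u' b v') \supseteq a\,\perm(u'bv')$, and $a \cdot u'bv' = u \cdot v$, so $a(u' \shuffle bv') \subseteq \perm(u \cdot v)$. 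The key small observation here is that the catenation $\{a\} \cdot X$ is contained in $\{a\} \shuffle X$ for any language $X$, which is immediate from the definition of shuffle (place $a$ at the front). The second half, $b(au' \shuffle v') \subseteq \perm(u \cdot v)$, requires a touch more care since $u \cdot v = au'bv'$ does not literally begin with $b$: here I would use the induction hypothesis on $(au', v')$ to get $au' \shuffle v' \subseteq \perm(au' \cdot v')$, so $b(au' \shuffle v') \subseteq b \cdot \perm(au'v') \subseteq \{b\} \shuffle \perm(au'v') = \perm(b \cdot au'v') = \perm(bau'v')$, and then invoke Proposition~\ref{prop-perm-Parikh} (or directly Lemma~\ref{obs:1}) to note $\perm(bau'v') = \perm(au'bv') = \perm(u \cdot v)$, since the two words are permutation-equivalent. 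Combining the two halves gives $u \shuffle v \subseteq \perm(u \cdot v)$.

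An alternative, perhaps cleaner route that avoids the induction entirely: observe that by Lemma~\ref{lem:12}, $\perm(u \cdot v) = \perm(u) \shuffle \perm(v)$, and since $\perm$ is extensive we have $u \in \perm(u)$ and $v \in \perm(v)$ (reading $u, v$ as singleton languages $\{u\}, \{v\}$), so by monotonicity of shuffle, $u \shuffle v = \{u\} \shuffle \{v\} \subseteq \perm(u) \shuffle \perm(v) = \perm(u \cdot v)$. This is essentially a one-liner and I would probably present this as the main argument, mentioning the inductive proof only as a remark.

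The main obstacle, if one insists on the inductive approach, is the asymmetry noted above: the ``recursion'' of shuffle peels symbols off either argument, but catenation $u \cdot v$ is rigid, so one cannot match $b(au' \shuffle v')$ against a front-symbol decomposition of $\perm(u \cdot v)$ directly — one genuinely needs the fact that $\perm$ only depends on the Parikh image (Proposition~\ref{prop-perm-Parikh}) to reorder $bau'v'$ into $au'bv'$. With the Lemma~\ref{lem:12} shortcut, there is essentially no obstacle at all; the only thing to be careful about is the implicit identification of a word $w$ with the singleton language $\{w\}$ when speaking of $w \in \perm(w)$ and applying monotonicity.
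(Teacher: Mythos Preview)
Your proposal is correct, and the ``alternative, cleaner route'' you single out as your main argument is exactly the paper's proof: the paper observes $u \shuffle v \subseteq \perm(u) \shuffle \perm(v)$ by extensivity of $\perm$ (as a hull operator) and then applies Lemma~\ref{lem:12} to conclude. Your inductive argument is a valid but unnecessary detour; the paper does not take it.
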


\begin{proof}
By Definition~\ref{def:1}, $u \shuffle v = \{x_1y_1x_2y_2 \ldots x_n y_n \suchthat u = x_1 x_2 \ldots x_n, v = y_1 y_2 \ldots y_n, \linebreak[3] x_i, y_i \in \Sigma^*, 1 \leq i \leq n, \ n \geq 1 \}$. It is clear that $u \shuffle v \subseteq \perm(u) \shuffle \perm(v)$, as $\perm$ is a hull operator.
According to Lemma \ref{lem:12} $\perm(u) \shuffle \perm(v) = \perm(u \cdot v)$. Therefore, $u \shuffle v \subseteq \perm(u \cdot v)$. 
\end{proof}

\noindent
As a consequence of Lemma \ref{lem:11} and since $\perm$ is a hull operator, we obtain the following lemma.

\begin{lemma} $\forall u,v\in\Sigma^*$: 
$\perm(u \shuffle v) = \perm(u \cdot v)$.
\label{cor:2}
\end{lemma}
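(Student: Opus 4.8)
The statement to prove is Lemma~\ref{cor:2}: $\perm(u \shuffle v) = \perm(u \cdot v)$ for all $u,v \in \Sigma^*$.

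The plan is to prove the two inclusions separately, exploiting that $\perm$ is a hull operator (Lemma~\ref{hullOperatorLemma}) together with Lemma~\ref{lem:11} and Lemma~\ref{lem:12}. For the inclusion $\perm(u \shuffle v) \subseteq \perm(u \cdot v)$: by Lemma~\ref{lem:11} we have $u \shuffle v \subseteq \perm(u \cdot v)$. Applying the operator $\perm$ to both sides and using that $\perm$ is increasing gives $\perm(u \shuffle v) \subseteq \perm(\perm(u \cdot v))$, and idempotence of $\perm$ collapses the right-hand side to $\perm(u \cdot v)$. For the reverse inclusion $\perm(u \cdot v) \subseteq \perm(u \shuffle v)$: note that $u \cdot v \in u \shuffle v$ (take $n=1$, $x_1 = u$, $y_1 = v$ in Definition~\ref{def:1}, or just observe $u \shuffle v$ always contains the plain concatenation), hence $\{u \cdot v\} \subseteq u \shuffle v$; applying $\perm$ and using that it is increasing yields $\perm(u \cdot v) \subseteq \perm(u \shuffle v)$. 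Combining the two inclusions gives the claimed equality.

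There is essentially no obstacle here — the lemma is a short bookkeeping consequence of the two facts that $\perm$ is a hull operator and that $u \shuffle v$ sits between $\{uv\}$ and $\perm(uv)$. One could alternatively give a one-line semiring-style argument: applying the morphism property (Lemma~\ref{lem:12}) and extensivity, $\perm(u \shuffle v) \subseteq \perm(\perm(u) \shuffle \perm(v)) = \perm(\perm(uv)) = \perm(uv)$ for the forward direction. I would present the hull-operator version since it most directly reuses the already-established lemmas in sequence. Below I write out the short argument.

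\begin{proof}
We show both inclusions. Since $\perm$ is a hull operator (Lemma~\ref{hullOperatorLemma}), it is increasing and idempotent. By Lemma~\ref{lem:11}, $u \shuffle v \subseteq \perm(u \cdot v)$, so applying $\perm$ and using that it is increasing and idempotent, $\perm(u \shuffle v) \subseteq \perm(\perm(u \cdot v)) = \perm(u \cdot v)$. Conversely, taking $n = 1$, $x_1 = u$ and $y_1 = v$ in Definition~\ref{def:1} shows $u \cdot v \in u \shuffle v$, hence $\{u \cdot v\} \subseteq u \shuffle v$; applying the increasing operator $\perm$ yields $\perm(u \cdot v) \subseteq \perm(u \shuffle v)$. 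Together, $\perm(u \shuffle v) = \perm(u \cdot v)$.
\end{proof}
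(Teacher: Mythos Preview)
Your proof is correct and essentially identical to the paper's own argument: both directions use precisely the same ingredients (Lemma~\ref{lem:11} plus the hull-operator properties of $\perm$ for one inclusion, and the trivial containment $\{u\cdot v\}\subseteq u\shuffle v$ for the other). There is nothing to add.
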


\begin{proof}
As indicated, from $u \shuffle v \subseteq \perm(u \cdot v)$
we can conclude that
$\perm(u \shuffle v)\subseteq \perm(\perm(u \cdot v))=\perm(u \cdot v)$.
Conversely, as $\{u\cdot v\}\subseteq u\shuffle v$,
$\perm(u\cdot v)\subseteq \perm (u\shuffle v)$.
\end{proof}

\noindent
This shows immediately, together with Lemma \ref{lem:12}:

\begin{lemma} $\forall u,v\in\Sigma^*$: 
$\perm(u \shuffle v) = \perm(u) \shuffle \perm(v)$.
\label{lem:14}
\end{lemma}

\begin{lemma}
$\perm(L^{n+1}) = \perm(L^n \shuffle L)$.
\label{lem:10}
\end{lemma}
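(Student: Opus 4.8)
The plan is to prove $\perm(L^{n+1}) = \perm(L^n \shuffle L)$ by reducing it to the word-level identity already established in Lemma~\ref{cor:2} (namely $\perm(u \shuffle v) = \perm(u \cdot v)$) together with the fact that $\perm$ distributes over arbitrary unions. First I would unfold both sides as unions over the relevant word tuples: by definition of catenation of languages, $L^{n+1} = \bigcup \{u_1 u_2 \cdots u_{n+1} \suchthat u_i \in L\}$, so $\perm(L^{n+1}) = \bigcup \perm(u_1 \cdots u_n \cdot u_{n+1})$ over all such tuples, using that $\perm$ commutes with union (which follows immediately from its definition $\perm(L) = \bigcup_{w \in L}\perm(w)$). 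Similarly, $L^n \shuffle L = \bigcup \{u \shuffle v \suchthat u \in L^n, v \in L\}$, and since $L^n$ itself is the union of all $\{u_1 \cdots u_n\}$ with $u_i \in L$, we get $\perm(L^n \shuffle L) = \bigcup \perm((u_1 \cdots u_n) \shuffle u_{n+1})$ over the same index set of tuples $(u_1,\dots,u_{n+1}) \in L^{n+1}$.

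The key step is then purely pointwise: for each fixed tuple, set $u = u_1 \cdots u_n$ and $v = u_{n+1}$, and apply Lemma~\ref{cor:2} to obtain $\perm(u \shuffle v) = \perm(u \cdot v) = \perm(u_1 \cdots u_n \cdot u_{n+1})$. Since the two unions range over exactly the same set of tuples and their corresponding terms are equal, the two languages coincide. I would write this out as a short chain of equalities: $\perm(L^{n+1}) = \bigcup_{(u_1,\dots,u_{n+1}) \in L^{n+1}} \perm(u_1\cdots u_n \cdot u_{n+1}) = \bigcup_{(u_1,\dots,u_{n+1}) \in L^{n+1}} \perm((u_1\cdots u_n) \shuffle u_{n+1}) = \perm(L^n \shuffle L)$.

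One mild subtlety worth a remark rather than an obstacle: in the degenerate cases $n = 0$ or $L = \emptyset$ the statement still holds trivially — for $n=0$ both sides equal $\perm(L)$ (using $L^0 \shuffle L = \{\emptyword\} \shuffle L = L$ and $L^1 = L$), and if $L = \emptyset$ both sides are $\emptyset$ for $n \geq 1$ — so the union argument covers these uniformly once one agrees that an empty union is $\emptyset$. The only real thing to be careful about is making the "same index set" bookkeeping precise, i.e., that a generic element of $L^n$ is genuinely of the form $u_1 \cdots u_n$ with each $u_i \in L$ and that pairing this with $u_{n+1} \in L$ enumerates $L^{n+1}$; this is immediate from the definition of iterated catenation but should be stated so the reader sees the two unions are literally indexed identically. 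I do not anticipate any genuine obstacle here; the lemma is essentially a bookkeeping corollary of Lemma~\ref{cor:2} plus the union-commutation of $\perm$, and its role is clearly to serve as the inductive step toward an identity like $\perm(L^*) = (\perm(L))\shufflestar$ or $\perm(L^n) = (\perm(L))\shufflen$.
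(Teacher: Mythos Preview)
Your proof is correct. It differs mildly in organization from the paper's argument: the paper proves the two inclusions separately, obtaining $\perm(L^{n+1}) \subseteq \perm(L^n \shuffle L)$ from the trivial containment $L^{n+1} \subseteq L^n \shuffle L$ (together with monotonicity of $\perm$), and the reverse inclusion by picking $w \in L^n \shuffle L$, writing $w \in u \shuffle v$ with $u \in L^n$, $v \in L$, and invoking Lemma~\ref{lem:11} ($u \shuffle v \subseteq \perm(u \cdot v)$) to land $w$ in $\perm(L^{n+1})$. You instead unfold both sides as unions over the same tuple index set and apply the stronger equality of Lemma~\ref{cor:2} termwise, which yields both inclusions at once. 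Both routes rest on the same word-level fact; yours is a bit more symmetric, while the paper's uses the slightly weaker Lemma~\ref{lem:11} directly and avoids the explicit re-indexing bookkeeping.
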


\begin{proof}
The inclusion $\perm(L^{n+1}) \subseteq \perm(L^n \shuffle L)$ is true, since $L^{n+1} \subseteq L^n \shuffle L$. We now prove the other inclusion $\perm(L^{n+1}) \supseteq \perm(L^n \shuffle L)$. Let $w \in L^n \shuffle L$, then $\exists u \in L^n, v \in L \suchthat w \in u \shuffle v$. This implies that $\exists u \in L^n, v \in L \suchthat w \in \perm(u \cdot v)$ by Lemma \ref{lem:11}. Therefore, $\perm(L^{n+1}) = \perm(L^n \shuffle L)$. 
\end{proof}


\begin{lemma} 
Let $L,L_1,L_2\subseteq \Sigma^*$. Then
\begin{enumerate}
\item $\perm(L_1) \shuffle \perm(L_2) = \perm(L_1 \shuffle L_2) = \perm(L_1 \cdot L_2)$ and 
\item $(\perm (L))\shufflestar = \perm (L\shufflestar) = \perm (L^*)$.
\end{enumerate}
\label{thm:1}
\end{lemma}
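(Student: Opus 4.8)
The plan is to deduce Lemma~\ref{thm:1} from the pointwise results already assembled, lifting them to arbitrary languages by taking unions. For part~(1), I would start from Lemma~\ref{lem:14}, which gives $\perm(u\shuffle v)=\perm(u)\shuffle\perm(v)$ for all \emph{words} $u,v$, together with Lemma~\ref{cor:2}, which gives $\perm(u\shuffle v)=\perm(u\cdot v)$. Since $\perm$, $\shuffle$ and $\cdot$ all distribute over (arbitrary) unions --- $\perm$ by its very definition $\perm(L)=\bigcup_{w\in L}\perm(w)$, and $\shuffle$, $\cdot$ by the definition of the language operations (and Proposition~\ref{laws:1}(3)) --- I can write
\begin{align*}
\perm(L_1)\shuffle\perm(L_2) &= \Bigl(\bigcup_{u\in L_1}\perm(u)\Bigr)\shuffle\Bigl(\bigcup_{v\in L_2}\perm(v)\Bigr) = \bigcup_{\substack{u\in L_1\\ v\in L_2}}\bigl(\perm(u)\shuffle\perm(v)\bigr)\\
&= \bigcup_{\substack{u\in L_1\\ v\in L_2}}\perm(u\cdot v) = \perm(L_1\cdot L_2),
\end{align*}
and similarly $\bigcup_{u,v}\perm(u\shuffle v)=\perm(L_1\shuffle L_2)$ equals the same thing, using Lemma~\ref{lem:14} and Lemma~\ref{cor:2} termwise. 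That settles part~(1).

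For part~(2) I would unwind the iterated shuffle and iterated catenation via their defining unions $L\shufflestar=\bigcup_{n\ge 0}L\shufflen$ and $L^*=\bigcup_{n\ge 0}L^n$, and prove $\perm(L^n)=\perm(L\shufflen)=(\perm(L))\shufflen$ by induction on $n$, then take the union over $n$. The base case $n=0$ is immediate ($\perm(\{\emptyword\})=\{\emptyword\}$). For the step, part~(1) gives $\perm(L^{n+1})=\perm(L^n\cdot L)=\perm(L^n)\shuffle\perm(L)$, and by the induction hypothesis this equals $(\perm(L))\shufflen\shuffle\perm(L)=(\perm(L))\shufflenplusone$; the middle equality $\perm(L^{n+1})=\perm(L\shufflenplusone)$ follows the same way using $\perm(L\shufflenplusone)=\perm(L\shufflen\shuffle L)=\perm(L\shufflen)\shuffle\perm(L)$ together with Lemma~\ref{lem:10} (or directly from part~(1)). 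Taking the union over all $n\ge 0$ and using that $\perm$ commutes with unions then yields $(\perm(L))\shufflestar=\perm(L\shufflestar)=\perm(L^*)$.

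There is essentially no serious obstacle here: the lemma is a bookkeeping consolidation of Lemmas~\ref{lem:14}, \ref{cor:2} and~\ref{lem:10}. The one point that deserves care is the interchange of $\perm$ (and $\shuffle$) with \emph{infinite} unions in part~(2); this is harmless because all three operations are defined as unions of their pointwise/bounded versions, so the interchange is just associativity of union, but I would state it explicitly rather than sweep it under the rug. A second minor point is to make sure the induction in part~(2) carries \emph{both} chains of equalities ($\perm(L^n)=\perm(L\shufflen)$ and $=(\perm(L))\shufflen$) simultaneously, so that the inductive hypothesis is strong enough at each step.
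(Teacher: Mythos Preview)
Your proposal is correct and follows essentially the same route as the paper. The only organizational differences are cosmetic: for part~(1) the paper does an element-chasing argument (pick $w\in\perm(L_1)\shuffle\perm(L_2)$, unwind to $x\in L_1$, $y\in L_2$, apply Lemma~\ref{lem:14}) rather than your explicit distributivity-over-unions computation, and for part~(2) the paper runs the two inductions $(\perm(L))\shufflen=\perm(L\shufflen)$ and $\perm(L\shufflen)=\perm(L^n)$ separately (invoking Lemma~\ref{lem:10} in the second) whereas you fold them into one induction with the combined hypothesis---both choices are equivalent and use the same ingredients.
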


\begin{proof} We are going to prove both parts separately.
\begin{enumerate}
\item Lemma~\ref{cor:2} immediately shows that the second equality holds.
Consider $L_1 \subseteq \Sigma^*$, $L_2 \subseteq \Sigma^*$. Let $w \in \perm(L_1) \shuffle \perm(L_2)$. Let $x' \in \perm(L_1)$, $y' \in \perm(L_2)$ such that $w \in x' \shuffle y'$. Hence,  there exists some $x \in L_1$ such that $x' \in \perm(x)$ (also $x \in \perm(x')$). Likewise, there exists some $y \in L_2$ with $y' \in \perm(y)$.
Hence, $w \in \perm(x) \shuffle \perm(y)=\perm(x \shuffle y)$ by Lemma~\ref{lem:14}. Therefore, $w \in \perm(L_1 \shuffle L_2)\cap \perm(L_1\cdot L_2)$. Similarly, if $w \in \perm(L_1 \shuffle L_2)$ then $w \in \perm(L_1) \shuffle \perm(L_2)$. Hence $\perm(L_1) \shuffle \perm(L_2) = \perm(L_1 \shuffle L_2)$.




\item

We will prove $(\perm (L))\shufflen = \perm(L\shufflen)$ and $\perm(L\shufflen) = \perm(L^{n})$ by induction on $n$. 


Induction Basis: $(\perm(L))\shufflezero = \{\emptyword\} = \perm(\emptyword) = \perm (L\shufflezero)$. 

Induction Hypothesis: $(\perm(L))\shufflen = \perm(L\shufflen)$. 

Induction Step: We now claim that $(\perm(L))\shufflenplusone = \perm(L\shufflenplusone)$.
\begin{align*}
(\perm(L))\shufflenplusone 	& =  (\perm(L))\shufflen \shuffle \perm(L) \ \text{(By Definition \ref{def:2})} \\
							& =  \perm(L\shufflen) \shuffle \perm(L) \ \text{(By Induction Hypothesis)} \\
							& = \perm(L\shufflen \shuffle L)  \ \text{(By (1) in Lemma \ref{thm:1}})\\
                            & = \perm(L\shufflenplusone) \ \text{(By Definition \ref{def:2})}.			
\end{align*}

We now prove $\perm(L\shuffleiminusone) = \perm(L^{i-1})$ by induction on $i$.

Induction Basis: $\perm(L\shufflezero) = \perm(\emptyword) = \{\emptyword\} = \perm(\emptyword) = \perm (L^0)$. 

Induction Hypothesis: $\perm(L\shufflen) = \perm(L^n)$. 

Induction Step: We now claim that $\perm(L\shufflenplusone) = \perm(L^{n+1})$.
\begin{align*}
\perm(L\shufflenplusone) 	& = \perm(L\shufflen \shuffle L) \ \text{(By Definition \ref{def:2})} \\
							& =  \perm(L\shufflen) \shuffle \perm(L) \ \text{(By (1) in Lemma \ref{thm:1})} \\
							& =  \perm(L^n) \shuffle \perm(L) \ \text{(By Induction Hypothesis)} \\
							& =  \perm(L^n \shuffle L) \ \text{(By (1) in Lemma \ref{thm:1})} \\
							& =  \perm(L^{n+1}) \ \text{(By Lemma \ref{lem:10})}.			
\end{align*}

By the very definitions of iterated catenation (Kleene star) and iterated shuffle, the claim of the second part follows.
\end{enumerate}
\end{proof}

\subsection*{Proof of Theorem~\ref{thm-semiring-morph}}

\begin{proof}
Recall that $\perm$, in order to be a semiring morphism, should satisfy the following properties:
\begin{itemize}
\item $\forall L_1,L_2\subseteq \Sigma^*: \perm(L_1\cup L_2)=\perm(L_1)\cup \perm(L_2)$.\\
{\small This is an easy standard set-theoretic argument.}
\item  $\forall L_1,L_2\subseteq \Sigma^*: \perm(L_1\cdot L_2)=\perm(L_1)\shuffle \perm(L_2)$.\\
{\small This was shown in Lemma~\ref{thm:1}.}
\item $\perm(\emptyset)=\emptyset$ and  $\perm(\{\emptyword\})=\{\emptyword\}$ are trivial claims.
\end{itemize}
Furthermore, we claim an according preservation property for the iterated catenation resp. shuffle, which is explicitly stated and proven in
Lemma~\ref{thm:1}.
\end{proof}}

\begin{remark}\label{rem-algebra}
Let us make some further algebraic consequences explicit.
\begin{itemize}

\item $\perm(L)$ can be seen as the canonical representative of all languages $\tilde L$ that are permutation-equivalent to $L$.
\item 
As $\perm$ is a morphism, there is in fact a semiring isomorphism between the permutation-closed languages (over $\Sigma$)
and $\mathbb{N}^{|\Sigma|}$, which is basically a Parikh mapping in this case.
\item There is a further natural isomorphism between the monoid
 $(\mathbb{N}^{|\Sigma|},+,\vec{0})$ and the free commutative monoid generated by $\Sigma$. 
 \end{itemize}
 \end{remark}

\section{The Language Class $\mathscr{JFA}$}

By the definition of a jumping finite automaton $M$, it is clear that $w\in L_{\jfa}(M)$ implies that $\perm(w)\subseteq L_{\jfa}(M)$, i.\,e., $\perm(L_{\jfa}(M))\subseteq L_{\jfa}(M)$. Since $\perm$ is extensive as a hull operator (see Lemma~\ref{hullOperatorLemma}), we can conclude: 

\begin{corollary}\label{cor-closedsets}
If $L\in \mathscr{JFA}$, then $L$ is perm-closed.
\end{corollary}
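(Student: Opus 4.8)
The plan is to prove $\perm(L)=L$ for $L=L_{\jfa}(M)$, where $M=(Q,\Sigma,R,s,F)$ is a finite machine, i.e., all rules $py\rightarrow q$ satisfy $|y|\leq 1$. One inclusion, $L\subseteq\perm(L)$, is immediate because $\perm$ is extensive (Lemma~\ref{hullOperatorLemma}). The substance is the reverse inclusion $\perm(L)\subseteq L$, which I would derive from the sharper statement: if $w\in L_{\jfa}(M)$, then $\perm(w)\subseteq L_{\jfa}(M)$.

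To establish this, I would fix $w\in L_{\jfa}(M)$ together with an accepting jumping computation $u\,s\,v\curvearrowright^{*} f$ with $w=uv$ and $f\in F$. Reading off the rules applied along this computation yields a sequence $p_0 y_1\rightarrow p_1,\ldots,p_{k-1}y_k\rightarrow p_k$ in $R$ with $p_0=s$, $p_k=f$, and each $y_i\in\Sigma\cup\{\emptyword\}$. Since the computation starts with total remaining input $w$ and ends in the configuration $f\in Q$, which has no input symbols left, every symbol of $w$ is consumed exactly once; hence $\pi_\Sigma(y_1\cdots y_k)=\pi_\Sigma(w)$, so $y_1\cdots y_k$ is Parikh-equivalent to $w$ (Proposition~\ref{prop-perm-Parikh}).

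Now take any $w'\in\perm(w)$, so that $\pi_\Sigma(w')=\pi_\Sigma(w)=\pi_\Sigma(y_1\cdots y_k)$. I would construct an accepting computation of $M$ on $w'$ by applying the \emph{same} rule sequence $p_0y_1\rightarrow p_1,\ldots,p_{k-1}y_k\rightarrow p_k$, using the freedom of the jumping head to move it, before step $i$, directly in front of some occurrence of $y_i$ (when $y_i\neq\emptyword$). This step is legal because, after applying the first $i-1$ rules, the multiset of not-yet-consumed symbols of $w'$ has Parikh image $\pi_\Sigma(w')-\pi_\Sigma(y_1\cdots y_{i-1})$, which still contains at least one copy of $y_i$ since $\pi_\Sigma(y_1\cdots y_i)\leq\pi_\Sigma(w')$ coordinatewise; when $y_i=\emptyword$ no symbol is needed. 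After all $k$ rules the input of $w'$ is exhausted and $M$ sits in the final state $p_k=f$, so $w'\in L_{\jfa}(M)$, proving $\perm(w)\subseteq L_{\jfa}(M)$ and hence $\perm(L)\subseteq L$.

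The only point requiring care — and the main, rather minor, obstacle — is exactly this jump-feasibility invariant: one must verify at each step that the multiset of symbols of $w'$ still unread dominates the multiset of symbols still to be consumed by the remaining rules, so that the prescribed symbol is genuinely present for the head to jump to. Everything else is routine bookkeeping about configurations. Combining $\perm(L)\subseteq L$ with the extensivity of $\perm$ gives $\perm(L)=L$, i.e., $L$ is perm-closed.
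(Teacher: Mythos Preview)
Your proposal is correct and follows essentially the same approach as the paper. The paper's argument is simply the one-line observation that, by the definition of a jumping finite automaton, $w\in L_{\jfa}(M)$ implies $\perm(w)\subseteq L_{\jfa}(M)$, and then invokes the extensivity of $\perm$ (Lemma~\ref{hullOperatorLemma}); you have merely made explicit the bookkeeping behind that observation---extracting the rule sequence, noting that its labels are Parikh-equivalent to $w$ because each label has length at most~$1$, and replaying the same rules on any permutation $w'$ via suitable jumps.
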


This also follows from results of \cite{MedZem2012a}. In particular, we mention the following important  characterization theorem from~\cite{MedZem2014}, that we enrich by combining it with the well-known theorem of Parikh~\cite{Par66} using Proposition~\ref{prop-perm-Parikh}.
\begin{theorem}\label{permCharacterisationTheorem}
$\mathscr{JFA} = \perm(\mathscr{REG})= \perm(\mathscr{CFL})=\perm(\mathscr{PSL})$.
\label{thm-JFA=permREG}
\end{theorem}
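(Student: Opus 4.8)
The plan is to establish the chain of equalities $\mathscr{JFA} = \perm(\mathscr{REG}) = \perm(\mathscr{CFL}) = \perm(\mathscr{PSL})$ by working from the outside in, using the previously stated algebraic machinery and two known classical results, namely Parikh's theorem and the standard characterization of $\mathscr{JFA}$ from~\cite{MedZem2014}.

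First I would observe that $\mathscr{JFA} = \perm(\mathscr{REG})$ is essentially Example~\ref{ex:1}'s promise made precise: given a finite machine $M$, one shows $L_{\jfa}(M) = \perm(L_{\fa}(M))$. The inclusion $\perm(L_{\fa}(M)) \subseteq L_{\jfa}(M)$ follows from Corollary~\ref{cor-closedsets} (every JFA language is perm-closed) together with the fact that a jumping run that never jumps is an ordinary run, so $L_{\fa}(M) \subseteq L_{\jfa}(M)$ and hence $\perm(L_{\fa}(M)) \subseteq \perm(L_{\jfa}(M)) = L_{\jfa}(M)$. The converse $L_{\jfa}(M) \subseteq \perm(L_{\fa}(M))$ is the heart of the matter: if $w \in L_{\jfa}(M)$, the sequence of transition labels used along the accepting jumping computation, read in the order the rules are applied, spells a word $w'$ with $sw' \Rightarrow^* f$, so $w' \in L_{\fa}(M)$; since jumping only permutes which positions of $w$ are consumed, $w$ and $w'$ are Parikh-equivalent, i.e.\ $w \in \perm(w') \subseteq \perm(L_{\fa}(M))$. (This direction is exactly what is attributed to~\cite{MedZem2014}, so I would cite it, but I would also spell out the label-reordering argument since it is short.) Conversely, $\perm(\mathscr{REG}) \subseteq \mathscr{JFA}$ because $\perm(L_{\fa}(M)) = L_{\jfa}(M) \in \mathscr{JFA}$.

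Next, $\perm(\mathscr{REG}) = \perm(\mathscr{CFL})$. The inclusion $\subseteq$ is immediate from $\mathscr{REG} \subseteq \mathscr{CFL}$ and monotonicity of $\perm$ (Lemma~\ref{hullOperatorLemma}). For $\supseteq$, invoke Parikh's theorem~\cite{Par66}: every context-free language $L$ is Parikh-equivalent to some regular language $L'$, i.e.\ $\pi_\Sigma(L) = \pi_\Sigma(L')$; by Proposition~\ref{prop-perm-Parikh} this gives $\perm(L) = \perm(L')$, so $\perm(\mathscr{CFL}) \subseteq \perm(\mathscr{REG})$. The same Parikh-equivalence argument, combined with the observation that a language has the semilinear property exactly when its Parikh image is semilinear and that regular (even context-free) languages have semilinear Parikh image, yields $\perm(\mathscr{CFL}) = \perm(\mathscr{PSL})$: clearly $\mathscr{CFL}$ has semilinear Parikh images so $\perm(\mathscr{CFL}) \subseteq \perm(\mathscr{PSL})$, and conversely, any semilinear set over $\mathbb{N}^n$ is the Parikh image of a regular language (a finite union of linear sets $v + \sum k_i v_i$ is realized by a regular expression that reads a word of Parikh vector $v$ then loops through words of Parikh vectors $v_1,\dots,v_m$), so by Proposition~\ref{prop-perm-Parikh} every $L \in \mathscr{PSL}$ satisfies $\perm(L) = \perm(L')$ for some $L' \in \mathscr{REG}$, giving $\perm(\mathscr{PSL}) \subseteq \perm(\mathscr{REG})$.

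The main obstacle, such as it is, is not any single deep step but making sure the quoted black boxes are cited with the right strength: Parikh's theorem is used in the sharp form ``context-free $\Rightarrow$ Parikh-equivalent to regular,'' and the $\mathscr{JFA} = \perm(\mathscr{REG})$ step genuinely relies on~\cite{MedZem2014} (or a self-contained label-reordering proof as sketched). Once those are in hand, everything else is a routine diagram chase: all three outer sets are pinned down by the single invariant ``Parikh image is semilinear,'' so by Proposition~\ref{prop-perm-Parikh} they collapse to one perm-closed class, which Corollary~\ref{cor-closedsets} and the run-reordering argument identify with $\mathscr{JFA}$. I would present it as: (i) $\mathscr{JFA} = \perm(\mathscr{REG})$; (ii) $\perm(\mathscr{REG}) \subseteq \perm(\mathscr{CFL}) \subseteq \perm(\mathscr{PSL})$ by monotonicity and inclusion of classes; (iii) $\perm(\mathscr{PSL}) \subseteq \perm(\mathscr{REG})$ via the semilinear-set-to-regular-expression construction and Proposition~\ref{prop-perm-Parikh}, closing the cycle.
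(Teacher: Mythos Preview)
Your proposal is correct and follows the same route the paper indicates: the paper does not give a standalone proof but simply attributes $\mathscr{JFA}=\perm(\mathscr{REG})$ to \cite{MedZem2014} and states that the remaining equalities follow from Parikh's theorem via Proposition~\ref{prop-perm-Parikh}. You have unpacked exactly these ingredients --- the run-reordering argument for $L_{\jfa}(M)=\perm(L_{\fa}(M))$, Parikh's theorem for $\perm(\mathscr{CFL})\subseteq\perm(\mathscr{REG})$, and the semilinear-to-regular construction for $\perm(\mathscr{PSL})\subseteq\perm(\mathscr{REG})$ --- so your write-up is a faithful and more explicit version of the paper's one-line justification.
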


This theorem also generalizes the main result of \cite{LatRoz84}. It also indicates that certain properties of this language class have been previously derived under different names; for instance, Latteux~\cite{Lat79a} 
writes $\mathscr{JFA}$ as c(RAT), and he mentions yet another characterization for
this class in the literature, which is the class of all perm-closed languages whose Parikh image is semilinear; 
as explained in Section~\ref{sec:basicDefinitions}, the class of languages whose Parikh image is semilinear is also known as \emph{slip languages} \cite{GinSpa71}, or $\mathscr{PSL}$ for short.
Due to Lemma~\ref{permcharac}, there is a natural bijection between $\mathscr{JFA}$ and the recognizable subsets of the monoid $(\mathbb{N}^\Sigma, +, \vec{0})$. 

Let us mention one corollary that can be deduced from these connections; for proofs, we refer to \cite{EilSch69,GinSpa66}.

\begin{corollary}
$\mathscr{JFA}$ is closed under intersection and under complementation.
\end{corollary}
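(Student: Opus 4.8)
The statement asserts that $\mathscr{JFA}$ is closed under intersection and under complementation. The plan is to leverage Theorem~\ref{permCharacterisationTheorem}, which identifies $\mathscr{JFA}$ with $\perm(\mathscr{PSL})$, i.e., with the perm-closed languages whose Parikh image is semilinear, together with Lemma~\ref{permcharac} and Proposition~\ref{prop-perm-Parikh}, which give a bijection between such languages and semilinear subsets of $\mathbb{N}^{\Sigma}$. Under this correspondence, intersection of languages corresponds to intersection of Parikh images (this needs a small argument, see below), and complementation of a perm-closed language corresponds to complementation in $\mathbb{N}^{\Sigma}$. So the whole task reduces to the classical fact that the semilinear subsets of $\mathbb{N}^n$ form a Boolean algebra (closure under intersection and complement), which is exactly the content of \cite{GinSpa66} and may also be read off from the Presburger-definability characterization of semilinear sets.

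\textbf{Key steps.} First I would record the language-to-vector dictionary: by Lemma~\ref{permcharac} the map $\pi_\Sigma$ induces a bijection between perm-closed languages over $\Sigma$ and arbitrary subsets of $\mathbb{N}^{\Sigma}$, and under Theorem~\ref{permCharacterisationTheorem} a perm-closed $L$ lies in $\mathscr{JFA}$ if and only if $\pi_\Sigma(L)$ is semilinear. Second, I would check compatibility with the Boolean operations. For complementation: if $L$ is perm-closed then so is $\Sigma^* \setminus L$ (a word and all its permutations are either all in $L$ or all outside), and $\pi_\Sigma(\Sigma^* \setminus L) = \mathbb{N}^{\Sigma} \setminus \pi_\Sigma(L)$ precisely because $L$ is perm-closed and $\pi_\Sigma$ restricted to perm-classes is a bijection onto $\mathbb{N}^{\Sigma}$. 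For intersection: if $L_1, L_2$ are perm-closed then $L_1 \cap L_2$ is perm-closed, and $\pi_\Sigma(L_1 \cap L_2) = \pi_\Sigma(L_1) \cap \pi_\Sigma(L_2)$ — the inclusion $\subseteq$ is immediate, and for $\supseteq$ one uses that a vector in both images is realized by some word in $L_1$ and, after permuting, by the same word in $L_2$ since $L_2$ is perm-closed. Third, I would invoke the theorem of Ginsburg and Spanier \cite{GinSpa66} (cited already in the excerpt) that semilinear subsets of $\mathbb{N}^n$ are closed under intersection and under complement. Combining: given $L_1, L_2 \in \mathscr{JFA}$, both are perm-closed with semilinear Parikh images, hence $L_1 \cap L_2$ is perm-closed with Parikh image $\pi_\Sigma(L_1) \cap \pi_\Sigma(L_2)$ semilinear, so $L_1 \cap L_2 \in \mathscr{JFA}$; and $\Sigma^* \setminus L_1$ is perm-closed with Parikh image $\mathbb{N}^{\Sigma} \setminus \pi_\Sigma(L_1)$ semilinear, so it lies in $\mathscr{JFA}$ as well.

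\textbf{Main obstacle.} There is no deep obstacle here — this is essentially a bookkeeping argument once the characterization $\mathscr{JFA} = \perm(\mathscr{PSL})$ is in hand. The one point that deserves care is the identity $\pi_\Sigma(L_1 \cap L_2) = \pi_\Sigma(L_1) \cap \pi_\Sigma(L_2)$: it is \emph{false} for arbitrary languages and holds here only because both operands are perm-closed, so the proof must make that hypothesis explicit rather than treat the identity as automatic. Everything else is either set-theoretic triviality or a direct citation of the Ginsburg–Spanier Boolean-algebra result for semilinear sets; accordingly the corollary is stated with a pointer to \cite{EilSch69,GinSpa66} rather than a self-contained proof.
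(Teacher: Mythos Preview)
Your proposal is correct and follows essentially the same route as the paper: the paper does not give a self-contained proof at all but, after noting the natural bijection between $\mathscr{JFA}$ and the semilinear (recognizable) subsets of $(\mathbb{N}^\Sigma,+,\vec{0})$, simply refers the reader to \cite{EilSch69,GinSpa66} for the Boolean closure of semilinear sets. You have merely spelled out the bookkeeping (perm-closedness of complements and intersections, compatibility of $\pi_\Sigma$ with the Boolean operations on perm-closed languages) that the paper leaves implicit in that citation.
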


Notice that the proof given in
Theorem 17.4.6 in \cite{MedZem2014} is wrong, as the nondeterminism inherent in JFAs due to the jumping feature is neglected.
For instance, consider the deterministic\footnote{According to \cite{MedZem2014}, a JFA is \emph{deterministic} if each state has exactly one outgoing transition for each letter.} 
JFA $M=(\{r,s,t\},\{a,b\},R,\{s\},F)$ with rules according to Figure~\ref{fig:ComplCounterexample}.
\begin{figure} 
\begin{centering}
\includegraphics{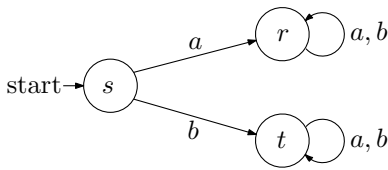}
\par\end{centering}
\caption{\label{fig:ComplCounterexample}An example JFA, final states not specified.}
\end{figure}
If $F=\{r\}$, then $M$ accepts all words  that contain at least one $a$. But, if $F=\{s,t\}$, then $M$ accepts $\varepsilon$ and all words  that contain at least one $b$.
This clearly shows that the standard state complementation technique does not work for JFAs.


\LV{As we will be concerned later also with descriptional and computational complexity issues, let us mention here that, a}ccording to the analysis 
indicated in~\cite{EspGKL2011}, Parikh's original proof would produce,
starting from a context-free grammar $G$ with $n$ variables, a regular expression $E$ of length $\landau\!\left(2^{2^{n^2}}\right)$ such that $\perm(L(G))=\perm(L(E))$, whose corresponding NFA is even bigger, while the construction of~\cite{EspGKL2011} results in an NFA $M$ with only $4^n$ states, satisfying $\perm(L(G))=\perm(L_{\fa}(M))$.\LV{
In the context of this theorem, it is also interesting to note that
recently there have been investigations on the descriptional complexity
of converting context-free grammars into Parikh-equivalent finite automata, see \cite{LavPigSek2013}.
The theorem also links JFAs to the literature on ``commutative context-free languages'', e.\,g.,   \cite{BeaBlaLat87,Kor98}.
}

Also, a sort of normal forms for language classes $\mathcal{L}$ such that $\perm(\mathcal{L})=\mathscr{JFA}$
have been studied, for instance, the class  $\mathcal{L}$ of letter-bounded languages can be characterized in various ways, see \cite{CadFinMcK2012,Choetal2012,IbaSek2012} for a kind of survey.
\par
Since finite languages are regular, we can conclude the following corollary of Theorem~\ref{permCharacterisationTheorem}.

\begin{corollary}\label{cor-fin-closedsets}
Let $L$ be 
a finite language. Then,  $L\in\mathscr{JFA}$ \LV{if and only if}\SV{iff} $L$ is perm-closed.
\end{corollary}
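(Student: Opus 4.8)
The plan is to derive this as a direct consequence of Theorem~\ref{permCharacterisationTheorem}, which states $\mathscr{JFA} = \perm(\mathscr{REG})$. The forward direction is already recorded in Corollary~\ref{cor-closedsets}: if $L \in \mathscr{JFA}$, then $L$ is perm-closed, and this applies in particular to finite $L$. So the only work is the converse: assuming $L$ is a finite perm-closed language, I must exhibit it as a member of $\mathscr{JFA}$.

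For the converse, the key observation is that a finite language is trivially regular, so $L \in \mathscr{REG}$. Since $\perm$ is a hull operator (Lemma~\ref{hullOperatorLemma}), being perm-closed means $\perm(L) = L$. Hence $L = \perm(L) \in \perm(\mathscr{REG}) = \mathscr{JFA}$ by Theorem~\ref{permCharacterisationTheorem}. That is the entire argument; the chain is $L$ finite $\Rightarrow$ $L$ regular, combined with $L = \perm(L)$, combined with the characterization.

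I expect there to be essentially no obstacle here — the statement is a corollary in the literal sense, and all the heavy lifting has been done in Theorem~\ref{permCharacterisationTheorem} and Lemma~\ref{hullOperatorLemma}. The only point worth stating carefully is that perm-closedness of $L$ is exactly the hypothesis needed to rewrite $L$ as $\perm(L)$, so that membership in $\perm(\mathscr{REG})$ follows; one should also note explicitly that finiteness is used only to guarantee $L \in \mathscr{REG}$, and nothing finer about the structure of $L$ is required. A one-line proof of the form ``By Theorem~\ref{permCharacterisationTheorem} and the fact that finite languages are regular, together with Corollary~\ref{cor-closedsets}'' suffices.
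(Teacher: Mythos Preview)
Your proposal is correct and matches the paper's own justification exactly: the paper simply remarks ``Since finite languages are regular, we can conclude the following corollary of Theorem~\ref{permCharacterisationTheorem},'' which is precisely the chain $L$ finite $\Rightarrow$ $L\in\mathscr{REG}$, $L$ perm-closed $\Rightarrow$ $L=\perm(L)\in\perm(\mathscr{REG})=\mathscr{JFA}$, together with Corollary~\ref{cor-closedsets} for the forward direction.
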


This also shows that all finite $\mathscr{JFA}$ languages are so-called commutative regular languages as studied by Ehrenfeucht, Haussler and Rozenberg in \cite{EhrHauRoz83}. We will come back to this issue later.

Next, we shall show that $\mathscr{JFA}$ coincides with the class of $\alpha$-\textsc{SHUF} expressions. To this end, we first observe that a regular expression $E$ can be easily turned into an $\alpha$-\textsc{SHUF} expression describing $\perm(L(E))$ by replacing catenations and Kleene stars with shuffles and iterated shuffles (this is a direct consequence of the fact that the $\perm$ operator is a semiring morphism as stated in Theorem~\ref{thm-semiring-morph}).

\begin{lemma}\label{lem-reg-alphashuf}
Let $R'$ be a regular expression. Let the $\alpha$-\textsc{SHUF} expression $R$ be obtained from $R'$ by consequently replacing all $\cdot$ by $\shuffle$, and all ${}^*$ by ${}\shufflestar$ in $R'$. Then, $\perm(L(R')) = L(R)$.
\label{lem:31}
\end{lemma}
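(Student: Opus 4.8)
The plan is to proceed by structural induction on the regular expression $R'$, mirroring its recursive construction against the recursive construction of the corresponding $\alpha$-\textsc{SHUF} expression $R$. The claim to prove is $\perm(L(R')) = L(R)$, where $R$ is obtained from $R'$ by the syntactic substitution $\cdot \mapsto \shuffle$ and ${}^* \mapsto {}\shufflestar$ (and leaving $+$, and the atoms $\emptyset$, $\emptyword$, $a \in \Sigma$, untouched). Since both regular expressions and $\alpha$-\textsc{SHUF} expressions are built from the same atoms using three operations, the substitution is a well-defined bijection on parse trees, so the induction is clean.

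First I would handle the base cases: if $R'$ is $\emptyset$, then $R = \emptyset$ and $\perm(\emptyset) = \emptyset = L(\emptyset)$; if $R' = \emptyword$, then $R = \emptyword$ and $\perm(\{\emptyword\}) = \{\emptyword\}$ by Definition~\ref{def-perm}; if $R' = a$ for $a \in \Sigma$, then $R = a$ and $\perm(\{a\}) = \{a\}$, again directly from Definition~\ref{def-perm}. For the inductive step, suppose the claim holds for regular expressions $R'_1, R'_2$ with corresponding $\alpha$-\textsc{SHUF} expressions $R_1, R_2$, so $\perm(L(R'_i)) = L(R_i)$ for $i = 1,2$. There are three cases. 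For union, $\perm(L(R'_1 + R'_2)) = \perm(L(R'_1) \cup L(R'_2)) = \perm(L(R'_1)) \cup \perm(L(R'_2)) = L(R_1) \cup L(R_2) = L(R_1 + R_2)$, using that $\perm$ distributes over union (an easy set-theoretic fact, also recorded as the first bullet in the proof of Theorem~\ref{thm-semiring-morph}). For catenation, $\perm(L(R'_1 \cdot R'_2)) = \perm(L(R'_1) \cdot L(R'_2)) = \perm(L(R'_1)) \shuffle \perm(L(R'_2)) = L(R_1) \shuffle L(R_2) = L(R_1 \shuffle R_2)$, where the crucial second equality is exactly part~(1) of Lemma~\ref{thm:1} (equivalently the fact from Theorem~\ref{thm-semiring-morph} that $\perm$ sends $\cdot$ to $\shuffle$). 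For Kleene star, $\perm(L((R'_1)^*)) = \perm(L(R'_1)^*) = (\perm(L(R'_1)))\shufflestar = (L(R_1))\shufflestar = L((R_1)\shufflestar)$, where the key middle equality is part~(2) of Lemma~\ref{thm:1} (that $\perm$ respects iterated catenation, turning ${}^*$ into ${}\shufflestar$).

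There is essentially no main obstacle here: all the genuine work has already been done in establishing that $\perm$ is a semiring morphism respecting iteration (Theorem~\ref{thm-semiring-morph} and the supporting Lemma~\ref{thm:1}). The only thing to be careful about is the bookkeeping that the syntactic substitution commutes with the parse-tree structure — i.e., that the $\alpha$-\textsc{SHUF} expression associated to $R'_1 \cdot R'_2$ really is $R_1 \shuffle R_2$ with $R_i$ the expression associated to $R'_i$, and similarly for $+$ and ${}^*$ — which is immediate from the definition of the substitution. Hence the proof is a short structural induction whose inductive step in each case is a single line invoking the appropriate clause of Lemma~\ref{thm:1} (or the trivial distributivity of $\perm$ over $\cup$).
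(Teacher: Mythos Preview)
Your proposal is correct and takes essentially the same approach as the paper: both arguments reduce the claim to the fact that $\perm$ is a (star-)semiring morphism (Theorem~\ref{thm-semiring-morph}, equivalently Lemma~\ref{thm:1}), with $\perm$ acting as the identity on the atoms. The paper phrases this as ``pushing $\perm$ through'' the language expression for $L(R')$, whereas you spell out the same thing as an explicit structural induction on $R'$; the content is identical.
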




\begin{proof}Let $R'$ be a regular expression. By definition, this means that $L(R') = K$, where $K$ is some expression over the languages $\emptyset$, $\{\emptyword\}$ and $\{a\}$, $a \in \Sigma$, using only union, catenation and Kleene-star. By Theorem~\ref{thm-semiring-morph}, $\perm(K)$ can be transformed into an equivalent expression $K'$ using only union, shuffle and iterated shuffle. Furthermore, in $K'$, the operation $\perm$ only applies to languages of the form $\emptyset$, $\{\emptyword\}$ and $\{a\}$, $a \in \Sigma$, which means that by simply removing all $\perm$ operators, we obtain an equivalent expression $K''$ of languages $\emptyset$, $\{\emptyword\}$ and $\{a\}$, $a \in \Sigma$, using only union, shuffle and iterated shuffle. This expression directly translates into the $\alpha$-\textsc{SHUF} expression $R$ with $L(R) = \perm(L(R'))$.
\end{proof}



\noindent
We are now ready to prove our characterization theorem for $\mathscr{JFA}$.

\begin{theorem}\label{thm-JFA=a-SHUF}
A language $L\subseteq\Sigma^*$ is in $\mathscr{JFA}$ if and only if there is some  $\alpha$-\textsc{SHUF} expression $R$ such that $L=L(R)$.
\end{theorem}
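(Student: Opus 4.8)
The plan is to establish the two inclusions separately, using the characterization $\mathscr{JFA} = \perm(\mathscr{REG})$ from Theorem~\ref{permCharacterisationTheorem} together with Lemma~\ref{lem-reg-alphashuf}.

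\emph{From $\alpha$-\textsc{SHUF} to $\mathscr{JFA}$.} Given an $\alpha$-\textsc{SHUF} expression $R$, I would proceed by structural induction on $R$ to show $L(R)\in\mathscr{JFA}$. For the atomic cases $\emptyset$, $\emptyword$ and $a\in\Sigma$, the languages $\emptyset$, $\{\emptyword\}$, $\{a\}$ are finite and perm-closed, hence in $\mathscr{JFA}$ by Corollary~\ref{cor-fin-closedsets} (or directly by small JFAs). For the inductive step, suppose $L(S_1), L(S_2)\in\mathscr{JFA}$; I must show closure of $\mathscr{JFA}$ under $\cup$, $\shuffle$ and $\shufflestar$. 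By Theorem~\ref{permCharacterisationTheorem}, $L(S_i) = \perm(K_i)$ for regular $K_i$. Then $L(S_1)\cup L(S_2) = \perm(K_1)\cup\perm(K_2) = \perm(K_1\cup K_2)$ (since $\perm$ is a morphism, Theorem~\ref{thm-semiring-morph}), and $K_1\cup K_2$ is regular; similarly $L(S_1)\shuffle L(S_2) = \perm(K_1)\shuffle\perm(K_2) = \perm(K_1\cdot K_2)$ by Lemma~\ref{thm:1}, and $K_1\cdot K_2$ is regular; and $L(S_1)\shufflestar = (\perm(K_1))\shufflestar = \perm(K_1^*)$ by part~2 of Lemma~\ref{thm:1}, with $K_1^*$ regular. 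In each case the result is $\perm$ of a regular language, hence in $\mathscr{JFA}$. This completes the induction.

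\emph{From $\mathscr{JFA}$ to $\alpha$-\textsc{SHUF}.} Conversely, let $L\in\mathscr{JFA}$. By Theorem~\ref{permCharacterisationTheorem}, $L = \perm(L(R'))$ for some regular expression $R'$. Apply Lemma~\ref{lem-reg-alphashuf}: replacing every $\cdot$ by $\shuffle$ and every ${}^*$ by ${}\shufflestar$ in $R'$ yields an $\alpha$-\textsc{SHUF} expression $R$ with $L(R) = \perm(L(R')) = L$. This direction is essentially immediate once Lemma~\ref{lem-reg-alphashuf} is in hand.

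\emph{Main obstacle.} The two directions are both short given the machinery already developed; the real content has been front-loaded into Theorem~\ref{thm-semiring-morph}, Lemma~\ref{thm:1} and Lemma~\ref{lem-reg-alphashuf}. The only point demanding a little care is making the closure argument in the first direction precise: one must ensure that the algebraic identities of Lemma~\ref{thm:1} are being applied to obtain a regular preimage under $\perm$, i.e.\ that $\mathscr{REG}$ (not merely $\mathscr{PSL}$) is closed under the catenation, union and Kleene star used, which is standard. An alternative, perhaps cleaner, route for the first direction is to define a translation $R\mapsto R'$ sending $\alpha$-\textsc{SHUF} operators back to regular-expression operators ($\shuffle\mapsto\cdot$, ${}\shufflestar\mapsto{}^*$) and then invoke Lemma~\ref{lem-reg-alphashuf} to get $L(R) = \perm(L(R'))\in\perm(\mathscr{REG}) = \mathscr{JFA}$, which avoids the induction altogether; I would likely present it this way for symmetry with the converse.
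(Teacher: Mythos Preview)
Your proposal is correct, and in fact your ``alternative, cleaner route'' for the $\alpha$-\textsc{SHUF} $\Rightarrow$ $\mathscr{JFA}$ direction---replacing $\shuffle$ by $\cdot$ and ${}\shufflestar$ by ${}^*$ to obtain a regular expression $R'$, then invoking Lemma~\ref{lem-reg-alphashuf} to get $L(R)=\perm(L(R'))\in\perm(\mathscr{REG})=\mathscr{JFA}$---is exactly the paper's argument, as is your treatment of the converse direction.

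Your first, induction-based argument for $\alpha$-\textsc{SHUF} $\Rightarrow$ $\mathscr{JFA}$ is also correct but slightly more laborious: it unwinds the closure of $\mathscr{JFA}$ under $\cup$, $\shuffle$, ${}\shufflestar$ explicitly via Lemma~\ref{thm:1}, whereas the paper (and your alternative) packages all of this into a single syntactic translation plus one application of Lemma~\ref{lem-reg-alphashuf}. The two approaches are equivalent in content; the translation version is just more economical and, as you note, pleasantly symmetric with the other direction.
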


\begin{proof}
If  $L\in\mathscr{JFA}$, then there exists a regular
language $L'$ such that $L=\perm(L')$ by Theorem~\ref{thm-JFA=permREG}. $L'$ can be described by some regular expression $R'$. By Lemma~\ref{lem:31},
we find an  $\alpha$-\textsc{SHUF} expression $R$ with $L=\perm(L(R'))=L(R)$.

Conversely, if $L$ is described by some $\alpha$-\textsc{SHUF} expression $R$, i.\,e., $L=L(R)$, then
construct the regular expression $R'$ by  consequently replacing all $\shuffle$ by $\cdot$ and
 all $\shufflestar$ by ${}^*$ in $R$. Clearly, we face the situation described in Lemma~\ref{lem:31}, so
 that we conclude that $\perm(L(R')) = L(R)=L$.
 As $L(R')$ is a regular language,  $\perm(L(R'))=L\in\mathscr{JFA}$ by Theorem~\ref{thm-JFA=permREG}.
\end{proof}

Since $\alpha$-\textsc{SHUF} languages are closed under iterated shuffle, we obtain the following corollary as a consequence of Theorem~\ref{thm-JFA=a-SHUF}, adding to the list of closure properties given in \cite{MedZem2012a}.
\begin{corollary}
$\mathscr{JFA}$ is closed under iterated shuffle. 
\label{cor:1}
\end{corollary}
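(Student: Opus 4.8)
The plan is to derive the corollary directly from the characterization Theorem~\ref{thm-JFA=a-SHUF}, which says that $\mathscr{JFA}$ is exactly the class of languages describable by $\alpha$-\textsc{SHUF} expressions. So it suffices to show that the class of $\alpha$-\textsc{SHUF} languages is closed under iterated shuffle, and then transport this back through the characterization.

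First I would take an arbitrary $L \in \mathscr{JFA}$. By Theorem~\ref{thm-JFA=a-SHUF}, there is an $\alpha$-\textsc{SHUF} expression $R$ with $L = L(R)$. Now consider the expression $R\shufflestar$. By Definition~\ref{def:10}, if $R$ is an $\alpha$-\textsc{SHUF} expression, then $R\shufflestar$ is a \textsc{SHUF} expression, and since forming ${}\shufflestar$ introduces no new atoms, all atoms of $R\shufflestar$ are still among $\emptyset$, $\emptyword$, and single symbols $a \in \Sigma$; hence $R\shufflestar$ is again an $\alpha$-\textsc{SHUF} expression. By the semantics of \textsc{SHUF} expressions, $L(R\shufflestar) = L(R)\shufflestar = L\shufflestar$. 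Applying the other direction of Theorem~\ref{thm-JFA=a-SHUF} to the $\alpha$-\textsc{SHUF} expression $R\shufflestar$, we conclude $L\shufflestar = L(R\shufflestar) \in \mathscr{JFA}$.

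There is essentially no obstacle here: the entire content has already been established in Theorem~\ref{thm-JFA=a-SHUF}, and the only thing to observe is the syntactic closure of the $\alpha$-\textsc{SHUF} format under the ${}\shufflestar$ constructor, which is immediate from Definition~\ref{def:10} together with the remark that iterated shuffle does not enlarge the atom set. The corollary is therefore a one-line consequence once the characterization theorem is in place; the real work was done in proving that theorem (and, underneath it, in Theorem~\ref{thm-JFA=permREG} and Lemma~\ref{lem:31}).

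In writing this up I would keep it to a couple of sentences: take $L\in\mathscr{JFA}$, pick an $\alpha$-\textsc{SHUF} expression $R$ for $L$ via Theorem~\ref{thm-JFA=a-SHUF}, observe $R\shufflestar$ is again an $\alpha$-\textsc{SHUF} expression with $L(R\shufflestar)=L\shufflestar$, and invoke Theorem~\ref{thm-JFA=a-SHUF} once more in the reverse direction to place $L\shufflestar$ back in $\mathscr{JFA}$. No case analysis, no induction, no calculation is needed.
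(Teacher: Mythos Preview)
Your proposal is correct and matches the paper's approach exactly: the paper states the corollary as an immediate consequence of Theorem~\ref{thm-JFA=a-SHUF}, noting only that $\alpha$-\textsc{SHUF} languages are (trivially, by the syntax) closed under iterated shuffle. Your write-up merely unpacks that one sentence.
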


We like to point out again the connections to regular expressions over 
commutative monoids as studied in \cite{EilSch69}.

Let us finally mention a second characterization of the finite perm-closed sets in terms of $\alpha$-\textsc{SHUF} expressions (recall that Corollary~\ref{cor-fin-closedsets} states the first such characterization).

\begin{proposition}\label{prop-fin-closedsets}
Let $L$ be a language. Then,  $L$ is finite and perm-closed if and only if there is an $\alpha$-\textsc{SHUF} expression $R$, with $L=L(R)$, that does not contain the iterated shuffle operator.
\end{proposition}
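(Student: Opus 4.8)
The plan is to prove both directions of the equivalence by structural induction on $\alpha$-\textsc{SHUF} expressions, exploiting the fact that the only source of infiniteness in the language of an $\alpha$-\textsc{SHUF} expression is the iterated shuffle operator.

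For the ``if'' direction, suppose $R$ is an $\alpha$-\textsc{SHUF} expression not containing $\shufflestar$. Then $R$ is built from the atoms $\emptyset$, $\emptyword$, and single symbols $a \in \Sigma$ using only $+$ and $\shuffle$. I would argue by induction on the structure of $R$ that $L(R)$ is finite (the atoms denote languages of size at most $1$, and both $L_1 \cup L_2$ and $L_1 \shuffle L_2$ are finite whenever $L_1, L_2$ are finite, the latter because every word in $u \shuffle v$ has length $|u| + |v|$ and there are only finitely many such words). That $L(R)$ is perm-closed follows from Corollary~\ref{cor-fin-closedsets} together with Corollary~\ref{cor-closedsets}: since $L(R)$ is a finite $\alpha$-\textsc{SHUF} language, it lies in $\mathscr{JFA}$ by Theorem~\ref{thm-JFA=a-SHUF}, hence is perm-closed by Corollary~\ref{cor-closedsets}. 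Alternatively, and more directly, one can observe that $\{\emptyset\}$, $\{\emptyword\}$ and $\{a\}$ are perm-closed and that perm-closedness is preserved under $\cup$ and $\shuffle$ (the latter by Lemma~\ref{lem:14} applied pointwise, i.e.\ $\perm(L_1 \shuffle L_2) = \perm(L_1) \shuffle \perm(L_2) = L_1 \shuffle L_2$ when $L_1, L_2$ are already perm-closed).

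For the ``only if'' direction, suppose $L$ is finite and perm-closed. By Corollary~\ref{cor-fin-closedsets}, $L \in \mathscr{JFA}$, so by Theorem~\ref{thm-JFA=a-SHUF} there is some $\alpha$-\textsc{SHUF} expression describing $L$; but a priori this expression might use $\shufflestar$, so the task is to produce one that does not. The clean way is to construct it directly: since $L$ is finite, write $L = \{w_1, \ldots, w_k\}$ (with $L = \emptyset$ handled by the atom $\emptyset$, and $\emptyword \in L$ contributing the atom $\emptyword$). For each nonempty $w_i = a_{i,1} a_{i,2} \cdots a_{i,n_i}$, since $L$ is perm-closed we have $\perm(w_i) \subseteq L$, and I claim $\perm(w_i) = L(a_{i,1} \shuffle a_{i,2} \shuffle \cdots \shuffle a_{i,n_i})$. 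This follows from Lemma~\ref{lem:14} (or directly from Definition~\ref{def-perm}) by an easy induction: $\{a_{i,1}\} \shuffle \cdots \shuffle \{a_{i,n_i}\} = \perm(a_{i,1} \cdots a_{i,n_i})$. Then, because $L$ is perm-closed, $L = \bigcup_{i=1}^{k} \perm(w_i)$, so the $\alpha$-\textsc{SHUF} expression $R = (a_{1,1} \shuffle \cdots \shuffle a_{1,n_1}) + \cdots + (a_{k,1} \shuffle \cdots \shuffle a_{k,n_k})$ (with the obvious adjustments for $\emptyword$ and $\emptyset$) satisfies $L(R) = L$ and uses no iterated shuffle.

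I do not expect a genuine obstacle here; the statement is essentially a bookkeeping consequence of material already developed. The only point requiring a little care is making sure the ``only if'' construction is self-contained rather than circular: one should give the explicit finite-union-of-permutation-shuffles expression directly (using perm-closedness of $L$) rather than trying to rewrite an arbitrary $\alpha$-\textsc{SHUF} expression for $L$ into iterated-shuffle-free form, since the latter route would need an argument that $\shufflestar$ can always be eliminated for finite languages, which is exactly what we are proving. A minor drafting decision is whether to invoke Corollary~\ref{cor-fin-closedsets} for the ``if'' direction's perm-closedness claim or to prove preservation of perm-closedness under $\cup$ and $\shuffle$ from scratch; the latter is short and keeps the proof elementary, so I would include it.
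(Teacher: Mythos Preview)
Your proposal is correct. The ``if'' direction is handled exactly as in the paper (perm-closedness via Theorem~\ref{thm-JFA=a-SHUF} and Corollary~\ref{cor-closedsets}, finiteness by the obvious structural induction that the paper simply calls ``rather straightforward'').

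For the ``only if'' direction your route and the paper's produce the very same expression but justify it differently. The paper first writes $L$ as a star-free \emph{regular} expression $R_L$ (a finite union of words) and then applies Lemma~\ref{lem:31} to replace every catenation by a shuffle, yielding an $\alpha$-\textsc{SHUF} expression $R$ with $L(R)=\perm(L(R_L))=\perm(L)=L$. You instead build $R=\sum_i (a_{i,1}\shuffle\cdots\shuffle a_{i,n_i})$ directly and argue from Definition~\ref{def-perm} that each summand equals $\perm(w_i)$. Your version is more self-contained and avoids the detour through regular expressions; the paper's version is shorter on the page because it reuses Lemma~\ref{lem:31}. Either way the construction is identical, so this is a presentational rather than a substantive difference.
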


\begin{proof}Let $L$ be 
a finite language with $L=\perm(L)$. Clearly, there is a regular expression $R_L$, with $L(R_L)=L$, that uses only 
the catenation and union operations.
As  $L$ is perm-closed, the $\alpha$-\textsc{SHUF} expression $R$ obtained from $R_L$ by replacing all
catenation by shuffle operators satisfies
$L(R)=\perm(L(R_L))=L$ by Lemma~\ref{lem:31} and
does not contain the iterated shuffle operator.
Conversely, let  $R$ be an $\alpha$-\textsc{SHUF} expression that does not contain the iterated shuffle operator. By combining
Theorem~\ref{thm-JFA=a-SHUF} with Corollary~\ref{cor-closedsets}, we know that $L(R)$ is perm-closed. 
It is rather straightforward that $L(R)$ is also finite.
\end{proof}

We conclude this section with an example.

\begin{example}\label{ex:2}
Let $M$ be the finite machine presented in Figure~\ref{fig:JFAAlphaShuffExample}. In the standard way, we can turn $M$ into the regular expression
\begin{align*}
E = \:&(( a b^* a b )^* ( (a b^* a a) + b ) ( a b^* a a )^* ( (a b^* a b) + b ) )^*\\
&( a b^* a b )^* ( (a b^* a a) + b ) ( a b^* a a )^* 
\end{align*}
with $L_{\fa}(M) = L(E)$. By Lemma~\ref{lem-reg-alphashuf}, $L_{\jfa}(M) = L(E')$, where
\begin{align*}
E' = \:&(( a \shuffle b\shufflestar \shuffle a \shuffle b )\shufflestar \shuffle ( (a \shuffle b\shufflestar \shuffle a \shuffle a) + b )\\
&\shuffle ( a \shuffle b\shufflestar \shuffle a \shuffle a )\shufflestar \shuffle ( (a \shuffle b\shufflestar \shuffle a \shuffle b) + b ) )\shufflestar\\
&\shuffle ( a \shuffle b\shufflestar \shuffle a \shuffle b )\shufflestar \shuffle ( (a \shuffle b\shufflestar \shuffle a \shuffle a) + b )\\ 
&\shuffle ( a \shuffle b\shufflestar \shuffle a \shuffle a )\shufflestar\,.
\end{align*}
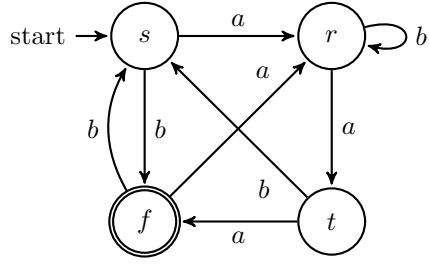
\begin{figure}
\begin{center}
\begin{tikzpicture}[->,>=stealth',shorten >=1pt,auto,node distance=7em, thick]
  \tikzstyle{every state}=[fill=none,draw=black,text=black]
  \node[initial,state] (s)   [fill= none, draw=black,text=black]    {$s$};
  \node[state,accepting] (f) [below of=s]       					{$f$};
  \node[state]         (r)   [right of=s]                           {$r$}; 
  \node[state]         (t)   [right of=f]                           {$t$};
  \path (s) edge node {$a$} (r)
   		(s) edge node {$b$} (f)
        (r) edge node {$a$} (t)
        (r) edge [loop right] node {$b$} (r)
		(t) edge node {$a$} (f) 
        (t) edge [pos=0.2] node {$b$} (s)
        (f) edge [pos=0.8] node {$a$} (r)
        (f) edge [bend left] node {$b$} (s);						
\end{tikzpicture}
\end{center}
\caption{The finite machine of Example~\ref{ex:2}.}
\label{fig:JFAAlphaShuffExample}
\end{figure}
\end {example}

\section{The Language Classes $\mathscr{GJFA}$ and $\mathscr{SHUF}$}


In the last section, we saw that JFA and $\alpha$-\textsc{SHUF} expressions correspond to each other in a very similar way as classical regular expressions correspond to finite automata. More precisely, in the translation between $\alpha$-\textsc{SHUF} expressions and JFA, the atoms of the $\alpha$-\textsc{SHUF} expression will become the labels of the JFA and vice versa. \par
GJFA differ from JFA only in that the labels can be arbitrary words instead of symbols and, similarly, \textsc{SHUF} expressions differ from $\alpha$-\textsc{SHUF} expressions only in that the atoms can be arbitrary words. This suggests that a similar translation between GJFAs and \textsc{SHUF} expressions exists and, thus, these devices describe the same class of languages. Unfortunately, this is not the case, which can be demonstrated with a simple example: let $M = (\{s\}, \{a, b, c, d\}, \{s ab \to s, s cd \to s\}, s, \{s\})$ be a GJFA, which naturally translates into the \textsc{SHUF} expression $E = (ab \shuffle cd)\shufflestar$. It can be easily verified that every word that is accepted by $M$ can also be generated by $E$, but, as $acbd \in (L(E)\setminus L_{\jfa}(M))$, we have $L_{\jfa}(M) \subsetneq L(E)$. \par
In the following, we shall see that not only this naive translation between GJFA and \textsc{SHUF} expressions fails, but the language classes $\mathscr{GJFA}$ and $\mathscr{SHUF}$ are incomparable.

\begin{lemma}\label{GJFAnotSHUF}
Let $M = (\{s\}, \{a, b, c, d\}, \{s ab \to s, s cd \to s\}, s, \{s\})$. Then $L(M)$ is not a \textsc{SHUF} language.
\end{lemma}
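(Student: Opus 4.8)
The plan is to argue by contradiction: assume $L(M) = L(S)$ for some \textsc{SHUF} expression $S$. Here $L(M) = L_{\jfa}(M)$ consists of all words obtainable by interleaving copies of $ab$ and $cd$ in a ``jumping'' fashion; concretely, $w \in L(M)$ iff $w$ can be written so that its letters split into matched $ab$-pairs and $cd$-pairs where in each pair the first letter occurs (somewhere) before the second. The key structural fact I would isolate first is a pumping-type or counting obstruction that $L(M)$ violates but every \textsc{SHUF} language must satisfy. The natural candidate invariant: in every $w \in L(M)$ we have $|w|_a = |w|_b$ and $|w|_c = |w|_d$, and moreover $w$ contains equally many $a$'s and $b$'s ``properly nested enough'' — but the cleaner handle is that $L(M)$ is \emph{not} closed under permutation (e.g.\ $abcd \in L(M)$ but $acbd$, while in $L(M)$, and $bacd \notin L(M)$), whereas more importantly $L(M)$ fails a boundedness/linear-growth property that iterated shuffle forces.

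**The core combinatorial lemma.**
The main obstacle — and the heart of the proof — is extracting from an arbitrary \textsc{SHUF} expression $S$ with $L(S) = L(M)$ a contradiction. I would proceed by structural induction on $S$ to show that no \textsc{SHUF} expression can have language exactly $L(M)$. The difficulty is that \textsc{SHUF} expressions can use arbitrary word atoms and arbitrary nesting of $\shuffle$, $+$, and iterated shuffle, so a naive induction does not obviously close. The workable approach is a ``large word'' argument: consider words $w_n = (ab)^n \shuffle$-interleaved with $(cd)^n$ in a highly entangled way, or better, a specific family like $w_n = a^n c^n b^n d^n$ — one checks $a^n c^n b^n d^n \in L(M)$ (match the $i$-th $a$ with the $i$-th $b$, the $i$-th $c$ with the $i$-th $d$; all firsts precede their seconds). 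Now suppose $S$ generates $L(M)$. Since $L(M)$ is infinite, $S$ must contain an iterated-shuffle subexpression $T\shufflestar$ that is ``used unboundedly''; pick a word $w \in L(M)$ long enough that its derivation in $S$ pumps $T$ at least twice, say $w \in \cdots \shuffle T \shuffle T \shuffle \cdots$ with a nonempty $t \in L(T)$ inserted. The contradiction comes from choosing $w$ so that any nonempty factor one could shuffle in, when duplicated, produces a word outside $L(M)$ — because $L(M)$ requires the $a/b$ count to match \emph{and} the $c/d$ count to match \emph{and} a precedence structure, and a single fixed word $t$ shuffled in twice cannot respect all of these simultaneously for the whole infinite family.

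**Pinning down the pumping contradiction.**
More precisely, here is the step I expect to need the most care. Fix a \textsc{SHUF} expression $S$ with $L(S)=L(M)$ and let $k$ be larger than the length of every atom appearing in $S$ and larger than the nesting depth. Take $w = a^N c^N b^N d^N \in L(M)$ for $N$ huge. In any parse of $w$ by $S$, because $S$ is built from short atoms via $+$, $\shuffle$, $\shufflestar$, some iterated-shuffle node $T\shufflestar$ contributes a factor $t$ with $1 \le |t|$, and by the semantics of iterated shuffle, $t \shuffle t \shuffle (\text{rest})$ also lies in $L(S) = L(M)$; call the resulting word $w'$. I would then show $w' \notin L(M)$ by a letter-count or structural argument: duplicating $t$ either unbalances $|w'|_a$ versus $|w'|_b$ (if $|t|_a \ne |t|_b$) — contradiction, since every word of $L(M)$ has equal $a$- and $b$-counts — or, if $|t|_a = |t|_b$ and $|t|_c = |t|_d$, then $t$ itself, being a subword-interleaving fragment, when doubled destroys the precedence structure of $L(M)$ (for instance it creates a prefix of $w'$ in which some matched $b$ must precede its $a$, or the $c$/$d$ blocks get split so no valid pairing exists). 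Establishing this last ``structural'' case rigorously — that $L(M)$ genuinely differs from its ``commutative relaxation'' and from $(ab)\shufflestar \shuffle (cd)\shufflestar$ — is the real content; I would likely handle it by exhibiting, for the specific pumped words, that every candidate pairing of letters into $ab$- and $cd$-matches fails the ``first before second'' condition, using that in $w = a^N c^N b^N d^N$ the $b$-block lies entirely after the $c$-block, so any $cd$-pair drawn from across a duplicated fragment is forced into the wrong order. Once that is in hand, $w' \notin L(M) = L(S)$ contradicts $w' \in L(S)$, completing the argument.
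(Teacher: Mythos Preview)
Your proposal rests on a mischaracterization of $L(M)=L_{\jfa}(M)$, and this breaks the argument before the pumping step is reached. A GJFA does not match scattered $ab$- and $cd$-pairs: in a configuration $usv$, applying the rule $s\,ab\to s$ requires $ab$ to be a \emph{prefix of $v$}, i.e., a contiguous factor of the current word at the head position; after consuming it the head jumps. Consequently $w\in L(M)$ if and only if $w$ can be reduced to $\varepsilon$ by repeatedly deleting a contiguous factor $ab$ or $cd$. With this in hand, both of your key examples fail: $acbd\notin L(M)$ (its length-$2$ factors are $ac,cb,bd$, none of which is deletable), and $a^Nc^Nb^Nd^N\notin L(M)$ for $N\ge 1$ (the only length-$2$ factors are $aa,ac,cc,cb,bb,bd,dd$). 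So the long witness word on which you intend to pump is not in the language, and the ``precedence structure'' you invoke is not what $L(M)$ actually enforces.

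Even if you replace the witness by a genuine long word of $L(M)$ (say $(ab)^N(cd)^N$ or nested words like $c^Nab\,d\cdots$), the final case of your pumping argument---where the pumped piece $t$ satisfies $|t|_a=|t|_b$ and $|t|_c=|t|_d$---remains unresolved, and you yourself flag it as ``the real content''. For instance $t=ab$ or $t=cd$ can be shuffled in without leaving $L(M)$, so a bare counting argument will not suffice; you would need a finer structural invariant. The paper avoids all of this with a much shorter observation: no word of $L(M)$ contains $ad$ as a contiguous factor (the $a$ can only be removed with a following $b$, the $d$ only with a preceding $c$, and neither can be removed first), whereas any \textsc{SHUF} expression whose language has unbounded $a$-count and unbounded $d$-count must---via either an iterated-shuffle subexpression whose body produces a word containing both letters, or a shuffle of two subexpressions contributing an $a$ and a $d$ respectively---generate some word with factor $ad$. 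That single forbidden-factor argument replaces the entire pumping machinery.
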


\begin{proof}
For the sake of contradiction, let $E$ be a \textsc{SHUF} expression with $L(E)=L(M)$. As the number of occurrences of both $a$ and $d$ grows infinitely in words from $L(M)$, one of the two cases must hold: 
\begin{itemize}
\item $E$ contains a subexpression $(R)\shufflestar$ such that there exists a $w \in L(R)$ with $|w|_{a} \geq 1$ and $|w|_{d} \geq 1$.
\item $E$ contains a subexpression $R_1 \shuffle R_2$ such that there exists a $w \in L(R_1)$ with $|w|_{a} \geq 1$ and a $w' \in L(R_2)$ with $|w|_{d} \geq 1$. 
\end{itemize}
Both cases imply that $L(E)$ contains a word with factor $ad$. This is a contradiction, since such words are not in $L_{\jfa}(M)$.
\end{proof}


\begin{lemma}\label{SHUFnotGJFA}
Let $L=L((ac \shuffle bd)\shufflestar)$. $L$ is not accepted by any GJFA.
\end{lemma}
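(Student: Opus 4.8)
The plan is to show that $L = L((ac \shuffle bd)\shufflestar)$ cannot be accepted by any GJFA by analyzing how a GJFA would have to process certain carefully chosen words in $L$. First, I would record the basic structural facts about $L$: every $w \in L$ satisfies $|w|_a = |w|_b = |w|_c = |w|_d$, and moreover $w$ decomposes into a shuffle of copies of $ac$ and copies of $bd$, so in particular the $a$'s and $c$'s "pair up" in a well-nested or at least order-respecting way, and similarly for $b$'s and $d$'s. The key point is that $L$ is \emph{not} perm-closed — e.g.\ $acbd \in L$ but $cadb \notin L$ — so by Corollary~\ref{cor-closedsets} $L \notin \mathscr{JFA}$, but that only rules out JFAs with single-letter labels, not general GJFAs, so genuine work is needed.

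The main idea I would pursue is a pumping-style argument exploiting the jumping behaviour. Suppose $M = (Q,\Sigma,R,s,F)$ is a GJFA with $L_{\jfa}(M) = L$, and let $m$ be the maximal length of a label occurring in $R$ and $k = |Q|$. Consider the word $w_N = (ac)^N (bd)^N \in L$ for $N$ large compared to $m$ and $k$. An accepting computation of $M$ on $w_N$ is a sequence of jumps, each consuming some factor $y$ (with $py \to q \in R$) from the current remaining input at an arbitrary position. I would track, for each step, the "type profile" of what has been consumed so far versus what remains, arguing that because labels have bounded length $m$, the automaton can never consume a factor that straddles the boundary in a way that forces the $ac/bd$ pairing, and that with enough repetitions one can find two configurations (same state, "compatible" remaining input) that allow a cut-and-paste producing a word not in $L$ — for instance a word containing the factor $ad$ or $cb$ placed so that no legal $ac$/$bd$ shuffle decomposition exists, or a word with unequal letter counts. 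Concretely, I expect the cleanest route is to feed $M$ a word like $(ac)^N (bd)^N$ and argue that the bounded-label automaton must essentially "commit" to consuming each $ac$ and each $bd$ as local blocks, whence a standard interchange argument on the order of block-consumptions yields an accepted word such as $(ac)^{N-1} ad c (bd)^{N-1} b$ or similar that breaks the shuffle structure.

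The hardest part will be making precise the claim that the GJFA \emph{cannot avoid} revealing order information — i.e.\ controlling the fact that a jump may consume a short factor anywhere, so the automaton has a lot of freedom, and showing that despite this freedom no choice of transitions can both accept all of $L$ and reject all the "bad" words. I would handle this by a careful case analysis on where consumed factors of length $\le m$ can sit relative to the existing $a$-$c$ and $b$-$d$ pairings in $w_N$, combined with a counting/pigeonhole argument over states to force a repeatable segment. An alternative, possibly slicker approach: use the fact (from the earlier discussion, cf.\ Example~\ref{ex:2} and Lemma~\ref{GJFAnotSHUF}) that GJFA languages, like JFA languages, are in a sense "insensitive" to fine ordering beyond what the bounded labels encode, formalized via a suitable congruence of bounded index on blocks; then $L$, which requires remembering an unbounded amount of order/pairing information (each $a$ must be matched to a later $c$ with no intervening unmatched $b$ from a different block, etc.), cannot lie in the class. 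I would try the direct pumping argument first, as it is the most self-contained and least likely to need machinery not already in the paper; the block-interchange lemma is the step I expect to have to state and prove with the most care.
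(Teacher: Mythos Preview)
Your plan heads in a different direction from the paper and, as written, has a real gap. The paper avoids any pumping or state-pigeonhole argument. It chooses a single test word $w=ab^{n}cd^{n}$ with $n$ exceeding every transition-label length of the putative GJFA $M$, so that the letter $c$ occurs \emph{exactly once} in $w$. Hence any accepting computation has exactly one transition whose label $u$ contains that $c$, and $|u|\le n$ forces $u$ to be a short contiguous factor around it. A two-case split then suffices. If $u=b^{i}cd^{j}$ (so $a\notin u$), every factor consumed earlier sits entirely inside the prefix $ab^{n-i}$ or the suffix $d^{n-j}$, and the very same multiset of labels, reconcatenated with the $u$-block in front, shows that $M$ also accepts $b^{i}cd^{j}\,ab^{n-i}d^{n-j}$, where $c$ precedes $a$. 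If instead $u=ab^{r}cd^{s}$, then not all $n$ of the $b$'s between $a$ and $c$ can fit into $u$, so some earlier transition has a pure label $b^{k}$; moving that block to the end shows $M$ accepts $ab^{n-k}cd^{n}b^{k}$, a word ending in $b$. Either outcome contradicts $L_{\jfa}(M)=L$. The whole trick is the \emph{unique landmark letter}: it isolates one transition and makes the case analysis finite and tiny.

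Your word $(ac)^{N}(bd)^{N}$ has $N$ copies of every letter, so no single transition can be pinned down, and you fall back on a pigeonhole over states together with an unspecified notion of ``compatible remaining input''. That is the missing idea: for a GJFA the configuration carries the entire remaining scattered word, and you never define the equivalence you intend, nor argue why splicing two ``compatible'' configurations again produces a concatenation of the \emph{same} transition labels that $M$ would traverse along an accepting path. Your candidate bad word is indeed outside $L$, but nothing in the outline forces $M$ to accept it. I would drop the pumping route and instead look for a test word in which some letter is unique, then analyse the single transition consuming it, as the paper does. One caution when you do so: you correctly observed that every $w\in L$ satisfies $|w|_{a}=|w|_{b}=|w|_{c}=|w|_{d}$, which the paper's word $ab^{n}cd^{n}$ fails for $n\ge 2$; you will want to verify membership of your chosen test word carefully (the paper's argument goes through verbatim for $L((ac+bd)\shufflestar)$, so a small adjustment of either the word or the expression may be in order).
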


\begin{proof}
For the sake of contradiction, assume that $L$ is accepted by a GJFA $M$. Let
$n$ be greater than 
the maximum length of a transition label in $M$ and let $w=ab^ncd^n$. The accepting computation of $M$ on $w$ uses exactly one transition with a label $u$ that contains $c$.
\begin{itemize}
\item If $u=b^icd^j$ for $i,j\geq 0$, all earlier transitions only consume factors that are completely contained in the prefix $a b^{n-i}$ or the suffix $d^{n-j}$ of $w = a b^{n-i} (b^icd^j) d^{n-j}$.
This implies that, by using the same sequence of transitions, $M$ can accept $w'=b^icd^jab^{n-i}d^{n-j}$.
\item Otherwise, $u=ab^rcd^s$ for $r,s\geq 0$, i.\,e., it contains both $a$ and~$c$.
By the choice of $n$, an earlier transition labeled with $b^k$ with $k>0$ was used. 
However, this implies that also $w''=ab^{n-k}cd^nb^k$ is
accepted by $M$.
\end{itemize}
The case of $w'\in L(M)$ violates the condition that the symbol $a$ precedes $c$ in words from $L$, while the case of $w''\in L_{\jfa}(M)$ contradicts the fact that the words in $L$ do not end with $b$.
\end{proof}

\begin{lemma}\label{bothSHUFGJFA}
$\{ab\}\shufflestar \in (\mathscr{GJFA} \cap \mathscr{SHUF})\setminus \mathscr{JFA}$.
\end{lemma}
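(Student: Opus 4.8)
The plan is to prove the three-part membership claim $\{ab\}\shufflestar \in (\mathscr{GJFA} \cap \mathscr{SHUF})\setminus \mathscr{JFA}$ by handling each component separately, since the language $L := \{ab\}\shufflestar$ is simple enough that all three facts can be established directly.

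\textbf{Membership in $\mathscr{SHUF}$.} This is immediate: the expression $(ab)\shufflestar$ is a \textsc{SHUF} expression (its only atom is the word $ab \in \Sigma^+$), and by the semantics of \textsc{SHUF} expressions, $L((ab)\shufflestar) = \{ab\}\shufflestar = L$.

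\textbf{Membership in $\mathscr{GJFA}$.} I would exhibit the one-state GJFA $M = (\{s\}, \{a,b\}, \{s\,ab \to s\}, s, \{s\})$ and argue $L_{\jfa}(M) = L$. The inclusion $L \subseteq L_{\jfa}(M)$ is routine: any word in $\{ab\}\shufflen$ contains exactly $n$ occurrences of $a$ and $n$ of $b$ arranged as an interleaving of $n$ copies of the block $ab$; the jumping head can repeatedly locate an $a$, consume it, jump to the matching $b$ (the one that "closes" the corresponding $ab$-block in a fixed interleaving witness), and consume that, iterating $n$ times to empty the input. For the converse $L_{\jfa}(M) \subseteq L$: note that every transition of $M$ consumes the factor $ab$, and a word is accepted iff it can be decomposed by $n$ successive extractions of an $ab$-factor from the remaining input down to $\emptyword$. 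An easy induction (or the observation that each extraction removes one $a$ and one $b$, so an accepted word has $|w|_a = |w|_b$, together with a more careful argument that the extracted factors can always be "uncrossed" into a shuffle of blocks) shows any such word lies in $\{ab\}\shufflen$. Actually the cleanest route here is just to observe directly that $\{ab\}\shufflestar = \{w : |w|_a = |w|_b\text{ and every prefix has }|\cdot|_a \geq |\cdot|_b\}$ — the "semi-Dyck on one bracket pair" language — and check $M$ accepts exactly this set.

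\textbf{Non-membership in $\mathscr{JFA}$.} This is the crux and the one place requiring a genuine argument. By Corollary~\ref{cor-closedsets}, every language in $\mathscr{JFA}$ is perm-closed, so it suffices to show $L = \{ab\}\shufflestar$ is \emph{not} perm-closed. Indeed $ab \in L$ (it is $\{ab\}\shuffle[\shuffle,1]$, i.e. the $n=1$ case), but $ba \notin L$: every word in $\{ab\}\shufflen$ has its first symbol equal to $a$ whenever $n \geq 1$ (and is $\emptyword$ when $n=0$), so $ba \notin \{ab\}\shufflestar$. Hence $ba \in \perm(ab) \setminus L$, so $\perm(L) \neq L$, and therefore $L \notin \mathscr{JFA}$.

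The only step needing any care is verifying $L_{\jfa}(M) = L$ for the GJFA; I expect the easiest presentation is to identify $L$ with the restricted-Dyck language and check both the prefix condition and the count condition are preserved and forced by $M$'s single $ab$-transition. Everything else is a one-line observation.
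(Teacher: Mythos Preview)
Your proposal is correct and follows essentially the same approach as the paper: the same \textsc{SHUF} expression $(ab)\shufflestar$, the same one-state GJFA $M$ with the single rule $s\,ab\to s$, and the same non-$\mathscr{JFA}$ argument via $ab\in L$ but $ba\notin L$ together with Corollary~\ref{cor-closedsets}. The only difference is that the paper simply asserts $L_{\jfa}(M)=\{ab\}\shufflestar$ without justification, whereas you sketch the semi-Dyck identification to verify it; your extra detail is fine (and the argument works), but the paper evidently regards this equality as routine.
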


\begin{proof}
Obviously, $\{ab\}\shufflestar = L((ab)\shufflestar)$. Furthermore, $\{ab\}\shufflestar = L_{\jfa}(M)$, where $M$ is the GJFA with a single state $s$, which is both initial and final, and a single rule $sab\to s$.
As $ab\in\{ab\}\shufflestar $, but
$ba \notin \{ab\}\shufflestar$, $\{ab\}\shufflestar$ is not perm-closed, and hence not a JFA language.
\end{proof}

It is interesting to note that if we take the permutation closures of the separating languages from Lemmas~\ref{GJFAnotSHUF}~and~\ref{SHUFnotGJFA}, then we get JFA languages. As shall be demonstrated next (see Theorem~\ref{thm-GJFA-SHUF}), this property holds for all \textsc{SHUF} and GJFA languages.

\begin{lemma}
\label{prop-permclosure-GJFA-SHUF}
$\perm(\mathscr{GJFA})\cup\perm(\mathscr{SHUF})
=\mathscr{JFA}$.
\end{lemma}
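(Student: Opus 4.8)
The plan is to prove the two inclusions $\perm(\mathscr{GJFA})\cup\perm(\mathscr{SHUF}) \subseteq \mathscr{JFA}$ and $\mathscr{JFA} \subseteq \perm(\mathscr{GJFA})\cup\perm(\mathscr{SHUF})$ separately. The second inclusion is immediate and should be dispatched first: every JFA language is already a GJFA language (JFAs are special GJFAs), and by Corollary~\ref{cor-closedsets} every JFA language $L$ is perm-closed, so $L = \perm(L) \in \perm(\mathscr{GJFA})$. Hence $\mathscr{JFA} \subseteq \perm(\mathscr{GJFA})$, which already gives $\subseteq$ into the union.

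For the forward inclusion, I would treat the two pieces $\perm(\mathscr{GJFA}) \subseteq \mathscr{JFA}$ and $\perm(\mathscr{SHUF}) \subseteq \mathscr{JFA}$ in parallel, since both reduce to the same underlying fact: the permutation closure of any word $w$ depends only on its Parikh image, and $\perm$ is a semiring morphism with respect to catenation/shuffle and their iterations (Theorem~\ref{thm-semiring-morph}, Lemma~\ref{thm:1}). For the SHUF case: given a \textsc{SHUF} expression $S$, build an $\alpha$-\textsc{SHUF} expression $S'$ by replacing every atomic word $w = a_1 \cdots a_k \in \Sigma^+$ occurring in $S$ by the subexpression $a_1 \shuffle \cdots \shuffle a_k$ (and leaving $\emptyset, \emptyword$, $\shuffle$, $\cup$, ${}\shufflestar$ untouched). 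Since $\perm(\{a_1\cdots a_k\}) = \{a_1\}\shuffle\cdots\shuffle\{a_k\}$ and $\perm$ commutes with $\cup$, $\shuffle$, and ${}\shufflestar$, an easy structural induction on $S$ gives $L(S') = \perm(L(S))$. By Theorem~\ref{thm-JFA=a-SHUF}, $L(S') \in \mathscr{JFA}$, so $\perm(L(S)) \in \mathscr{JFA}$. For the GJFA case: given a GJFA $M$, one checks that $\perm(L_{\jfa}(M)) = \perm(L_{\fa}(M))$ — the jumping relation only permutes the consumed symbols relative to the continuous reading, so $L_{\jfa}(M)$ and $L_{\fa}(M)$ have the same Parikh image, hence the same permutation closure by Proposition~\ref{prop-perm-Parikh}. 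Since $L_{\fa}(M)$ is a regular language, $\perm(L_{\fa}(M)) \in \perm(\mathscr{REG}) = \mathscr{JFA}$ by Theorem~\ref{thm-JFA=permREG}, and therefore $\perm(L_{\jfa}(M)) \in \mathscr{JFA}$.

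The main obstacle is the claim $\perm(L_{\jfa}(M)) = \perm(L_{\fa}(M))$ for a general finite machine $M$; this is the one place where the argument is not pure formalism. One direction, $\perm(L_{\fa}(M)) \subseteq \perm(L_{\jfa}(M))$, is not quite trivial because an accepting $\Rightarrow$-computation of $M$ on a word $w$ need not be an accepting $\curvearrowright$-computation on $w$ (the jumping automaton's transitions act on a configuration split as $\Sigma^* Q \Sigma^*$). The right way to see it: an accepting continuous computation on $w = y_1 y_2 \cdots y_m$ (with $y_i$ the successively consumed transition labels) is mirrored by the jumping automaton on the single word $y_1 y_2 \cdots y_m = w$ itself, where the head always re-lands at the left end of the remaining input — so in fact $w \in L_{\jfa}(M)$, giving $L_{\fa}(M) \subseteq L_{\jfa}(M)$ outright, which is more than enough. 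The converse $\perm(L_{\jfa}(M)) \subseteq \perm(L_{\fa}(M))$ is the substantive half: from an accepting jumping computation on $w$, the sequence of transition labels $y_1, \ldots, y_m$ used forms an accepting continuous computation on $z := y_1 \cdots y_m \in L_{\fa}(M)$, and $z$ is a rearrangement of the multiset of symbols of $w$ (each symbol of $w$ is consumed exactly once), so $\pi_\Sigma(w) = \pi_\Sigma(z)$ and hence $w \in \perm(z) \subseteq \perm(L_{\fa}(M))$. Combining both directions with Proposition~\ref{prop-perm-Parikh} yields the Parikh-image equality and closes the proof.
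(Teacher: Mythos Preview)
Your proof is correct and follows essentially the same approach as the paper: the easy inclusion and the $\mathscr{SHUF}$ case are handled identically, and for $\mathscr{GJFA}$ you establish $\perm(L_{\jfa}(M)) = \perm(L_{\fa}(M))$ explicitly and then invoke $\perm(\mathscr{REG}) = \mathscr{JFA}$, whereas the paper phrases the same step as applying the standard general-FA-to-FA construction to obtain a JFA $M'$ with $\perm(L_{\jfa}(M)) = L_{\jfa}(M')$. Your version has the virtue of making the key Parikh-image argument visible rather than hiding it behind the construction.
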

\begin{proof}
Clearly $\mathscr{JFA}\subseteq \perm(\mathscr{GJFA})\cup\perm(\mathscr{SHUF})$ and thus we only have to show that $\perm(\mathscr{GJFA})\cup\perm(\mathscr{SHUF})
\subseteq \mathscr{JFA}$.
\begin{itemize}
\item Let $L\in \mathscr{SHUF}$ be described by a SHUF expression $X$. Then $\perm(L)$ is described by the $\alpha$-SHUF expression $X'$
that is obtained from $X$ by replacing each atomic word $a_1\cdots a_n\in\Sigma^*$
of length $n\geq 2$  by the $\alpha$-SHUF subexpression $a_1\shuffle\dots\shuffle a_n$.
The fact that $\perm(L)=\perm(L(X))=L(X')$ follows by an easy induction argument using Theorem~\ref{thm-semiring-morph}.
\item Let $L\in \mathscr{GJFA}$. The well-known construction of a finite automaton that simulates a given general finite automaton can be applied to obtain, from a given GJFA $M$, a JFA $M'$ with the property $\perm(L_{\jfa}(M))=L_{\jfa}(M')$. The correctness of this method immediately follows from our reasoning towards Theorem~\ref{permCharacterisationTheorem}.
\end{itemize}
In both the cases we conclude that $\perm(L)$ lies in $\mathscr{JFA}$.
\end{proof}

 \begin{lemma}
 Let $L\subseteq \Sigma^*$. Then the following claims are equivalent:
\begin{enumerate}
\item $L\in\mathscr{JFA}$,
\item $L$ is perm-closed and $L\in\mathscr{GJFA}$,
\item $L$ is perm-closed and $L\in\mathscr{SHUF}$.
\end{enumerate}
 \end{lemma}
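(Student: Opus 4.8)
The plan is to prove the chain of equivalences by showing $(1)\Rightarrow(2)$, $(2)\Rightarrow(1)$, and symmetrically $(1)\Rightarrow(3)$, $(3)\Rightarrow(1)$; in fact the two ``hard'' implications $(2)\Rightarrow(1)$ and $(3)\Rightarrow(1)$ are almost immediate once Lemma~\ref{prop-permclosure-GJFA-SHUF} is in hand, so the real content is organizing the argument.

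\textbf{The easy directions.} First I would observe that $(1)\Rightarrow(2)$ and $(1)\Rightarrow(3)$ are trivial: if $L\in\mathscr{JFA}$, then $L$ is perm-closed by Corollary~\ref{cor-closedsets}; moreover every JFA is in particular a GJFA (the labels just happen to have length $\le 1$), so $L\in\mathscr{GJFA}$, and by Theorem~\ref{thm-JFA=a-SHUF} $L$ is described by an $\alpha$-\textsc{SHUF} expression, which is in particular a \textsc{SHUF} expression, so $L\in\mathscr{SHUF}$.

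\textbf{The converse directions.} For $(2)\Rightarrow(1)$: suppose $L$ is perm-closed and $L\in\mathscr{GJFA}$. By Lemma~\ref{prop-permclosure-GJFA-SHUF} we have $\perm(L)\in\mathscr{JFA}$. But since $L$ is perm-closed, $\perm(L)=L$, hence $L\in\mathscr{JFA}$. The implication $(3)\Rightarrow(1)$ is verbatim the same argument with $\mathscr{SHUF}$ in place of $\mathscr{GJFA}$, again invoking only that $\perm(\mathscr{SHUF})\subseteq\mathscr{JFA}$ from Lemma~\ref{prop-permclosure-GJFA-SHUF} together with $\perm(L)=L$.

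\textbf{Where the difficulty really lies.} There is no genuine obstacle in the proof of this lemma itself — all the work has been front-loaded into Lemma~\ref{prop-permclosure-GJFA-SHUF}, whose nontrivial content is the inclusion $\perm(\mathscr{GJFA})\cup\perm(\mathscr{SHUF})\subseteq\mathscr{JFA}$ (replacing atomic words $a_1\cdots a_n$ in a \textsc{SHUF} expression by $a_1\shuffle\cdots\shuffle a_n$ and appealing to the semiring-morphism property of $\perm$ from Theorem~\ref{thm-semiring-morph}, respectively simulating a general finite automaton by an ordinary one). If I had to flag one thing to be careful about, it is making sure the equivalence is stated for \emph{the same} language $L$ throughout — i.e. that ``perm-closed'' is doing the job of collapsing $\perm(L)$ back to $L$ — and noting that without the perm-closedness hypothesis the classes genuinely differ, as witnessed by $\{ab\}\shufflestar\in(\mathscr{GJFA}\cap\mathscr{SHUF})\setminus\mathscr{JFA}$ from Lemma~\ref{bothSHUFGJFA}. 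So the write-up is short: state the two trivial implications, then derive the two converses uniformly from Lemma~\ref{prop-permclosure-GJFA-SHUF}.
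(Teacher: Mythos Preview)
Your proposal is correct and follows essentially the same approach as the paper's own proof: the forward directions are noted as immediate (perm-closedness from Corollary~\ref{cor-closedsets}, containment in $\mathscr{GJFA}$ and $\mathscr{SHUF}$ by definition and Theorem~\ref{thm-JFA=a-SHUF}), and the converse directions are both obtained from Lemma~\ref{prop-permclosure-GJFA-SHUF} together with $\perm(L)=L$. Your additional commentary on where the real work lies is accurate but not part of the proof proper.
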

 \begin{proof}
 As each  $L\in\mathscr{JFA}$ is perm-closed and in $\mathscr{GJFA}\cap \mathscr{SHUF}$, we only have to show the upward implications. If  $L\in\mathscr{GJFA}$, then
 (by Lemma~\ref{prop-permclosure-GJFA-SHUF}), $\perm(L)\in\mathscr{JFA}$. If, in addition, $L$ is perm-closed, then
 $\perm(L)=L$, which shows the claim.
 Similarly, we can show that, if $L$ is perm-closed and $L\in\mathscr{SHUF}$, then  $L\in\mathscr{JFA}$.
 \end{proof}
  
\begin{theorem}\label{thm-GJFA-SHUF}
 $\perm(\mathscr{GJFA})=\perm(\mathscr{SHUF})=\perm(\mathscr{PSL})
=\mathscr{JFA}$.
 \end{theorem}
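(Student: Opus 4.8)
The plan is to assemble Theorem~\ref{thm-GJFA-SHUF} directly from the pieces already established, so that essentially no new combinatorial work is required. The equality chain has four terms, and I would prove it by closing a cycle of inclusions: $\mathscr{JFA}\subseteq\perm(\mathscr{PSL})\subseteq\perm(\mathscr{GJFA})\cup\perm(\mathscr{SHUF})\subseteq\mathscr{JFA}$ and, separately, that each of $\perm(\mathscr{GJFA})$ and $\perm(\mathscr{SHUF})$ sits between $\mathscr{JFA}$ and $\mathscr{JFA}$.

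First I would recall that Theorem~\ref{permCharacterisationTheorem} already gives $\mathscr{JFA}=\perm(\mathscr{REG})=\perm(\mathscr{CFL})=\perm(\mathscr{PSL})$, so the term $\perm(\mathscr{PSL})=\mathscr{JFA}$ is free. Next, since every regular language is a SHUF language and a GJFA language (a finite automaton is a special general finite machine, interpreted as a GJFA), and since $\mathscr{JFA}$ itself is contained in both $\mathscr{GJFA}$ and $\mathscr{SHUF}$, we get the lower bounds $\mathscr{JFA}=\perm(\mathscr{JFA})\subseteq\perm(\mathscr{GJFA})$ and $\mathscr{JFA}\subseteq\perm(\mathscr{SHUF})$; here I use that $\perm$ is extensive and monotone (Lemma~\ref{hullOperatorLemma}) together with the fact that elements of $\mathscr{JFA}$ are already perm-closed (Corollary~\ref{cor-closedsets}), so $\perm$ fixes them. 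For the matching upper bounds, Lemma~\ref{prop-permclosure-GJFA-SHUF} states precisely $\perm(\mathscr{GJFA})\cup\perm(\mathscr{SHUF})=\mathscr{JFA}$, which immediately yields $\perm(\mathscr{GJFA})\subseteq\mathscr{JFA}$ and $\perm(\mathscr{SHUF})\subseteq\mathscr{JFA}$.

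Putting these together: $\perm(\mathscr{GJFA})=\mathscr{JFA}$ and $\perm(\mathscr{SHUF})=\mathscr{JFA}$, and combined with $\perm(\mathscr{PSL})=\mathscr{JFA}$ from Theorem~\ref{permCharacterisationTheorem}, all four terms coincide. So the proof is really just a two-line citation argument once Lemma~\ref{prop-permclosure-GJFA-SHUF} and Theorem~\ref{permCharacterisationTheorem} are in hand.

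The only ``obstacle'' worth flagging is a presentational one: one must be slightly careful that the chain of equalities is read as a genuine equality of sets rather than a chain of inclusions, and that the reader sees why the already-proven Lemma~\ref{prop-permclosure-GJFA-SHUF} (about the union) actually forces each summand individually into $\mathscr{JFA}$ — this is immediate because a union being contained in $\mathscr{JFA}$ forces each part to be, and the reverse containment for each part is the extensivity/monotonicity of $\perm$ applied to $\mathscr{JFA}\subseteq\mathscr{GJFA}\cap\mathscr{SHUF}$. There is no deep new content; the substance was already absorbed into Lemma~\ref{prop-permclosure-GJFA-SHUF} (whose nontrivial direction used the $\alpha$-SHUF rewriting and the general-automaton-to-automaton simulation) and into the enriched Parikh characterization Theorem~\ref{permCharacterisationTheorem}.
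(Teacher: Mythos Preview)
Your proposal is correct and matches the paper's treatment exactly: the theorem is stated there without a separate proof, as an immediate corollary of Lemma~\ref{prop-permclosure-GJFA-SHUF} together with Theorem~\ref{permCharacterisationTheorem}, and your assembly from those two ingredients is precisely what is intended.

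One small factual slip to fix: you assert in passing that ``every regular language is a SHUF language and a GJFA language.'' This is false---the paper itself exhibits $\{a\}^*\{b\}^*$ as a regular language lying in neither $\mathscr{SHUF}$ nor $\mathscr{GJFA}$. Fortunately your argument never uses this claim; the lower bounds $\mathscr{JFA}\subseteq\perm(\mathscr{GJFA})$ and $\mathscr{JFA}\subseteq\perm(\mathscr{SHUF})$ follow, as you correctly say a few words later, from $\mathscr{JFA}\subseteq\mathscr{GJFA}\cap\mathscr{SHUF}$ (via $\alpha$-SHUF $\subseteq$ SHUF and JFA $\subseteq$ GJFA) together with perm-closedness of $\mathscr{JFA}$ languages. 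Simply delete the erroneous aside.
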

 
 \begin{figure}
\begin{center}
\begin{tikzpicture}[scale=.7]
  \node (psl) at (0,4) {$\mathscr{PSL}$};
  \node (npermpsl) at (-3,-2) {$\perm(\mathscr{GJFA}) = \perm(\mathscr{SHUF}) = $};
  \node (permpsl) at (-3,-3) {$\perm(\mathscr{PSL}) = \mathscr{JFA} = \alpha\mbox{-}\mathscr{SHUF}$};
   \node (shufcapgjfa) at (-3,0) {$\mathscr{SHUF} \cap \mathscr{GJFA}$};
  \node (reg) at (3,-3) {$\mathscr{REG}$};
  \node (shuf) at (-3,2) {$\mathscr{SHUF}$};
  \node (gjfa) at (0,2) {$\mathscr{GJFA}$};
  \node (cfl) at (3,2) {$\mathscr{CFL}$};
  \node (regcapjfa) at (0,-5) {$\mathscr{REG} \cap \mathscr{JFA}$};
  \draw[->] (regcapjfa) -- (permpsl);
  \draw[->] (regcapjfa) -- (reg);
  \draw[->] (reg) -- (cfl);
  \draw[->] (shufcapgjfa) -- (shuf);
  \draw[->] (shufcapgjfa) -- (gjfa);
  \draw[->] (npermpsl) -- (shufcapgjfa);
  \draw[->] (shuf) -- (psl);
  \draw[->] (gjfa) -- (psl);
  \draw[->] (cfl) -- (psl);
 \end{tikzpicture}
\end{center}
\caption{\label{fig:hierarchies}Inclusion diagram of our language families.}
\end{figure}
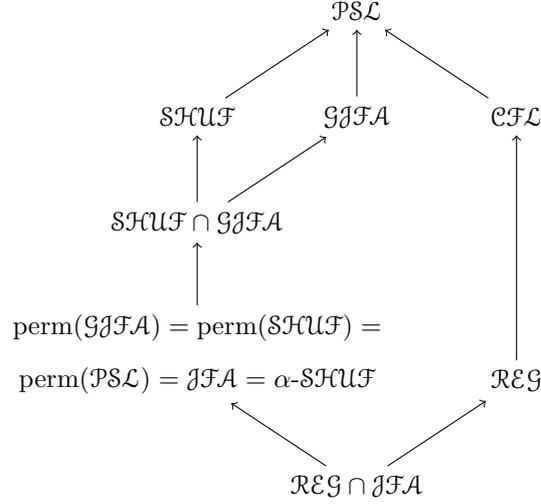
 
 
We summarize the inclusion relations between the language families considered in this paper in Figure \ref{fig:hierarchies}.
In this figure, an arrow from class $A$ to $B$ represents the strict inclusion $A \subsetneq B$.
A missing connection between a pair of language families means incomparability.



\begin{theorem}
The inclusion and incomparability relations 
displayed in Figure~\ref{fig:hierarchies} are correct.
\end{theorem}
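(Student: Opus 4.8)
The plan is to verify Figure~\ref{fig:hierarchies} edge by edge, organizing the work into three groups: the strict inclusions drawn as arrows, the incomparabilities (missing edges), and a few auxiliary separations needed to make both precise. Most of the hard analytic content has already been assembled: Theorem~\ref{thm-GJFA-SHUF} collapses the bottom row of permutation closures to $\mathscr{JFA}=\alpha\text{-}\mathscr{SHUF}$, Lemmas~\ref{GJFAnotSHUF}, \ref{SHUFnotGJFA}, \ref{bothSHUFGJFA} give the incomparability of $\mathscr{GJFA}$ and $\mathscr{SHUF}$ together with a language in their intersection outside $\mathscr{JFA}$, and Theorem~\ref{permCharacterisationTheorem} plus Parikh's theorem place everything inside $\mathscr{PSL}$.

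First I would settle the inclusions. The arrow $\mathscr{REG}\cap\mathscr{JFA}\subsetneq\mathscr{JFA}$: any non-regular JFA language such as $\{w:|w|_a=|w|_b=|w|_c\}$ from Example~\ref{ex:1} witnesses strictness. The arrow $\mathscr{REG}\cap\mathscr{JFA}\subsetneq\mathscr{REG}$: the language $\{ab\}$, or simpler $L(a)$... in fact $\{ab\}\shufflestar$ is not JFA but is it regular? No; better take $L=\{a\}\cdot\{b\}^* $ — it is regular but not perm-closed, hence not in $\mathscr{JFA}$ by Corollary~\ref{cor-closedsets}. The arrows $\mathscr{REG}\subsetneq\mathscr{CFL}$ and $\mathscr{CFL}\subsetneq\mathscr{PSL}$ are classical (Parikh's theorem gives $\mathscr{CFL}\subseteq\mathscr{PSL}$; a non-semilinear... no, all CFLs are semilinear — strictness follows since $\mathscr{PSL}$ contains non-context-free languages like $\{a^nb^nc^n\}^{\textrm{Parikh-preimage}}$, e.g. $\{w:|w|_a=|w|_b=|w|_c\}$ again). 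For the shuffle side: $\mathscr{GJFA}\subsetneq\mathscr{PSL}$ and $\mathscr{SHUF}\subsetneq\mathscr{PSL}$ — inclusion holds because GJFA count symbols along a continuous-or-jumping reading and SHUF expressions build from finite sets via operations all preserving semilinearity of the Parikh image (an induction, invoking that semilinear sets are closed under $+$, finite union, and the submonoid generated), while strictness uses any $\mathscr{PSL}$ language that is not perm-closed-compatible with the structural constraints, e.g. Lemma~\ref{SHUFnotGJFA} gives $L\in\mathscr{SHUF}\subseteq\mathscr{PSL}$ with $L\notin\mathscr{GJFA}$, so $\mathscr{GJFA}\neq\mathscr{PSL}$, and symmetrically Lemma~\ref{GJFAnotSHUF} handles $\mathscr{SHUF}$. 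The arrows $\mathscr{SHUF}\cap\mathscr{GJFA}\subsetneq\mathscr{SHUF}$ and $\subsetneq\mathscr{GJFA}$ are exactly Lemmas~\ref{SHUFnotGJFA} and~\ref{GJFAnotSHUF}. The arrow $\mathscr{JFA}\subsetneq\mathscr{SHUF}\cap\mathscr{GJFA}$: inclusion since every $\alpha$-SHUF expression is a SHUF expression and every JFA is a GJFA; strictness is Lemma~\ref{bothSHUFGJFA}. Finally $\mathscr{REG}\cap\mathscr{JFA}\subseteq\mathscr{JFA}$ is the unlabeled composition already covered, and I must also note transitivity gives the remaining drawn arrows (e.g. into $\mathscr{PSL}$ from $\mathscr{JFA}$ via $\mathscr{SHUF}$).

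Second, the incomparabilities. $\mathscr{GJFA}$ vs.\ $\mathscr{SHUF}$: Lemmas~\ref{GJFAnotSHUF} and~\ref{SHUFnotGJFA}. $\mathscr{REG}$ vs.\ $\mathscr{JFA}$: $L(a b^*)$ is regular, not perm-closed, so not JFA; $\{w:|w|_a=|w|_b=|w|_c\}$ is JFA, not regular (pumping). $\mathscr{REG}$ vs.\ $\mathscr{GJFA}$ and vs.\ $\mathscr{SHUF}$: $L(ab^*)$ again is regular but not in $\mathscr{GJFA}\cup\mathscr{SHUF}$ — here I need a small argument that $L(ab^*)$ is not a GJFA language (a GJFA accepting it would, via the jump freedom, also accept a permuted word; more carefully, since $ab,abb,\ldots$ are accepted but $ba\notin$, and GJFA labels are bounded-length words, one shows a long enough $ab^k$ forces a $b$-only or $b^i$-containing transition that can be jumped to produce $b^{\ast}a b^{\ast}$, contradiction) and likewise not SHUF by the same factor-reordering idea as in Lemma~\ref{GJFAnotSHUF}; conversely $\{ab\}\shufflestar\in\mathscr{GJFA}\cap\mathscr{SHUF}$ is not regular (its Parikh image is $\{(n,n)\}$, but more to the point its intersection with $a^*b^*$ is $\{a^nb^n\}$, not regular). $\mathscr{CFL}$ vs.\ $\mathscr{JFA}$ (and vs.\ $\mathscr{GJFA}$, $\mathscr{SHUF}$): $\{a^nb^n:n\geq0\}$ is context-free, not perm-closed hence not JFA, and one checks it is not GJFA (same jump argument) nor SHUF; conversely $\{w:|w|_a=|w|_b=|w|_c\}$ is JFA $\subseteq\mathscr{GJFA},\mathscr{SHUF}$ but not context-free (standard Ogden/interchange argument). $\mathscr{CFL}$ vs.\ $\mathscr{SHUF}$ and vs.\ $\mathscr{GJFA}$: same two witnesses. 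And $\mathscr{REG}\cap\mathscr{JFA}$ is incomparable with nothing new — it sits strictly below both its parents, which is already an arrow, not a missing edge, so no incomparability claim there.

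The main obstacle will be the handful of ``$L$ is not a GJFA language'' arguments for languages like $L(ab^*)$ and $\{a^nb^n\}$: unlike the SHUF non-membership proofs (which follow the clean structural induction of Lemma~\ref{GJFAnotSHUF}), ruling out all GJFAs requires a pumping-style argument exploiting both the bounded transition labels and the jump relation, in the spirit of the proof of Lemma~\ref{SHUFnotGJFA} — choose a word long enough that some transition label is ``short'' relative to a block, and observe the jump freedom lets one reorder the consumed factors to produce a word violating an order constraint (that $a$ precedes $b$, or that the word has the form $a^*b^*$). I expect these to go through by the same template already used in Lemma~\ref{SHUFnotGJFA}, so the bulk of the proof is bookkeeping: listing each drawn arrow and each omitted edge, and citing the appropriate earlier lemma or exhibiting one of the small recurring witness languages ($L(ab^*)$, $\{a^nb^n\}$, $\{ab\}\shufflestar$, $\{w:|w|_a=|w|_b=|w|_c\}$, and the languages of Lemmas~\ref{GJFAnotSHUF} and~\ref{SHUFnotGJFA}).
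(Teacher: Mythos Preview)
Your plan is structurally identical to the paper's proof: verify the strict inclusions one arrow at a time, then dispatch the incomparabilities with a small stock of witness languages, citing Lemmas~\ref{GJFAnotSHUF}, \ref{SHUFnotGJFA}, \ref{bothSHUFGJFA}, Theorem~\ref{thm-GJFA-SHUF}, Example~\ref{ex:1}, and Parikh's theorem at the appropriate places. The difference is entirely in the choice of the ``regular (resp.\ context-free) but not in $\mathscr{GJFA}\cup\mathscr{SHUF}$'' witness. You pick $L(ab^{*})$ (and $\{a^{n}b^{n}\}$ for the $\mathscr{CFL}$ side) and correctly flag the resulting GJFA non-membership arguments as the main obstacle, proposing to redo the bounded-label pumping template of Lemma~\ref{SHUFnotGJFA}. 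The paper instead uses the single witness $\{a\}^{*}\{b\}^{*}$ throughout: its non-membership in $\mathscr{GJFA}$ is already established in \cite[Lemma~17.3.2]{MedZem2014}, so the obstacle you identify evaporates into a citation; and its non-membership in $\mathscr{SHUF}$ follows by exactly the structural case split of Lemma~\ref{GJFAnotSHUF} (either an iterated-shuffle subexpression contains both letters, or two subexpressions producing $a$'s and $b$'s respectively are shuffled). Similarly, for the inclusions $\mathscr{GJFA},\mathscr{SHUF}\subseteq\mathscr{PSL}$ the paper simply invokes Theorem~\ref{thm-GJFA-SHUF} (which gives $\perm(\mathscr{GJFA})=\perm(\mathscr{SHUF})=\mathscr{JFA}\subseteq\perm(\mathscr{PSL})$, hence semilinear Parikh images) rather than the separate inductive argument you sketch.

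In short: your proof would go through, but you are volunteering to reprove a lemma that is already in the literature. Swapping your witness $ab^{*}$ for $\{a\}^{*}\{b\}^{*}$ and citing Meduna--Zemek removes precisely the part you labeled the main obstacle, and the rest of your outline then matches the paper's proof essentially line for line.
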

 
\begin{proof}  
We first show the correctness of the subset relations. The class $\mathscr{REG} \cap \mathscr{JFA}$ is obviously included in both $\mathscr{REG}$ and $\mathscr{JFA}$, and any non-commutative regular language and the non-regular JFA language $\{ab\}\shufflestar$ show these subset relations to be proper. That $\mathscr{JFA} \subsetneq \mathscr{SHUF} \cap \mathscr{GJFA}$ follows by definition and Lemma~\ref{bothSHUFGJFA}. Similarly, both $\mathscr{SHUF} \cap \mathscr{GJFA} \subsetneq \mathscr{GJFA}$ and $\mathscr{SHUF} \cap \mathscr{GJFA} \subsetneq \mathscr{SHUF}$ follows by definition and Lemmas~\ref{GJFAnotSHUF}~and~\ref{SHUFnotGJFA}, respectively. Theorem~\ref{thm-GJFA-SHUF} shows that the classes $\mathscr{SHUF}$ and $\mathscr{GJFA}$ are contained in $\mathscr{PSL}$ and since $\mathscr{SHUF}$ and $\mathscr{GJFA}$ are incomparable, these inclusions are proper. By Parikh's theorem~\cite{Par66} and as the context-free languages do not contain the language studied in Example~\ref{ex:1}, $\mathscr{CFL} \subsetneq \mathscr{PSL}$. Finally, $\mathscr{REG} \subsetneq \mathscr{CFL}$ is well-known; thus, all the claimed proper subset relations hold.\par
Since $\mathscr{JFA}$ is a proper superclass of $\mathscr{REG} \cap \mathscr{JFA}$, it contains a language not in $\mathscr{REG}$. Furthermore, according to \cite[Lemma 17.3.2]{MedZem2014}, the regular language $\{a\}^*\{b\}^*$ is not in $\mathscr{GJFA}$. By a similar argument as used in the proof of Lemma~\ref{GJFAnotSHUF}, it can also be shown that $\{a\}^*\{b\}^* \notin \mathscr{SHUF}$ (more precisely, since this language is infinite, either a subexpression that contains both $a$ and $b$ is subject to an iterated shuffle operation or two subexpressions that produce only $a$ and $b$, respectively, are connected by a shuffle operation). Hence, $\mathscr{REG}$ is incomparable with all the classes on the left side of the diagram. The language of Example~\ref{ex:1} is in $\mathscr{JFA}$, but not in $\mathscr{CFL}$. Furthermore, $\{a\}^*\{b\}^*$ is a context-free language, which implies that $\mathscr{CFL}$ is also incomparable with all the classes on the left side of the diagram. Finally, the incomparability of the classes $\mathscr{SHUF}$ and $\mathscr{GJFA}$ is established by Lemmas~\ref{GJFAnotSHUF},~\ref{SHUFnotGJFA}~and~\ref{bothSHUFGJFA}. This concludes the proof.
 \end{proof}
 
The inclusion diagram also motivates to study problems that can be expressed as follows. If $\mathscr{X}$ and  $\mathscr{Y}$ are two language families with $\mathscr{X}\subsetneq \mathscr{Y}$ and if  $\mathscr{Y}$ can be described by  $\mathscr{Y}$-devices, what is the complexity (or even decidability) status of the problem, given some  $\mathscr{Y}$-device $Y$, to determine if the language $L(Y)$ belongs to $\mathscr{X}$? In Section~\ref{sec:compareJFAREG}, we shall see that this type of problem is NP-hard for $\mathscr{X}=\mathscr{REG}\cap \mathscr{JFA}$ and $\mathscr{Y}=\mathscr{REG}$ (Theorem~\ref{thm-non-JFA-complexity}) or $\mathscr{Y}=\mathscr{JFA}$ (Theorem~\ref{thm-non-REG-complexity}).
Conversely, it is even undecidable whether or not a given context-free grammar generates a regular language; see~\cite{Gre68}.

\section{Representations and Normal Forms}


One of the main results of the conference version
of this paper~\cite{FerParSch2015} was the following representation theorem.


\begin{theorem}[Representation Theorem]
\label{thm-representation}
Let $L\subseteq\Sigma^*$. Then, $L \in \mathscr{JFA}$ if and only if there exists a number $n \geq 1$ and finite sets $M_i \subseteq \Sigma^*$, $N_i \subseteq \Sigma^*$ for $1 \leq i \leq n$, so that the following representation is valid.
\begin{equation} \label{eq-representation}
L = \bigcup^{n}_{i=1} \perm(M_i) \shuffle (\perm(N_i))\shufflestar
\end{equation}
\end{theorem}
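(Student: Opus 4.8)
The plan is to establish both directions by leveraging the already-proven equivalence $\mathscr{JFA} = \{L(R) : R \text{ is an } \alpha\text{-\textsc{SHUF} expression}\}$ (Theorem~\ref{thm-JFA=a-SHUF}) together with the semiring/star-semiring properties of $\perm$ from Theorem~\ref{thm-semiring-morph} and Proposition~\ref{laws:1}. The ``if'' direction is the easy one: given a representation as in~\eqref{eq-representation}, each $\perm(M_i)$ and each $\perm(N_i)$ is a finite perm-closed language, hence by Proposition~\ref{prop-fin-closedsets} is described by an iterated-shuffle-free $\alpha$-\textsc{SHUF} expression; applying iterated shuffle to the expression for $N_i$, then shuffle with the expression for $M_i$, then union over $i$ yields an $\alpha$-\textsc{SHUF} expression for $L$, so $L\in\mathscr{JFA}$ by Theorem~\ref{thm-JFA=a-SHUF}.

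For the ``only if'' direction, I would start from $L\in\mathscr{JFA}$, so $L=L(R)$ for some $\alpha$-\textsc{SHUF} expression $R$, and proceed by structural induction on $R$, showing the class of languages admitting a representation of the form~\eqref{eq-representation} is closed under the three operations $+$, $\shuffle$, $\shufflestar$ and contains the atoms $\emptyset$, $\{\emptyword\}$, $\{a\}$. Closure under union is immediate (concatenate the two lists of index pairs). Closure under shuffle uses the distributivity of $\shuffle$ over $\cup$ (Proposition~\ref{laws:1}(3)) to reduce to handling a single product $(\perm(M)\shuffle\perm(N)\shufflestar)\shuffle(\perm(M')\shuffle\perm(N')\shufflestar)$; reassociating and using that $\perm(N)\shufflestar\shuffle\perm(N')\shufflestar = (\perm(N)\cup\perm(N'))\shufflestar = \perm(N\cup N')\shufflestar$ by Proposition~\ref{laws:1}(4) and the morphism property, while $\perm(M)\shuffle\perm(M') = \perm(M\cdot M')$ by Lemma~\ref{thm:1}, collapses this to a single term of the required shape.

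The main obstacle is closure under iterated shuffle, which is where identity~(6) of Proposition~\ref{laws:1}, namely $(M_1\shuffle M_2\shufflestar)\shufflestar = (M_1\shuffle(M_1\cup M_2)\shufflestar)\cup\{\emptyword\}$, does the real work. Applying $\shufflestar$ to $\bigcup_{i=1}^n \perm(M_i)\shuffle\perm(N_i)\shufflestar$, I would first use Proposition~\ref{laws:1}(4) to distribute the outer star over the union as an iterated shuffle of the individual $(\perm(M_i)\shuffle\perm(N_i)\shufflestar)\shufflestar$; each such factor, by rule~(6), equals $\bigl(\perm(M_i)\shuffle(\perm(M_i)\cup\perm(N_i))\shufflestar\bigr)\cup\{\emptyword\}$, i.e. $\bigl(\perm(M_i)\shuffle\perm(M_i\cup N_i)\shufflestar\bigr)\cup\{\emptyword\}$ using Theorem~\ref{thm-semiring-morph}. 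Shuffling these $n$ terms together and expanding via distributivity yields a union indexed by subsets $S\subseteq\{1,\dots,n\}$ of languages of the form $\bigl(\bigshuffle_{i\in S}\perm(M_i)\bigr)\shuffle\bigl(\bigshuffle_{i\in S}\perm(M_i\cup N_i)\shufflestar\bigr)$; the first factor is $\perm$ of the concatenation of the relevant $M_i$'s (a finite set, by Lemma~\ref{thm:1}), and the second is $\perm$ of the union of the $M_i\cup N_i$ for $i\in S$, iterated-shuffled (again a finite set and using Proposition~\ref{laws:1}(4)). Thus the result is a finite union of terms of the form $\perm(M'_S)\shuffle\perm(N'_S)\shufflestar$, completing the induction. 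I would be careful to track the $\{\emptyword\}$ summands (these contribute the $S=\emptyset$ term $\perm(\{\emptyword\})\shuffle\perm(\emptyset)\shufflestar = \{\emptyword\}$, or can be absorbed since $\emptyword\in\perm(N)\shufflestar$ always), and to note that the atomic cases are trivially of the stated form.
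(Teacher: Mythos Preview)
Your argument is correct. The ``if'' direction matches the paper's. For the ``only if'' direction, however, the paper takes a different route: rather than doing structural induction on $\alpha$-\textsc{SHUF} expressions, it invokes Theorem~\ref{permCharacterisationTheorem} to conclude that $\psi_\Sigma(L)$ is semilinear, writes this Parikh image as a finite union of linear sets $S_i=\{v^i+\sum_j k_j v^i_j\}$, and then pulls back each linear set through $\psi_\Sigma^{-1}$ to obtain $M_i=\psi_\Sigma^{-1}(v^i)$ and $N_i=\bigcup_j\psi_\Sigma^{-1}(v^i_j)$, which are automatically finite and perm-closed. Interestingly, the paper remarks that its conference version sketched exactly your inductive approach (attributed in spirit to Jantzen), but that the authors later replaced it with the semilinear-set argument once the connection to Parikh's theorem became clearer. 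The trade-off is this: the paper's proof is shorter and more conceptual, exporting all the work to the known structure of semilinear sets, whereas your proof is self-contained at the level of expressions and is genuinely constructive---it produces, from any $\alpha$-\textsc{SHUF} expression, an equivalent one in the normal form of shuffle-height at most one, which is a slightly stronger statement than what the theorem literally asserts about languages.
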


We have sketched a  proof of this representation theorem in~\cite{FerParSch2015}
on the level of $\alpha$-\textsc{SHUF} expressions, 
so that
we actually got a normal form theorem for these expressions.
Our proof idea was similar to the one that Jantzen presented in~\cite{Jan79a}. 
However, in the meantime we understood the connections
to Parikh's theorem better, so that we will present a different reasoning in the following proof.

\begin{proof}
Consider  $L\subseteq\Sigma^*$ with $L \in \mathscr{JFA}$.
By Theorem~\ref{permCharacterisationTheorem}, $\psi_\Sigma(L)$ is semilinear. Hence,
$$\psi_\Sigma(L)=\bigcup_{i=1}^nS_i,
$$
where the sets $S_i$ are linear sets, which means that there are 
vectors $v^i,v^i_1,\dots, v^i_{\ell_i}$ such that
$$S_i=\{x\in \mathbb{N}^{|\Sigma|}\suchthat \exists 
k_1,\dots, k_{\ell_i}: x=v^i+\sum_{j=1}^{\ell_i}k_iv^i_j
\}.$$
Let $M_i=\psi_\Sigma^ {-1}(v^i)$ and $N_{i,j}=\psi_\Sigma^ {-1}(v^i_j)$. Then,
$$S_i=\psi_\Sigma\left(M_i\shuffle \bigshuffle_{j=1}^{\ell_i}N_{i,j}\shufflestar\right)
=\psi_\Sigma\left(M_i\shuffle\left(\bigcup_{j=1}^{\ell_i}N_{i,j}
\right)\shufflestar
\right)
, $$
as $\psi_\Sigma$ acts as a morphism. Let $N_i=\bigcup_{j=1}^{\ell_i}N_{i,j}$. Observe that by our definition, $M_i$, $N_{i,j}$ and hence $N_i$ are all perm-closed. Hence, $L$ can be represented as required.

As the required representation can be easily interpreted as some $\alpha$-\textsc{SHUF} expression, any $L$ that can be represented as in the theorem is in  $\mathscr{JFA}$ according to Theorem~\ref{thm-JFA=a-SHUF}.
\end{proof}

Recall that for classical regular expressions, the star height was quite an important notion; see \cite{Coh70,CohBrz70,Has82}.
Actually, the proof that we sketched in \cite{FerParSch2015} for a proof of the 
preceding theorem was based on an inductive argument involving the star height of the expressions.
Without giving further details, including omitting some further definitions, we only mention the following
interesting consequences from the Representation Theorem.

This can be 
obtained by combining Theorem~\ref{thm-representation} with Theorem~\ref{thm-JFA=a-SHUF}, Lemma~\ref{lem-reg-alphashuf} and Theorem~\ref{thm-JFA=permREG}.

\begin{corollary}$L\in \mathscr{JFA}$ if and only if there is a
regular language $R$ of star height at most one such that $L=\perm(R)$.
\end{corollary}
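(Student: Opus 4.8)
The plan is to derive this corollary directly from the Representation Theorem (Theorem~\ref{thm-representation}) together with the translation machinery already established, avoiding any fresh combinatorial work. The forward direction is the substantive one: given $L \in \mathscr{JFA}$, I would invoke Theorem~\ref{thm-representation} to write $L = \bigcup_{i=1}^n \perm(M_i) \shuffle (\perm(N_i))\shufflestar$ with the $M_i, N_i$ finite. The task is to produce a regular language $R$ of star height at most one with $\perm(R) = L$. The natural candidate is $R = \bigcup_{i=1}^n w_i \cdot (v_{i,1} \cup \dots \cup v_{i,k_i})^*$, where I pick for each $i$ a single representative word $w_i$ with $\perm(w_i) = \perm(M_i)$ --- but here I must be careful, since $\perm(M_i)$ need not be the permutation class of a single word in general. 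However, in the proof of Theorem~\ref{thm-representation} the sets $M_i$ and $N_i$ arise precisely as $\psi_\Sigma^{-1}(v^i)$ and unions of $\psi_\Sigma^{-1}(v^i_j)$, i.e.\ they \emph{are} single permutation classes (for $M_i$) and finite unions of single permutation classes (for $N_i$); so I would re-invoke the construction in that proof rather than the bare statement, taking $w_i$ to be any word with $\psi_\Sigma(w_i) = v^i$ and $v_{i,j}$ any word with $\psi_\Sigma(v_{i,j}) = v^i_j$.

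With $R$ defined this way, I would verify $\perm(R) = L$ using that $\perm$ is a semiring morphism respecting iterated catenation/shuffle (Theorem~\ref{thm-semiring-morph}):
\begin{align*}
\perm(R) &= \bigcup_{i=1}^n \perm(w_i) \shuffle \left(\perm(v_{i,1} \cup \dots \cup v_{i,k_i})\right)\shufflestar\\
&= \bigcup_{i=1}^n \perm(M_i) \shuffle \left(\perm(N_i)\right)\shufflestar = L.
\end{align*}
The star height of $R$ is at most one because each defining expression $w_i (v_{i,1} \cup \dots \cup v_{i,k_i})^*$ has exactly one Kleene star, none nested inside another, and a finite union of such expressions still has star height at most one. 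This handles the forward implication.

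For the converse, suppose $L = \perm(R)$ for some regular $R$ (of any star height --- the hypothesis ``star height at most one'' is not even needed here). Then $R$ is regular, so by Theorem~\ref{thm-JFA=permREG} (which states $\mathscr{JFA} = \perm(\mathscr{REG})$), $\perm(R) \in \mathscr{JFA}$, i.e.\ $L \in \mathscr{JFA}$. Alternatively one can route through Lemma~\ref{lem-reg-alphashuf} and Theorem~\ref{thm-JFA=a-SHUF}: a regular expression for $R$ translates into an $\alpha$-\textsc{SHUF} expression denoting $\perm(R) = L$, hence $L \in \mathscr{JFA}$.

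\textbf{Main obstacle.} The only delicate point is the claim that the $M_i$ in the Representation Theorem can be taken to be single permutation classes, so that a single representative word $w_i$ suffices --- a literal reading of Theorem~\ref{thm-representation} only gives finite sets. I expect to resolve this by pointing explicitly to the proof of Theorem~\ref{thm-representation}, where $M_i = \psi_\Sigma^{-1}(v^i)$ and $N_i = \bigcup_j \psi_\Sigma^{-1}(v^i_j)$ are built from individual lattice vectors; equivalently, one notes that every finite perm-closed set is a finite union of permutation classes $\perm(u_1), \dots, \perm(u_m)$, and one can absorb such a union over $M_i$ into the outer union over $i$ (replacing the single index $i$ by pairs), and similarly express $\perm(N_i)\shufflestar$ with $N_i = \perm(u_1) \cup \dots \cup \perm(u_m)$ as $(u_1 \cup \dots \cup u_m)^*$ under $\perm$. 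So no real difficulty remains; the corollary is essentially a repackaging of Theorem~\ref{thm-representation} in automata-theoretic rather than expression-theoretic language.
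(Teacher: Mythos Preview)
Your proposal is correct and follows essentially the same approach as the paper, which simply remarks that the corollary is obtained ``by combining Theorem~\ref{thm-representation} with Theorem~\ref{thm-JFA=a-SHUF}, Lemma~\ref{lem-reg-alphashuf} and Theorem~\ref{thm-JFA=permREG}.'' The only cosmetic difference is that the paper routes the forward direction through the $\alpha$-\textsc{SHUF} normal form and then back-translates via Lemma~\ref{lem-reg-alphashuf}, whereas you construct the star-height-one regular expression directly from the semilinear data and invoke the semiring morphism (Theorem~\ref{thm-semiring-morph}); since Lemma~\ref{lem-reg-alphashuf} is itself an immediate consequence of Theorem~\ref{thm-semiring-morph}, the two arguments are the same up to unpacking. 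Your care about $M_i$ being a single permutation class is warranted and both of your fixes (inspecting the proof of Theorem~\ref{thm-representation}, or splitting a finite perm-closed $M_i$ into its constituent classes and absorbing into the outer union) are valid.
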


From Proposition~\ref{prop-fin-closedsets}, we can immediately deduce:

\begin{corollary}\label{finitePermClosedLanguagesCharacterisationCorollary}
A language is finite and perm-closed if and only if it can be described by some $\alpha$-\textsc{SHUF} expression of
height zero.
\end{corollary}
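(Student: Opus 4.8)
The plan is to reduce the claim to Proposition~\ref{prop-fin-closedsets}, which already characterizes the finite perm-closed languages as precisely those describable by an $\alpha$-\textsc{SHUF} expression that contains no occurrence of the iterated shuffle operator. It therefore suffices to observe that, for any $\alpha$-\textsc{SHUF} expression $R$, the shuffle iteration height of $R$ equals zero if and only if $R$ contains no iterated shuffle operator.

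First I would prove this observation by a straightforward structural induction on $R$. If $R$ is one of the atoms $\emptyset$, $\emptyword$, or $a \in \Sigma$, then by definition its height is $0$ and it trivially contains no iterated shuffle operator, so the equivalence holds. If $R = (R_1 + R_2)$ or $R = (R_1 \shuffle R_2)$, then its height is $\max\{h(R_1), h(R_2)\}$, which is $0$ if and only if $h(R_1) = h(R_2) = 0$; by the induction hypothesis this holds if and only if neither $R_1$ nor $R_2$ contains an iterated shuffle operator, i.e., if and only if $R$ contains none. Finally, if $R = R_1\shufflestar$, then $h(R) = h(R_1) + 1 \geq 1$, so its height is nonzero, and $R$ of course does contain an iterated shuffle operator; hence the stated equivalence holds in this case as well.

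Combining this observation with Proposition~\ref{prop-fin-closedsets} immediately yields the corollary: a language $L$ is finite and perm-closed if and only if $L = L(R)$ for some $\alpha$-\textsc{SHUF} expression $R$ without iterated shuffle, which by the observation is the case if and only if $L = L(R)$ for some $\alpha$-\textsc{SHUF} expression $R$ of height zero.

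There is no genuine obstacle here; the only points requiring a little care are that ``height zero'' is a property of the \emph{expression} rather than of the language it denotes (matching the inductive definition of shuffle iteration height), and that the atoms $\emptyset$ and $\emptyword$ are base cases of height zero, so that the empty language and $\{\emptyword\}$ are correctly covered among the finite perm-closed languages describable at height zero.
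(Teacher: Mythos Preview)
Your proposal is correct and follows exactly the paper's approach: the paper simply states that the corollary is an immediate consequence of Proposition~\ref{prop-fin-closedsets}, and your added structural-induction argument that an $\alpha$-\textsc{SHUF} expression has height zero iff it contains no iterated shuffle operator merely makes explicit what the paper takes for granted.
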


Combining Corollary~\ref{finitePermClosedLanguagesCharacterisationCorollary} with
Theorem~\ref{thm-JFA=permREG} 
and the well-known fact that finiteness of regular expressions can be decided, we immediately obtain the following, as Theorem~\ref{thm-representation} guarantees that the height of $\mathscr{JFA}$ languages is zero or one:

\begin{corollary}
It is decidable, given some JFA and some integer $k$, whether or not this JFA describes 
a language of 
height at most $k$.
\end{corollary}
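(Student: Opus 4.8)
The statement to prove is: it is decidable, given a JFA $M$ and an integer $k$, whether the language $L_{\jfa}(M)$ has (shuffle iteration) height at most $k$. By Theorem~\ref{thm-representation}, every $\mathscr{JFA}$ language admits a representation as in~\eqref{eq-representation}, i.e., it is described by an $\alpha$-\textsc{SHUF} expression of height at most one; moreover the height is zero precisely for finite perm-closed languages by Corollary~\ref{finitePermClosedLanguagesCharacterisationCorollary}. So the only interesting cases are $k = 0$ (and implicitly $k \geq 1$, which is trivially always true and hence decidable by outputting ``yes''). The plan is therefore to reduce the whole question to: decide whether $L_{\jfa}(M)$ is finite.

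First I would observe that for $k \geq 1$ the algorithm simply accepts, since Theorem~\ref{thm-representation} guarantees every $\mathscr{JFA}$ language has height at most one, hence at most $k$. For $k = 0$, Corollary~\ref{finitePermClosedLanguagesCharacterisationCorollary} tells us that $L_{\jfa}(M)$ has height zero if and only if it is finite and perm-closed; but Corollary~\ref{cor-closedsets} already guarantees every $\mathscr{JFA}$ language is perm-closed, so the condition collapses to: $L_{\jfa}(M)$ is finite. Thus the remaining task is to decide finiteness of a JFA language. Here I would invoke Theorem~\ref{permCharacterisationTheorem}, which says $\mathscr{JFA} = \perm(\mathscr{REG})$, together with the standard construction that from a finite machine $M$ produces a regular expression (or finite automaton) $R'$ with $\perm(L(R')) = L_{\jfa}(M)$ (this is exactly Lemma~\ref{lem-reg-alphashuf} read through Theorem~\ref{thm-JFA=permREG}, or equivalently the translation used in Example~\ref{ex:2}). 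Since $\perm$ preserves cardinality type --- $L(R')$ is finite if and only if $\perm(L(R'))$ is finite (each $\perm(w)$ is finite, and there are only finitely many Parikh-classes among finitely many words, while conversely an infinite set of Parikh-classes forces $L(R')$ infinite) --- finiteness of $L_{\jfa}(M)$ reduces to finiteness of the regular language $L(R')$, which is a classical decidable problem (e.g.\ check for a reachable, productive cycle in an NFA for $R'$).

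Putting this together, the decision procedure is: on input $(M,k)$, if $k \geq 1$ output ``yes''; otherwise ($k = 0$) construct a finite automaton $A$ with $L_{\fa}(A)$ regular and $\perm(L_{\fa}(A)) = L_{\jfa}(M)$, decide whether $L_{\fa}(A)$ is finite by the standard cycle test, and output ``yes'' iff it is. Correctness follows by chaining Theorem~\ref{thm-representation}, Corollary~\ref{finitePermClosedLanguagesCharacterisationCorollary}, Corollary~\ref{cor-closedsets}, Theorem~\ref{permCharacterisationTheorem} and the remark on finiteness of regular expressions. The main (really the only) subtlety is the equivalence ``$L_{\fa}(A)$ finite $\iff$ $L_{\jfa}(M)$ finite'': the forward direction is immediate since each permutation class is finite, and the backward direction uses that infinitely many distinct words in $L_{\fa}(A)$ of unbounded length yield infinitely many distinct Parikh vectors (over a fixed alphabet, words of bounded Parikh vector have bounded length), hence $\perm(L_{\fa}(A))$ contains words of unbounded length and is infinite. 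Everything else is a direct appeal to results already established in the excerpt.
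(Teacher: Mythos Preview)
Your proof is correct and follows essentially the same approach as the paper: use Theorem~\ref{thm-representation} to handle $k\geq 1$ trivially, reduce $k=0$ via Corollary~\ref{finitePermClosedLanguagesCharacterisationCorollary} and Corollary~\ref{cor-closedsets} to deciding finiteness, and then appeal to decidability of finiteness for regular languages through Theorem~\ref{thm-JFA=permREG}. Two minor simplifications: you need not construct a new automaton $A$, since the given finite machine $M$ itself satisfies $L_{\jfa}(M)=\perm(L_{\fa}(M))$; and your ``backward'' direction is immediate from the inclusion $L_{\fa}(M)\subseteq\perm(L_{\fa}(M))$, so the Parikh-vector argument is unnecessary.
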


Notice that we have formulated, in this corollary, the shuffle analogue of the famous star height problem, 
%
which 
has been a major open problem for regular languages~\cite{Has88}.
Recall that Eggan's Theorem \cite{Egg63} relates the star height of a regular language to its so-called cycle rank, which formalizes loop-nesting in NFA's. Again, the characterization theorems that we 
derived allow us to conclude that, in short, for any $L\in\mathscr{JFA}$ there exists some finite machine $M$ of cycle rank at most one such that $L_{\jfa}(M)=L$.

There is actually yet another way to derive the Representation Theorem. Namely, Eilenberg and Schützenberger derived in \cite{EilSch69} regular expressions over free commutative monoids. As also mentioned earlier, this can be again re-interpreted as kind of regular expressions dealing with Parikh vectors. In~\cite{EilSch69}, the connection to the definition of semilinear sets is also drawn, although with a different method and background.

\section{Comparing $\mathscr{JFA}$ and $\mathscr{REG}$}\label{sec:compareJFAREG}

By the results of Meduna and Zemek, we know that $\mathscr{JFA}$ and $\mathscr{REG}$ are two incomparable families of languages.
Above, we already derived several characterizations of 
$\mathscr{JFA}\cap \mathscr{FIN}\subset \mathscr{REG}$.
Let us first explicitly state a characterization of $\mathscr{JFA}\cap \mathscr{REG}$ that can be easily deduced from our previous results.

\begin{proposition}
$L\in \mathscr{JFA}\cap \mathscr{REG}$ iff $L\in \mathscr{REG}$ and $L$ is perm-closed.
\end{proposition}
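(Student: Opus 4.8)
The statement is almost definitional once one recalls the characterizations already established. The plan is to prove both directions of the equivalence by citing the earlier results on $\mathscr{JFA}$.

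First I would handle the forward direction. Suppose $L \in \mathscr{JFA} \cap \mathscr{REG}$. Then in particular $L \in \mathscr{REG}$, which is the first conjunct on the right-hand side. Moreover, $L \in \mathscr{JFA}$, so by Corollary~\ref{cor-closedsets} (which asserts that every $\mathscr{JFA}$ language is perm-closed), $L$ is perm-closed. Hence $L \in \mathscr{REG}$ and $L$ is perm-closed, as desired. This direction is immediate.

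For the converse, suppose $L \in \mathscr{REG}$ and $L$ is perm-closed. Since $L$ is regular, we have $\perm(L) \in \perm(\mathscr{REG}) = \mathscr{JFA}$ by Theorem~\ref{thm-JFA=permREG}. But $L$ is perm-closed, which by definition means $\perm(L) = L$; therefore $L \in \mathscr{JFA}$. Combined with the hypothesis $L \in \mathscr{REG}$, we conclude $L \in \mathscr{JFA} \cap \mathscr{REG}$.

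There is essentially no obstacle here: the only ingredients are Corollary~\ref{cor-closedsets} for one inclusion and Theorem~\ref{thm-JFA=permREG} (the $\perm(\mathscr{REG})$ part) together with the definition of perm-closedness for the other. The proof is a two-line argument in each direction; the ``main step'' is merely recalling that $\perm(\mathscr{REG}) = \mathscr{JFA}$, which was already done. One could also phrase the whole thing as: $L \in \mathscr{JFA}$ iff $L = \perm(R)$ for some regular $R$ (Theorem~\ref{thm-JFA=permREG}), and a regular language $L$ equals $\perm(R)$ for some regular $R$ exactly when $L = \perm(L)$ — the nontrivial implication using that $L$ itself can serve as the witness $R$ when $L$ is perm-closed, and Corollary~\ref{cor-closedsets} for the reverse.
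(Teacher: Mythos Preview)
Your proof is correct and matches exactly what the paper intends: the proposition is stated without a written proof, introduced as something that ``can be easily deduced from our previous results,'' and the previous results in question are precisely Corollary~\ref{cor-closedsets} and Theorem~\ref{thm-JFA=permREG}, which you invoke in the expected way.
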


We mention this, as the class 
$\mathscr{JFA}\cap \mathscr{REG}$ can be also characterized as follows 
according to
 Ehrenfeucht, Haussler and Rozenberg~\cite{EhrHauRoz83}.
 Namely, they describe this class of (what they call) commutative regular languages as finite unions of periodic languages.
 We are not giving a definition of this notion here, but rather 
 state an immediate consequence of their characterization in our terminology.

 \LV{According to
 Ehrenfeucht, Haussler and Rozenberg~\cite{EhrHauRoz83},
 a sequence of vectors 
 $$\rho = v_0,v_1,\dots, v_{|\Sigma|} \in \mathbb{N}^{|\Sigma|}$$
 is called a \emph{base} (with respect to $\Sigma$) iff, for all
 $i,j\in\{1,\dots,|\Sigma|\}$, $v_i(j)=0$
 if $i\neq j$. The \emph{$\rho$-set}, written
 $\Theta(\rho)$, of a base $\rho$ is defined by
 $$\Theta(\rho)=\{v\in\mathbb{N}^{|\Sigma|}\suchthat \exists \ell_1,\dots,\ell_{|\Sigma|}: v=v_0+\sum_{i=1}^{|\Sigma|}\ell_i\cdot v_i\}\,.
 $$
 $\rho$-sets are linear sets, and they are in one-to-one correspondence with their bases in the following sense:
 \begin{lemma}[\cite{EhrHauRoz83}]
 Let $\rho,\rho'$ be bases with respect to $\Sigma$. Then, $\rho=\rho'$ if and only if 
 $\Theta(\rho)=\Theta(\rho')$.
 \end{lemma}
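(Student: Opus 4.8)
The plan is to prove the two implications of the biconditional separately. That $\rho = \rho'$ implies $\Theta(\rho) = \Theta(\rho')$ is immediate from the definition of the $\rho$-set, so the whole content lies in the converse, which I would establish by showing that the set $\Theta(\rho)$ determines each of the vectors $v_0, v_1, \dots, v_{|\Sigma|}$ making up the base $\rho$.

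So assume $\Theta(\rho) = \Theta(\rho')$, where $\rho = v_0, v_1, \dots, v_{|\Sigma|}$ and $\rho' = v_0', v_1', \dots, v_{|\Sigma|}'$. First I would recover the offset $v_0$. Since every element of $\Theta(\rho)$ has the form $v_0 + \sum_{i=1}^{|\Sigma|} \ell_i v_i$ with $\ell_i \in \mathbb{N}$ and each $v_i \in \mathbb{N}^{|\Sigma|}$, every element of $\Theta(\rho)$ is $\geq v_0$ coordinatewise; moreover $v_0 \in \Theta(\rho)$, obtained by taking all $\ell_i = 0$. Hence $v_0$ is the (attained) coordinatewise minimum of $\Theta(\rho)$, and likewise $v_0'$ is that of $\Theta(\rho')$. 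As the two sets are equal, $v_0 = v_0'$.

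Next I would recover each period vector $v_i$ for $1 \leq i \leq |\Sigma|$. Here the base condition is crucial: since $v_j(i) = 0$ for $j \neq i$, among $v_1, \dots, v_{|\Sigma|}$ only $v_i$ affects coordinate $i$, so the set of values taken by the $i$-th coordinate over $\Theta(\rho)$ is exactly $\{\, v_0(i) + \ell\, v_i(i) : \ell \in \mathbb{N}\,\}$. Running the same computation for $\rho'$ and subtracting $v_0(i) = v_0'(i)$, I obtain
\[
\{\, \ell\, v_i(i) : \ell \in \mathbb{N}\,\} \;=\; \{\, \ell\, v_i'(i) : \ell \in \mathbb{N}\,\}.
\]
If $v_i(i) = 0$, the left-hand set is $\{0\}$, which forces $v_i'(i) = 0$; otherwise $v_i(i)$ is the least strictly positive element of the left-hand set, hence also of the right-hand set, so $v_i(i) = v_i'(i)$. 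Either way $v_i(i) = v_i'(i)$, and since by the base condition both $v_i$ and $v_i'$ vanish in every coordinate other than $i$, we get $v_i = v_i'$. Together with $v_0 = v_0'$ this yields $\rho = \rho'$.

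There is no substantial obstacle in this argument; it merely exploits the rigidly ``diagonal'' shape of the period vectors of a base, which decouples the behaviour of $\Theta(\rho)$ on the individual coordinates. The only point deserving a word of care is that the coordinatewise minimum of $\Theta(\rho)$ is genuinely attained --- so that it equals $v_0$ and not merely a lower bound for it --- which holds precisely because $v_0$ itself belongs to $\Theta(\rho)$.
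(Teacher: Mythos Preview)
Your argument is correct. The forward implication is trivial, and for the converse you correctly identify $v_0$ as the coordinatewise minimum of $\Theta(\rho)$ (attained because $v_0\in\Theta(\rho)$), and then use the diagonal shape of the period vectors to recover each $v_i(i)$ from the projection of $\Theta(\rho)$ onto the $i$-th coordinate. Every step is sound.

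As for comparison: the paper does not supply its own proof of this lemma; it is quoted from \cite{EhrHauRoz83} without argument. Your self-contained proof is therefore more than the paper offers, and it is essentially the natural one: the base condition forces the coordinates to decouple, so the only possible proof strategy is the coordinatewise extraction you carry out.
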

 Now, a language $L\subseteq\Sigma^*$ is called \emph{periodic} iff it is perm-closed and there is a base $\rho$  
 with respect to $\Sigma$ such that $\psi_\Sigma(L)=\Theta(\rho)$.
 
 \begin{proposition}
 Let $L\subseteq\Sigma^*$.
 $L$ is periodic if and only if, for some word $w\in \Sigma^*$ and some function $n:\Sigma\to \mathbb{N}$, 
 $$L=\perm(w)\shuffle \left(\bigcup_{a\in\Sigma}a^{n(a)}\right)\shufflestar\,.$$
 \end{proposition}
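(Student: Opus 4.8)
The plan is to unwind the definition of \emph{periodic} via the notion of a base $\rho$, and to show that the linear set $\Theta(\rho)$ is exactly the Parikh image of a language of the stated shuffle form, using Proposition~\ref{prop-perm-Parikh} (that permutation-equivalence is the same as Parikh-equivalence) together with the fact that $\psi_\Sigma$ acts as a semiring morphism from $(2^{\Sigma^*},\cup,\shuffle,\emptyset,\{\emptyword\})$ to $(2^{\mathbb{N}^{|\Sigma|}},\cup,+,\emptyset,\{\vec 0\})$.

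\textbf{Key steps.} First I would recall that, by definition, $L$ is periodic if and only if $L$ is perm-closed and there is a base $\rho = v_0,v_1,\dots,v_{|\Sigma|}$ with $\psi_\Sigma(L)=\Theta(\rho)$. Fix an enumeration $a_1,\dots,a_{|\Sigma|}$ of $\Sigma$, so that each $v_i$ for $i\ge 1$ has a single nonzero coordinate, namely the $i$-th, say $v_i = n_i\cdot e_i$ with $n_i = v_i(i)\in\mathbb{N}$; this defines the function $n:\Sigma\to\mathbb{N}$ by $n(a_i)=n_i$. Let $w$ be any word with $\psi_\Sigma(w)=v_0$ (such a $w$ exists and any two choices are Parikh-equivalent). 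Now observe that
\[
\psi_\Sigma\!\left(\perm(w)\shuffle\Bigl(\bigcup_{a\in\Sigma}a^{n(a)}\Bigr)\shufflestar\right)
= \psi_\Sigma(w) + \Bigl(\sum_{i=1}^{|\Sigma|}\mathbb{N}\cdot v_i\Bigr)
= \{v_0 + \textstyle\sum_{i=1}^{|\Sigma|}\ell_i v_i \suchthat \ell_i\in\mathbb{N}\} = \Theta(\rho),
\]
where the first equality uses that $\psi_\Sigma$ is a semiring morphism respecting iterated shuffle (Theorem~\ref{thm-semiring-morph} / the morphism statement preceding Proposition~\ref{prop-perm-Parikh}), that $\psi_\Sigma(\perm(w))=\{\psi_\Sigma(w)\}$, that $\psi_\Sigma(a_i^{n_i})=v_i$, and that $\psi_\Sigma$ of an iterated shuffle of a finite union is the $\mathbb{N}$-span of the image. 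Since both sides of the claimed equivalence are perm-closed languages with the same Parikh image, Proposition~\ref{prop-perm-Parikh} gives $L = \perm(w)\shuffle(\bigcup_{a\in\Sigma}a^{n(a)})\shufflestar$. For the converse direction one reads the same chain of equalities backwards: given $w$ and $n$, the language $L:=\perm(w)\shuffle(\bigcup_{a\in\Sigma}a^{n(a)})\shufflestar$ is clearly perm-closed, and setting $v_0=\psi_\Sigma(w)$ and $v_i=n(a_i)\cdot e_i$ yields a base $\rho$ with $\psi_\Sigma(L)=\Theta(\rho)$, so $L$ is periodic.

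\textbf{Main obstacle.} The only genuinely delicate point is the bookkeeping that turns the abstract ``base'' data $v_0,v_1,\dots,v_{|\Sigma|}$ into the concrete word $w$ and function $n$, and conversely; once the enumeration of $\Sigma$ is fixed this is a routine unraveling of Definition~\ref{linearsemilinear}, but one must be careful that the defining condition of a base ($v_i(j)=0$ for $i\neq j$, $i,j\ge 1$) is exactly what makes $\bigcup_{a\in\Sigma}a^{n(a)}$ the right iterator — a general linear set would need genuinely multi-letter period words and would fall outside this normal form. No deep step is needed beyond invoking that $\psi_\Sigma$ is a semiring morphism respecting iterated shuffle, which is already established in the excerpt.
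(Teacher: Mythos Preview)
Your proposal is correct and follows essentially the same approach as the paper: pick $w$ with $\psi_\Sigma(w)=v_0$, set $n(a_i)=v_i(i)$, verify that the Parikh image of the shuffle expression equals $\Theta(\rho)$, and conclude by perm-closedness; the converse reads the same correspondence backwards. The paper's proof is terser and does not explicitly invoke the semiring-morphism machinery, but the argument is identical in substance.
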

  
 \begin{proof}Let $\Sigma=\{a_1,\dots,a_{|\Sigma|}\}$. 
If $L$ is periodic, then $L=\psi_\Sigma^{-1}(\Theta(\rho))$ for some base $\rho=v_0,v_1,\dots,v_{|\Sigma|}$. Select $w\in\psi_\Sigma^{-1}(v_0)$ and set $n(a_i)=v_i(i)$.
Then, $L = \left(\perm(w)\shuffle \left(\bigcup_{a\in\Sigma}a^{n(a)}\right)\shufflestar\right)$. 
Conversely, given $L=\perm(w)\shuffle \left(\bigcup_{a\in\Sigma}a^{n(a)}\right)\shufflestar\,,$ one can see that 
$v_0=\psi_\Sigma(w)$, $v_i(i)=n(a_i)$ and
$v_i(j)=0$ for $i\neq j$ defines a base $\rho$ such that $\psi_\Sigma(L)=\Theta(\rho)$.

 \end{proof}
 
 Ehrenfeucht, Haussler and Rozenberg~\cite{EhrHauRoz83} have shown a characterization theorem that easily yields the following result:
 \begin{corollary}A language $L$ is regular and perm-closed if and only if $L$ is the finite union of periodic languages.
 \end{corollary}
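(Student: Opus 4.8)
The plan is to handle the two implications separately: the \emph{if} direction directly, and the \emph{only if} direction by reduction to the characterization theorem of Ehrenfeucht, Haussler and Rozenberg cited above.

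For the \emph{if} direction, suppose $L = L_1 \cup \dots \cup L_k$ with each $L_j$ periodic. By the preceding Proposition, each $L_j$ has the form $\perm(w_j)\shuffle\left(\bigcup_{a\in\Sigma}a^{n_j(a)}\right)\shufflestar$ for some $w_j\in\Sigma^*$ and some $n_j\colon\Sigma\to\mathbb{N}$. Here $\perm(w_j)$ is finite, hence regular, and by Rule~4 of Proposition~\ref{laws:1} we have $\left(\bigcup_{a\in\Sigma}a^{n_j(a)}\right)\shufflestar = \bigshuffle_{a\in\Sigma}\{a^{n_j(a)}\}\shufflestar$, where each factor $\{a^{n_j(a)}\}\shufflestar = \{a^{m\, n_j(a)}\suchthat m\geq 0\}$ is regular; since $\mathscr{REG}$ is closed under shuffle, $L_j\in\mathscr{REG}$. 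Moreover each $L_j$ is perm-closed by the definition of \emph{periodic}, and as both $\mathscr{REG}$ and the class of perm-closed languages are closed under finite union, $L$ is regular and perm-closed.

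For the \emph{only if} direction, assume $L$ is regular and perm-closed. The substantive input is the characterization theorem of~\cite{EhrHauRoz83}, which yields that $\psi_\Sigma(L)$ is a \emph{finite} union of $\rho$-sets $\Theta(\rho_1),\dots,\Theta(\rho_k)$ for suitable bases $\rho_i$ with respect to $\Sigma$. For each $i$, write $\rho_i = v_0^i, v_1^i,\dots,v_{|\Sigma|}^i$, pick a word $w_i \in \psi_\Sigma^{-1}(v_0^i)$, and define $n_i\colon\Sigma\to\mathbb{N}$ by $n_i(a_j) = v_j^i(j)$; the preceding Proposition then shows that $\psi_\Sigma^{-1}(\Theta(\rho_i)) = \perm(w_i)\shuffle\left(\bigcup_{a\in\Sigma}a^{n_i(a)}\right)\shufflestar$ is periodic. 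Since $L$ is perm-closed it is determined by its Parikh image (Proposition~\ref{prop-perm-Parikh}), so $L = \psi_\Sigma^{-1}(\psi_\Sigma(L)) = \bigcup_{i=1}^k \psi_\Sigma^{-1}(\Theta(\rho_i))$, exhibiting $L$ as a finite union of periodic languages.

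The only genuine obstacle is that this argument leans on the Ehrenfeucht--Haussler--Rozenberg characterization theorem, which is not reproved here: its nontrivial content is precisely that for a \emph{regular} perm-closed language the semilinear Parikh image can be rewritten with period vectors that are multiples of unit vectors (so each linear piece becomes a $\rho$-set), a property that fails for general $\mathscr{JFA}$ languages and is therefore \emph{not} delivered by the Representation Theorem~\ref{thm-representation} alone (whose periods $\perm(N_i)$ are arbitrary finite perm-closed sets). A self-contained alternative would be to prove directly that regularity of $L$ forces $\psi_\Sigma(L)$ to be a finite union of $\rho$-sets --- for instance by analyzing the loop structure of a DFA for $L$ and using commutativity to diagonalize the cycles onto individual letters --- but this essentially amounts to reproving their theorem.
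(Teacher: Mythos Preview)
Your proposal is correct and follows essentially the same route as the paper: both directions ultimately rest on the Ehrenfeucht--Haussler--Rozenberg characterization (their Theorem~6.5), and you are explicit that this is the substantive ingredient for the \emph{only if} direction. The only minor difference is that for the \emph{if} direction the paper simply cites that same theorem again, whereas you supply a short direct argument (via Rule~4 of Proposition~\ref{laws:1} and closure of $\mathscr{REG}$ under shuffle) that each periodic language is regular and perm-closed; this is a bit more self-contained but otherwise equivalent.
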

 \begin{proof}If $L$ is regular and perm-closed, then $L$ is  the finite union of periodic languages according to \cite[Theorem 6.5]{EhrHauRoz83}.
 Conversely, as the finite union of perm-closed languages is perm-closed, we can conclude from \cite[Theorem 6.5]{EhrHauRoz83} that  the finite union of periodic languages is regular and perm-closed.
 \end{proof}
 
 The last two results immediately yield Theorem~\ref{thm-representation-2}.
 }
 
\begin{theorem}
\label{thm-representation-2}Let $L\subseteq\Sigma^*$. Then, 
$L \in \mathscr{JFA}\cap \mathscr{REG}$ if and only if there exists a number $n \geq 1$, 
words $w_i$ and finite sets $N_i$ for $1 \leq i \leq n$,
where each $N_i$ is given as $\bigcup_{a\in\Sigma_i}a^{n_i(a)}$
for some $\Sigma_i\subseteq \Sigma$ and some $n_i:\Sigma_i\to \mathbb{N}$,
so that the following representation is valid.
\begin{equation*} 
L = \bigcup^{n}_{i=1} \perm(w_i) \shuffle (\perm(N_i))\shufflestar
\end{equation*}
\end{theorem}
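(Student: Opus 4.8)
The plan is to derive Theorem~\ref{thm-representation-2} directly from the two immediately preceding results: the characterization of periodic languages as $\perm(w)\shuffle\bigl(\bigcup_{a\in\Sigma}a^{n(a)}\bigr)\shufflestar$, and the Corollary stating that a language is regular and perm-closed if and only if it is a finite union of periodic languages. Combined with the Proposition that $L\in\mathscr{JFA}\cap\mathscr{REG}$ iff $L\in\mathscr{REG}$ and $L$ is perm-closed, these three facts essentially assemble into the statement, so the proof is a short chain of equivalences rather than a new construction.

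Concretely, I would argue as follows. Suppose first that $L\in\mathscr{JFA}\cap\mathscr{REG}$. By the Proposition, $L$ is regular and perm-closed, hence by the Corollary $L=\bigcup_{i=1}^n L_i$ for periodic languages $L_i$. Applying the Proposition characterizing periodic languages to each $L_i$, we obtain a word $w_i\in\Sigma^*$, a subset $\Sigma_i\subseteq\Sigma$ (namely those letters $a$ with $n_i(a)\geq 1$; letters with $n_i(a)=0$ contribute $a^0=\emptyword$ and may be dropped harmlessly), and a map $n_i\colon\Sigma_i\to\mathbb{N}$ such that, setting $N_i=\bigcup_{a\in\Sigma_i}a^{n_i(a)}$, we have $L_i=\perm(w_i)\shuffle(\perm(N_i))\shufflestar$. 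Note $N_i$ is already perm-closed (each $a^{n_i(a)}$ is), so $\perm(N_i)=N_i$; I would keep the $\perm(N_i)$ notation to match the displayed equation. Taking the union over $i$ gives the required representation. For the converse, given such a representation, each summand $\perm(w_i)\shuffle(\perm(N_i))\shufflestar$ is periodic by the same Proposition, so $L$ is a finite union of periodic languages, hence regular and perm-closed by the Corollary, hence in $\mathscr{JFA}\cap\mathscr{REG}$ by the Proposition (using that perm-closed regular languages are exactly $\mathscr{JFA}\cap\mathscr{REG}$, and that $L\in\mathscr{JFA}$ also follows from Theorem~\ref{thm-representation} since this representation is a special case of~\eqref{eq-representation}).

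There is essentially no hard step here; the only points requiring a word of care are bookkeeping ones. First, the passage between the ``full'' base formulation (where $\rho$ ranges over all of $\Sigma$) and the ``restricted'' formulation with $\Sigma_i\subseteq\Sigma$: one must observe that a factor $a^{0}=\emptyword$ is the neutral element of shuffle and of iterated shuffle, so removing such factors does not change the language, which justifies shrinking $\Sigma$ to $\Sigma_i$. Second, one should make the number $n\geq 1$ legitimate even when $L=\emptyset$ or $L=\{\emptyword\}$; this is handled by allowing, e.g., $w_i$ arbitrary with a suitable trivial $N_i$, or by noting these degenerate cases already fit. If anything is the ``main obstacle'' it is simply making sure the quantifier structure in the theorem statement (existence of $n$, the $w_i$, the $\Sigma_i$, the $n_i$) lines up verbatim with what the periodic-language characterization produces — i.e., a careful but routine translation of notation.

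Thus the proof is essentially: ``This follows by combining the characterization of $\mathscr{JFA}\cap\mathscr{REG}$ as the perm-closed regular languages, the result of Ehrenfeucht, Haussler and Rozenberg that these are precisely the finite unions of periodic languages, and the explicit description of periodic languages given above, after discarding trivial factors $a^0$.''
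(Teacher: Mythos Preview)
Your proposal is correct and follows essentially the same route as the paper: the paper simply states that the Proposition characterizing periodic languages and the Corollary that regular perm-closed languages are exactly the finite unions of periodic languages ``immediately yield'' Theorem~\ref{thm-representation-2}. You have spelled out this combination in detail, including the minor bookkeeping point about passing between $\Sigma$ and $\Sigma_i$ by dropping trivial $a^0$ factors, which the paper leaves implicit.
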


Let us finally mention that yet another characterization of $\mathscr{JFA}\cap \mathscr{REG}$  was derived in \cite[Theorem 3]{LatRoz84}. \LV{Moreover, a relaxed version of the notion of commutativity (of languages) allows a characterization of $\mathscr{REG}$, as shown by Reutenauer~\cite{Reu81}.}
 We would also like to point the reader to \cite{GomAlv2008}, where not only learnability questions of this class of languages were discussed, but also two further normal form representations of $\mathscr{JFA}\cap \mathscr{REG}$ were mentioned.

Next, we consider the problem to decide, for a given JFA or NFA, whether it accepts a language from $\mathscr{JFA}\cap \mathscr{REG}$. This is equivalent to the task of deciding whether a given JFA accepts a regular language\footnote{This question was explicitly asked in the Summary of Open Problems section (ii) of \cite{MedZem2014}.}
or whether a given NFA accepts a commutative language. We shall show that both these problems are co-$\npclass$-hard, even if restricted to automata with binary alphabets. 
Note that for a language over a one-letter alphabet, regularity is equivalent to membership in $\mathscr{JFA}$, so the problem becomes trivial.
\par
We will actually use nearly the same construction for both hardness results, which is based on~\cite{StoMey73}, in which Stockmeyer and Meyer showed how to construct for each 3SAT formula $\phi$ a regular expression $E_\phi$ over the unary alphabet $\Sigma=\{a\}$ such that 
$$\phi\text{ is satisfiable} \text{ if and only if } L(E_\phi)\neq \Sigma^*\,.$$
We will also make use of another property 
$$\phi\text{ is satisfiable} \text{ if and only if } \Sigma^*\setminus L(E_\phi) \text{ is infinite}\,.  $$
%

\begin{theorem}\label{thm-non-REG-complexity}
The non-regularity problem for JFA is $\npclass$-hard, even for binary alphabets.
\end{theorem}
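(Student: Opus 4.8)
The goal is to reduce from the complement of 3SAT (or equivalently, use that 3SAT is $\npclass$-hard and membership in a co-$\npclass$-hard problem is $\npclass$-hard). The starting point is the Stockmeyer--Meyer construction recalled just above: given a 3SAT formula $\phi$, we can build in polynomial time a regular expression $E_\phi$ over the unary alphabet $\{a\}$ with $\phi$ satisfiable iff $L(E_\phi) \neq \{a\}^*$, and moreover $\phi$ satisfiable iff $\{a\}^* \setminus L(E_\phi)$ is infinite. I would first convert $E_\phi$ into an NFA, which is still polynomial in size. The difficulty is that $L(E_\phi)$ is always regular, so I need a second coordinate (a second letter) whose interaction with the first one turns ``$\phi$ unsatisfiable'' into ``non-regular''.

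\textbf{The construction.} Work over the binary alphabet $\{a,b\}$. From $E_\phi$ (or its NFA) I would build a JFA $M_\phi$ accepting the language
\[
L_\phi \;=\; \bigl(L(E_\phi)\shuffle \{b\}^{\shuffle,*}\bigr)\;\cup\;\bigl(\{a\}^*\shuffle\{b\}\bigr),
\]
or a close variant — the point being that over $\{a,b\}$, the language $\{a\}^i b^j$ with $i$ unrestricted and $j\ge 1$ is a JFA language (it is perm-closed and $\mathscr{PSL}$), so adding it ``fills in'' all Parikh vectors with at least one $b$. Thus $\psi(L_\phi)$ contains $\{(i,j): j\ge 1\}$ together with $\{(i,0): a^i\in L(E_\phi)\}$. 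Since $L_\phi$ is perm-closed and its Parikh image is semilinear, $L_\phi\in\mathscr{JFA}$, and a JFA for it is built in polynomial time from the NFA for $E_\phi$ plus a fixed gadget for $\{a\}^*\shuffle\{b\}$. Now $L_\phi$ is regular iff its Parikh image, restricted to the line $j=0$, is an eventually periodic subset of $\mathbb{N}$ — and by the Stockmeyer--Meyer property, $\{a\}^*\setminus L(E_\phi)$ is finite (equivalently $L(E_\phi)$ is cofinite, hence its indicator on $\mathbb N$ eventually constant) exactly when $\phi$ is unsatisfiable. So $L_\phi$ is regular iff $\phi$ is unsatisfiable, giving co-$\npclass$-hardness, i.e.\ $\npclass$-hardness of the non-regularity problem, over the binary alphabet.

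\textbf{Key steps in order.} (1) Recall/cite the Stockmeyer--Meyer expression $E_\phi$ and the ``cofiniteness'' variant of its correctness. (2) Show $\{a\}^*\shuffle\{b\}$ (and hence the full $L_\phi$) is in $\mathscr{JFA}$, e.g.\ by exhibiting an $\alpha$-\textsc{SHUF} expression or by checking perm-closedness plus semilinearity and invoking Theorem~\ref{permCharacterisationTheorem}; produce a polynomial-size JFA for $L_\phi$ from the NFA for $E_\phi$. (3) Characterize when $L_\phi$ is regular: since $L_\phi$ is commutative, use that a commutative language over $\{a,b\}$ is regular iff its Parikh image is a ``regular'' (eventually-periodic-in-each-direction, finite-union-of-periodic-sets) subset of $\mathbb N^2$; reduce this to the unary question about $L(E_\phi)$. (4) Combine: $L_\phi$ regular $\iff$ $\{a\}^*\setminus L(E_\phi)$ finite $\iff$ $\phi$ unsatisfiable. (5) Note the reduction is polynomial and the alphabet is binary; also remark why one letter does not suffice (over a unary alphabet $\mathscr{REG}=\mathscr{JFA}$).

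\textbf{Main obstacle.} The delicate point is step (3): pinning down exactly which commutative (binary) languages are regular and proving that, for the specific $L_\phi$, regularity collapses to the cofiniteness of $L(E_\phi)$. One must be careful that the ``$b$-part'' $\{a\}^*\shuffle\{b\}$ does not accidentally make $L_\phi$ regular regardless of $\phi$ (it must not swallow the $j=0$ line), and that no clever regular expression can describe $L_\phi$ when $L(E_\phi)$ is not cofinite — this is where the Ehrenfeucht--Haussler--Rozenberg characterization of commutative regular languages (finite unions of periodic languages, cf.\ Theorem~\ref{thm-representation-2}) does the real work: a finite union of periodic sets, intersected with the line $j=0$, is eventually periodic, and an eventually periodic subset of $\mathbb N$ that contains a co-infinite non-eventually-periodic set is impossible. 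Getting this argument airtight, rather than the automata gymnastics, is the crux.
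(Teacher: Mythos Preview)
There is a genuine gap: your language $L_\phi$ is \emph{always} regular, regardless of~$\phi$, so the reduction collapses. The point you miss is that $L(E_\phi)$ is a unary \emph{regular} language (it comes from a regular expression), hence $\{i:a^i\in L(E_\phi)\}$ is eventually periodic for every~$\phi$, satisfiable or not. Consequently $L(E_\phi)\shuffle\{b\}^{\shuffle,*}$ is a commutative regular language (a finite union of periodic languages in the sense of Theorem~\ref{thm-representation-2}), and your second summand---whether you mean words with exactly one~$b$ or with at least one~$b$---is also commutative regular. Their union is therefore regular, and your step~(3) claim ``$L_\phi$ regular $\iff$ $L(E_\phi)$ cofinite'' is simply false: the left side always holds. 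The Stockmeyer--Meyer property distinguishes $L(E_\phi)=\{a\}^*$ from $L(E_\phi)\subsetneq\{a\}^*$ with infinite complement, but both of these are ordinary (eventually periodic) unary regular languages, and your construction cannot tell them apart.

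The missing idea is that the second summand must itself be a \emph{non-regular} JFA language, so that non-regularity can survive when $\phi$ is satisfiable. The paper takes
\[
L_\phi=\bigl(\{b\}^{\shuffle,*}\shuffle L(E_\phi)\bigr)\;\cup\;(\{a\}\shuffle\{b\})^{\shuffle,*},
\]
whose second part is the diagonal $\{w:|w|_a=|w|_b\}$. If $\phi$ is unsatisfiable, the first part already equals $\{a,b\}^*$ and swallows the diagonal. If $\phi$ is satisfiable, infinitely many diagonal points $(k,k)$ with $a^k\notin L(E_\phi)$ are contributed only by the second part; intersecting $L_\phi$ with the regular language $\{b\}^{\shuffle,*}\shuffle(\{a\}^*\setminus L(E_\phi))$ isolates $\{w:|w|_a=|w|_b,\ a^{|w|_a}\notin L(E_\phi)\}$, which has infinitely many Nerode classes. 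Your ``filling in all $j\ge 1$'' idea is exactly what must be avoided: it makes the diagonal redundant and kills the non-regularity.
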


\begin{proof}
We present a reduction from 3SAT. Let $\phi$ be a 3SAT formula and let $E_\phi$ be the regular expression over $\{a\}$ with the properties described above. Let
$$L_\phi=(\{b\}\shufflestar \shuffle L(E_\phi)) \cup  (\{a\} \shuffle \{b\})\shufflestar\,.$$
Note that a finite machine $M_\phi$ with $L_{\jfa}(M_\phi) = L_\phi$ can be easily obtained by transforming $(b \shufflestar \shuffle \widehat{E}_\phi) \cup  (a \shuffle b)\shufflestar$ (where $\widehat{E}_\phi$ is obtained from $E_\phi$ by replacing all catenations and Kleene stars by shuffles and iterated shuffles, respectively) into a finite machine (treating shuffle and iterated shuffle as catenation and Kleene star, respectively) by a standard contruction, e.\,g., the Thompson NFA construction.
\par
Clearly, if $\phi$ is unsatisfiable, then 
$$L_\phi=L((\{b\}\shufflestar\shuffle \{a\}\shufflestar) \cup (\{a\}\shuffle \{b\})\shufflestar) = \{a,b\}^*\,;$$
and thus $L_\phi$ is regular. 
However, if $\phi$ is satisfiable, then $L' = \{a\}^*\setminus L(E_\phi)$ is an infinite regular set. 
Assume, for the sake of contradiction, that  
$L_\phi$ is 
regular. 
Then also 
\begin{align*}
&L_\phi \cap ( \{b\}\shufflestar \shuffle L')\\
=\: &L((\{a\} \shuffle \{b\})\shufflestar) \cap (\{b\}\shufflestar \shuffle L')\\
=\: &\{w\in\{a,b\}^*: |w|_a=|w|_b\land a^{|w|_a}\notin L(E_\phi) \}
\end{align*}
would be regular. However, for every $k, k' \in \mathbb{N}$ with $k \neq k'$ and $a^k, a^{k'} \notin L(E_\phi)$, the words $a^k$ and $a^{k'}$ are not Nerode-equivalent with respect to $L_\phi \cap ( \{b\}\shufflestar \shuffle L')$. Thus, there are infinitely many equivalence classes of this Nerode relation, which is a contradiction.
\end{proof}

Notice that the regularity problem 
is decidable, as shown in a far more general context by Sakarovitch~\cite{Sak92}, referring to older papers of Ginsburg and Spanier~\cite{GinSpa64,GinSpa66}. It would be also interesting to better understand the precise complexities for the regularity problems mentioned by Sakarovitch in the context of trace theory.

It would be of course interesting to determine the exact complexity status of this problem.
Currently, we only know about the mentioned decidability result, which in itself is not completely natural, as similar problems like the universality for flow expressions is known to be undecidable~\cite{Iwa83}.
Next, we deal with the problem of deciding whether a given NFA accepts a commutative language, i.\,e., a language from $\mathscr{JFA}\cap \mathscr{REG}$.
\begin{theorem}\label{thm-non-JFA-complexity}
It is $\npclass$-hard to decide, for a  given NFA $M$, whether $L_{\fa}(M)$ is noncommutative, even for binary languages.
\end{theorem}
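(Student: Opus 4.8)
The plan is to mimic the construction used in the proof of Theorem~\ref{thm-non-REG-complexity}, again reducing from 3SAT via the Stockmeyer--Meyer regular expression $E_\phi$ over the unary alphabet $\{a\}$, but now I must produce an \emph{NFA} (not a JFA) whose accepted \emph{regular} language is commutative if and only if $\phi$ is unsatisfiable. The difficulty is that an NFA reads its input left to right, so I cannot simply ``shuffle in'' a block of $b$'s the way the JFA proof does; I have to build a genuine regular language over $\{a,b\}$ that is perm-closed exactly when $L(E_\phi)=\{a\}^*$.

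First I would set, roughly,
\[
L_\phi \;=\; \bigl(b^*\, L(E_\phi)\, b^*\bigr)\;\cup\;\bigl(\{a,b\}^*\setminus b^*a^*b^*\bigr)\,,
\]
or some close variant chosen so that: (i) when $\phi$ is unsatisfiable, $L(E_\phi)=\{a\}^*$ and then $L_\phi=\{a,b\}^*$, which is trivially perm-closed hence in $\mathscr{JFA}\cap\mathscr{REG}$; and (ii) when $\phi$ is satisfiable there is some $k$ with $a^k\notin L(E_\phi)$, and I must exhibit a word $w\in L_\phi$ with $\perm(w)\not\subseteq L_\phi$, proving $L_\phi$ noncommutative. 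The second component $\{a,b\}^*\setminus b^*a^*b^*$ is included so that \emph{every} word in which $a$'s and $b$'s are ``interleaved'' is already present, which forces the only possible witnesses of non-commutativity to live among words of the shape $b^i a^j b^\ell$; then non-commutativity is governed purely by whether all words of the form $b^i a^j b^\ell$ with $j$ arbitrary lie in the first component, i.e. by whether $L(E_\phi)$ is co-finite / all of $\{a\}^*$. I would pick the exact shape of the two components so this equivalence is clean — in particular I want: $L_\phi$ perm-closed $\iff$ for all $j$, $a^j\in L(E_\phi)$ $\iff$ $\{a\}^*\setminus L(E_\phi)=\emptyset$ $\iff$ $\phi$ unsatisfiable (using the Stockmeyer--Meyer property $\phi$ satisfiable $\iff L(E_\phi)\neq\{a\}^*$).

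Next I would check the complexity bookkeeping: $E_\phi$ is produced from $\phi$ in polynomial time, the regular expression for $b^*E_\phi b^*$ and for the complement-type component $\{a,b\}^*\setminus b^*a^*b^*$ (a fixed regular language, so a fixed NFA) are polynomial-size, and union of NFAs is polynomial; hence $M$ with $L_{\fa}(M)=L_\phi$ is constructed in polynomial time over the binary alphabet $\{a,b\}$. That establishes the reduction and the $\npclass$-hardness claim, even for binary input alphabets.

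The main obstacle is getting the regular language $L_\phi$ just right: I need it genuinely regular (so no shuffle of $\{a\}^*$-blocks in the sloppy JFA sense — I must use ordinary catenation $b^*\cdots b^*$), and I need the perm-closure test to collapse \emph{exactly} to the universality of $L(E_\phi)$ over $\{a\}^*$, neither more nor less. The delicate direction is ``$\phi$ satisfiable $\Rightarrow$ $L_\phi$ noncommutative'': given $a^k\notin L(E_\phi)$, I must name a concrete word $w\in L_\phi$ and a rearrangement of $w$ that escapes \emph{both} components of $L_\phi$; the natural candidate is something like $w=a b^k a^{k-1}$ (which lies in the second component since its $a$'s and $b$'s are not in the pattern $b^*a^*b^*$) whose sorted form $a^k b^k$ must then fail to be in $b^*L(E_\phi)b^*=L(E_\phi)$ — forcing me to align the ``$a^k$'' here with the chosen non-member of $L(E_\phi)$. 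Pinning down that alignment, and making sure the second component cannot accidentally absorb the witness, is the step that needs care; everything else is routine once the language is fixed.
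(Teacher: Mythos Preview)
Your approach is correct but takes a different route from the paper. The paper sets
\[
L'_\phi \;=\; \bigl(\{b\}^* \shuffle L(E_\phi)\bigr)\;\cup\;\{a\}^*\{b\},
\]
which \emph{is} a regular language: your worry that ``an NFA reads left to right, so I cannot simply shuffle in a block of $b$'s'' is misplaced, since $\mathscr{REG}$ is closed under shuffle (the paper cites this explicitly), and an NFA for $\{b\}^*\shuffle L(E_\phi)$ is easy to build in polynomial time from one for $L(E_\phi)$. With this definition, unsatisfiable $\phi$ gives $L'_\phi=\{a,b\}^*$; satisfiable $\phi$ gives some $a^k\notin L(E_\phi)$, and then $a^kb\in L'_\phi$ (from the tail $\{a\}^*\{b\}$) while $ba^k\notin L'_\phi$, finishing in one line.

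Your language $b^*L(E_\phi)b^*\cup(\{a,b\}^*\setminus b^*a^*b^*)$ also works, but note that your proposed witness needs $k\ge 2$: for $k=1$ the word $ab^1a^0=ab$ already lies in $b^*a^*b^*$, and more generally every word with at most one $a$ is in $b^*a^*b^*$, so the bare property ``$\phi$ satisfiable $\iff L(E_\phi)\neq\{a\}^*$'' is not enough. You must invoke the second Stockmeyer--Meyer property, ``$\phi$ satisfiable $\iff\{a\}^*\setminus L(E_\phi)$ is infinite'', to guarantee some $k\ge 2$. The paper's route is shorter precisely because shuffling with $\{b\}^*$ makes the first component perm-closed outright, so a single asymmetric tail $\{a\}^*\{b\}$ supplies the witness directly.
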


\begin{proof}
We use the fact that regular languages are closed under shuffle (see \cite{FliKud2012a} or \cite[p.108] {Gin66}). 
Let $\phi$ be a 3-CNF formula and let $E_\phi$ be the regular expression over $\{a\}$ with the properties described above. We define the language
$$L'_\phi= (\{b\}^* \shuffle L(E_\phi)) \cup \{a\}^* \{b\}.$$
Obviously, there is a finite automaton $M'_\phi$, which can be easily constructed, that accepts $L'_\phi$. If $\phi$ is unsatisfiable, then 
\begin{equation*}
L'_\phi= (\{b\}^* \shuffle \{a\}^*) \cup \{a\}^* \{b\} = \{a,b\}^*
\end{equation*}
is commutative. If $\phi$ is satisfiable, then, for some $a^k \notin L(E_\phi)$, we have $a^k b\in L'_\phi$ and $ba^k\notin L'_\phi$; thus $L'_\phi$ is not commutative.
\end{proof}
Again, we are not aware of a matching upper bound. At least, decidability can be shown as follows. In \cite{KliPol2012}, an explicit construction of a \emph{biautomaton}\footnote{We do not introduce biautomata in this paper.} accepting $L_{\fa}(M)$ for a given DFA $M$ is shown, though there is an exponential blowup in the number of states. Moreover, in \cite{athHOL1} the authors present an algorithm for turning a biautomaton into the \emph{canonical biautomaton} and show that commutativity of a language can be deduced from basic properties of the corresponding canonical biautomaton.




Notice that Theorem~\ref{thm-tally-NFA-universality} from the Appendix
allows us to deduce the following two corollaries. Indeed, because the number of states 
of the constructed automata is only a constant off from the number of states 
of the automaton $M_G$
obtained in the proof of  Theorem~\ref{thm-tally-NFA-universality}, 
this unary NFA could replace the regular expression $E_\phi$ 
used above. 
In fact, that proof (not delivered
in the appendix) was from \textsc{3-Coloring}, and we inherit the following property:
$G$ is 3-colorable if and only if $\{a\}^*\setminus L(M_G)$ is infinite.
This property is important in the reasonings of the hardness proofs shown above.

\begin{corollary}\label{Cor:regularityETH}
There is no algorithm that solves
the regularity problem for $q$-state JFAs on  binary input alphabets
in time $\landau^*(2^{o(q^{1/3})})$, unless ETH fails.
\end{corollary}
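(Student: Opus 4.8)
The plan is to reduce from \textsc{3-Coloring} exactly as in the proof of Theorem~\ref{thm-non-REG-complexity}, but feeding in the unary NFA $M_G$ from Theorem~\ref{thm-tally-NFA-universality} in place of the Stockmeyer--Meyer regular expression $E_\phi$. Concretely, given a graph $G$ with, say, $m$ vertices, I would take the unary NFA $M_G$ (with $\landau(m)$ states, by the cited theorem) that satisfies: $G$ is $3$-colorable iff $L(M_G)\neq\{a\}^*$, and moreover $G$ is $3$-colorable iff $\{a\}^*\setminus L(M_G)$ is infinite. This second property is precisely the one that the argument in Theorem~\ref{thm-non-REG-complexity} used about $E_\phi$, so the whole reduction carries over verbatim: set
\[
L_G=(\{b\}\shufflestar \shuffle L(M_G)) \cup (\{a\}\shuffle\{b\})\shufflestar,
\]
turn the corresponding $\alpha$-\textsc{SHUF}-like expression into a finite machine $M_G'$ via the Thompson construction (treating $\shuffle$, $\shufflestar$ as catenation, Kleene star), and observe that $L_{\jfa}(M_G')=L_G$ is regular iff $G$ is not $3$-colorable, by the identical Nerode-inequivalence argument.

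The key remaining step is the bookkeeping on the number of states. The Thompson construction applied to the unary NFA $M_G$ (plus the fixed extra gadget for $\{b\}\shufflestar\shuffle(\cdot)$ and $(\{a\}\shuffle\{b\})\shufflestar$) yields a JFA with $\landau(m)$ states; more precisely, the state count is $c_1\cdot q_G + c_2$ for absolute constants $c_1,c_2$, where $q_G$ is the number of states of $M_G$. Plugging in the bound $q_G=\landau(m)$ from Theorem~\ref{thm-tally-NFA-universality}, the resulting JFA has $q=\landau(m)$ states. Now, since the reduction runs in polynomial time and \textsc{3-Coloring} on $m$-vertex graphs cannot be solved in time $2^{o(m)}$ under ETH (via the sparsification lemma, \textsc{3-SAT} with $n$ variables reduces to \textsc{3-Coloring} with $\landau(n)$ vertices), an algorithm deciding regularity of a $q$-state binary-alphabet JFA in time $\landau^*(2^{o(q^{1/3})})$ would, through this reduction, solve \textsc{3-Coloring} in time $\landau^*(2^{o(m^{1/3})})$ --- wait, that is even stronger than needed. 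The exponent $q^{1/3}$ in the statement signals that the appropriate bound from Theorem~\ref{thm-tally-NFA-universality} is actually $q_G=\landau(m)$ but with the ETH-hard instance being $3$-SAT of $n$ variables compiled into $M_G$ with $\landau(n^?)$ states; I would track the precise polynomial there and confirm it gives $q=\landau(m)$ with the $m^{1/3}$ arising because the hardness source behind Theorem~\ref{thm-tally-NFA-universality} (3-Coloring of $m$-vertex graphs, itself obtained from $n$-variable 3-SAT with $m=\landau(n)$) forces an unavoidable blowup, so that $q^{1/3}=\Theta(m^{1/3})=\Theta(n^{1/3})$ sub-exponential-time algorithm would still contradict ETH once one notes $2^{o(n^{1/3})}$ is $2^{o(n)}$.

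So the main obstacle is purely the quantitative matching: I need to extract from Theorem~\ref{thm-tally-NFA-universality} the exact relationship between the instance size of the ETH-hard problem and the number of states of $M_G$, and then verify that the polynomial-time reduction above preserves the state count up to a constant factor, so that "$q$-state JFA" solvable in $\landau^*(2^{o(q^{1/3})})$ translates back to a $2^{o(s)}$ algorithm for the ETH-hard source (where $s$ is its instance size and $q^{1/3}=\Theta(s)$ or $q^{1/3}=\omega(s)$ suffices). Everything else --- the language $L_G$, its regularity dichotomy, the Thompson-style construction of the JFA, and the Nerode argument --- is inherited line-for-line from Theorem~\ref{thm-non-REG-complexity} and needs only the remark that $M_G$ has the stated infiniteness property, which is exactly the property highlighted in the paragraph preceding this corollary. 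The only genuinely new ingredient is the accounting, which I expect to be routine once the statement of Theorem~\ref{thm-tally-NFA-universality} is in hand.
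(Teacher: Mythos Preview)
Your approach is exactly the paper's: replace $E_\phi$ by the unary NFA $M_G$ from Theorem~\ref{thm-tally-NFA-universality}, rerun the construction of Theorem~\ref{thm-non-REG-complexity}, and use the inherited infiniteness property of $\{a\}^*\setminus L(M_G)$ for the Nerode argument. The paper says precisely this in the paragraph preceding the corollary.

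Where you go astray is in the bookkeeping. You assert ``$q_G=\landau(m)$ states, by the cited theorem,'' but Theorem~\ref{thm-tally-NFA-universality} says nothing of the sort; the exponent $q^{1/3}$ there in fact signals a \emph{cubic} blowup $q_G=\Theta(m^3)$ in the reduction from \textsc{3-Coloring}, which is why you immediately hit the paradox ``that is even stronger than needed.'' The fix is not to chase the polynomial back to \textsc{3-Coloring} at all: simply use Theorem~\ref{thm-tally-NFA-universality} as a black box on the state count. Your JFA has $q=\Theta(q_G)$ states (the Thompson-style build adds only a constant), and its non-regularity coincides with non-universality of $M_G$; hence a $\landau^*(2^{o(q^{1/3})})$ regularity tester yields a $\landau^*(2^{o(q_G^{1/3})})$ decider for $L(M_G)\neq\{a\}^*$, directly contradicting Theorem~\ref{thm-tally-NFA-universality}. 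No tracking of $m$, $n$, or the precise source reduction is required --- that is exactly the accounting the paper does, in one sentence.
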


\begin{corollary}\label{Cor:commutativityETH}
There is no algorithm that solves
the commutativity problem for $q$-state NFAs on  binary input alphabets
in time $\landau^*(2^{o(q^{1/3})})$, unless ETH fails.
\end{corollary}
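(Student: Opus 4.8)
The plan is to reduce from \textsc{3-Coloring} via the unary NFA $M_G$ guaranteed by Theorem~\ref{thm-tally-NFA-universality} from the Appendix, and to thread its state count through the constructions used in Theorems~\ref{thm-non-JFA-complexity}. The key point, already flagged in the paragraph preceding the corollary, is that $M_G$ satisfies the same property as $E_\phi$ did, namely that $G$ is $3$-colorable if and only if $\{a\}^*\setminus L(M_G)$ is infinite; hence $M_G$ can be substituted verbatim for $E_\phi$ in the earlier reductions. So the first step is to rerun the construction of Theorem~\ref{thm-non-JFA-complexity}: given a graph $G$ on, say, $v$ vertices, build $M_G$ (a unary NFA with $q_0=c_0\cdot v^{?}$ states for the explicit polynomial bound from Theorem~\ref{thm-tally-NFA-universality}), and then assemble the NFA $M'_G$ accepting
$$
L'_G=(\{b\}^*\shuffle L(M_G))\cup \{a\}^*\{b\}\,.
$$
Using closure of regular languages under shuffle and the standard product/Thompson constructions, $M'_G$ has $O(q_0)$ states — only a constant factor (and an additive constant) more than $M_G$ — and $G$ is $3$-colorable iff $L'_G$ is noncommutative, exactly as in Theorem~\ref{thm-non-JFA-complexity}.

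The second step is the complexity bookkeeping. Recall that ETH implies there is no $2^{o(v)}$-time algorithm for \textsc{3-Coloring} on $v$-vertex graphs (this is the standard ETH-for-\textsc{3-Coloring} consequence, obtainable from the sparsification lemma together with the linear-size reduction from $3$-SAT). By Theorem~\ref{thm-tally-NFA-universality} the number of states $q$ of the produced unary automaton is $\Theta(v^3)$ — or whatever cubic bound that theorem gives — so $v=\Theta(q^{1/3})$. If some algorithm decided commutativity for $q$-state binary NFAs in time $\landau^*(2^{o(q^{1/3})})$, then composing it with the polynomial-time reduction $G\mapsto M'_G$ would decide \textsc{3-Coloring} in time $\landau^*(2^{o(v)})$, i.e.\ essentially $2^{o(v)}\cdot \mathrm{poly}(v)$, contradicting ETH. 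This yields Corollary~\ref{Cor:commutativityETH}. The argument for Corollary~\ref{Cor:regularityETH} is the same, but using the construction from Theorem~\ref{thm-non-REG-complexity} instead: replace $E_\phi$ by $M_G$ in
$$
L_G=(\{b\}\shufflestar\shuffle L(M_G))\cup(\{a\}\shuffle\{b\})\shufflestar\,,
$$
turn this into a finite machine $M_G'$ (Thompson-style, treating $\shuffle,\shufflestar$ as $\cdot,{}^*$) with $O(q_0)$ states, and note that by the infiniteness property of $\{a\}^*\setminus L(M_G)$ the Nerode argument of Theorem~\ref{thm-non-REG-complexity} shows $L_G$ is nonregular iff $G$ is $3$-colorable.

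The only genuinely delicate point is the exponent: we must make sure the state blow-up in every step of the pipeline is \emph{linear} (up to constants), so that $q=\Theta(v^3)$ is preserved and the $o(q^{1/3})$ in the conclusion matches the $o(v)$ ETH lower bound. The shuffle-closure construction ($L_1\shuffle L_2$ recognized by the product automaton) multiplies state counts, but here one factor is the fixed two-state automaton for $\{b\}^*$ (or $\{b\}\shufflestar$), so it is only a constant blow-up; the union with $\{a\}^*\{b\}$ (resp.\ $(\{a\}\shuffle\{b\})\shufflestar$) adds a constant number of states; the Thompson/NFA constructions are linear in expression size. Hence the composition stays linear in $q_0$, and since $q_0$ itself is cubic in $v$ by Theorem~\ref{thm-tally-NFA-universality}, we get $q=\Theta(v^3)$ and thus $2^{o(q^{1/3})}=2^{o(v)}$, closing the argument. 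The main obstacle, then, is simply verifying this linear-blow-up claim carefully for each construction and confirming that the cubic bound in Theorem~\ref{thm-tally-NFA-universality} is tight enough in both directions to license writing $v=\Theta(q^{1/3})$.
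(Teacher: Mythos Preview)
Your proposal is correct and follows essentially the same route as the paper: substitute the unary NFA $M_G$ from Theorem~\ref{thm-tally-NFA-universality} for $E_\phi$ in the construction of Theorem~\ref{thm-non-JFA-complexity}, observe that the resulting binary NFA has only linearly many more states than $M_G$, and transfer the $2^{o(q^{1/3})}$ lower bound. One small simplification worth noting: you do not actually need to unwind all the way back to \textsc{3-Coloring} and track $v$ versus $q$; since Theorem~\ref{thm-tally-NFA-universality} already states the $2^{o(q^{1/3})}$ bound for unary NFA non-universality, and your reduction from that problem to binary NFA non-commutativity blows up the state count only linearly, the bound transfers directly as a black box (the paper mentions \textsc{3-Coloring} chiefly to certify the infiniteness property needed for the companion Corollary~\ref{Cor:regularityETH}, not for the bookkeeping here).
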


If we used the construction of Stockmeyer and Meyer (directly), we would only get bounds worse than $\landau^*(2^{o(q^{1/4})})$ (for more details, we refer to the Appendix).

\section{Complexity Issues\label{Sec:complexity}}
\subsection{Parsing}

For a fixed JFA $M$ and a given word $w\in \Sigma^*$, we can decide whether $w \in L(M)$ in the following way\footnote{In the whole Section \ref{Sec:complexity}, we use $L(M)$ instead of $L_{\jfa}(M)$.}. Scan over $w$ and construct the Parikh mapping $\pi_\Sigma(w)$ of $w$. Simulate a computation of $M$ on $w$ by repeated nondeterministic choice of an outgoing transition, passing to the target state, and decrementing the component of $\pi_\Sigma(w)$ that corresponds to the label $x\in \Sigma$ of the chosen transition. If an accepting state is reached and all the components of $\pi_\Sigma(w)$ are $0$, then $w \in L(M)$. In this procedure, we only have to store the current state and the Parikh mapping, which only requires logarithmic space. Thus, this shows $\mathscr{JFA} \subseteq \nlclass \subseteq \pclass$.\footnote{We wish to point out that this also follows from results in~\cite{CrePie2014}, where containment in $\nlclass$ is shown for a superclass of $\mathscr{JFA}$. 
%
}\par
These considerations show that the \emph{fixed} word problem can be solved in polynomial time. In contrast to the fixed word problem, the \emph{universal} word problem is to decide, for a given automaton $M$ and a given word $w$, whether $w \in L(M)$. The universal word problem for JFA is known to be solvable in polynomial time for fixed alphabets (see, e.\,g.,~\cite{JedSze2001}),
but $\npclass$-complete in general. The hardness follows from our Theorem \ref{JFAUniversalWordProblemETHTheorem} or, e.\,g., from \cite[Theorem 5.1]{MaySto94}, which gives a proof of the NP-hardness (concerning expressions using only union and shuffle) by a reduction from the problem of 3-dimensional matching. Alternatively, a very simple reduction from the Hamiltonian circle problem was given in \cite{Kop2015}. The membership of this problem in NP is shown in our Theorem \ref{GJFAsubNP} or, e.\,g., in \cite{Kop2015} again. See also \cite{Huy83} for generalizations towards commutative context-free grammars.
\par
We shall improve the hardness result 
by giving a reduction from 3\textsc{SAT}. This allows us to conclude a lower bound for the complexity of an algorithm solving the universal word problem for JFA, assuming the exponential time hypothesis (ETH), which is reproduced and further commented in the Appendix.

\begin{figure} 
\begin{centering}
\includegraphics{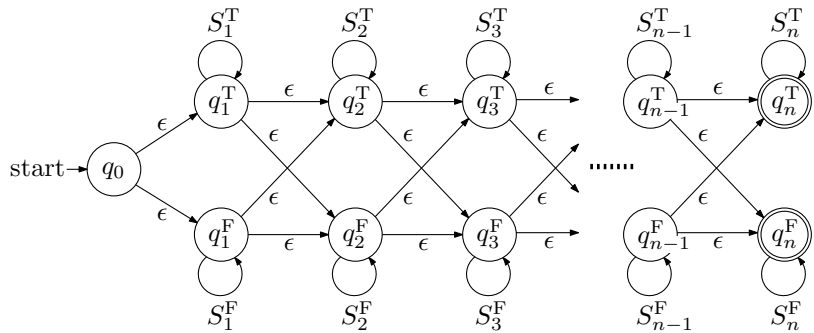}
\par\end{centering}
\caption{\label{fig:univNPcJFA}The JFA $M$ representing a formula $\phi$.}
\end{figure}

\begin{theorem}\label{JFAUniversalWordProblemETHTheorem}
Unless ETH fails, there is no algorithm that, for a given JFA $M$ with state set $Q$ and a given word $w$, decides whether $w \in L(M)$ and runs in time $\landau^*(2^{o(\left|Q\right|)})$.
\end{theorem}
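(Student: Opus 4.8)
The goal is a reduction from 3\textsc{SAT} to the universal word problem for JFA in which a formula $\phi$ with $m$ clauses over $k$ variables is mapped to a JFA $M_\phi$ with $\landau(m+k)$ states (or, more precisely, a number of states linear in the size of $\phi$) together with a word $w_\phi$, such that $w_\phi \in L(M_\phi)$ if and only if $\phi$ is unsatisfiable (or satisfiable --- whichever orientation is convenient). Since 3\textsc{SAT} on $n$ variables and $\landau(n)$ clauses cannot be solved in time $2^{o(n)}$ unless ETH fails (via the Sparsification Lemma, as recalled in the Appendix), a JFA algorithm running in time $\landau^*(2^{o(|Q|)})$ would yield a $2^{o(n)}$ algorithm for 3\textsc{SAT}, a contradiction.

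\textbf{The construction.} First I would exploit the commutative (jumping) semantics: on a JFA, acceptance of $w$ depends only on the Parikh image $\pi_\Sigma(w)$, and a computation is just a walk from the start state to a final state whose multiset of transition labels equals $\pi_\Sigma(w)$. I would use a binary alphabet $\{a,b\}$ (or a small fixed alphabet) and encode a truth assignment by how many $a$'s a computation "spends" in certain gadget loops. Concretely, for each variable $x_i$ introduce a small cycle or a pair of parallel edges labeled so that passing through it forces a choice (set $x_i$ true or false) while consuming a controlled number of symbols; for each clause introduce a gadget that can be "satisfied" by at least one of its literals, realized by alternative transitions. The target word $w_\phi = a^N b^M$ is chosen so that the only way to consume exactly $N$ copies of $a$ and $M$ copies of $b$ is for the walk to make, for each variable, a consistent choice across all its occurrences and, for each clause, to route through a literal consistent with that choice. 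This is the standard idea behind the NP-hardness reductions for shuffle-based universality (as in \cite{MaySto94} or \cite{Kop2015}), but carried out with a \emph{linear} number of states in $|\phi|$ rather than merely polynomially many, which is the crucial point for the ETH bound. I would describe the gadget of Figure~\ref{fig:univNPcJFA} explicitly and verify: (i) completeness --- every satisfying assignment yields an accepting computation on $w_\phi$; (ii) soundness --- every accepting computation on $w_\phi$ induces a well-defined satisfying assignment, the arithmetic constraints on the counts of $a$ and $b$ forcing consistency and clause-coverage.

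\textbf{The ETH consequence.} Having the reduction $\phi \mapsto (M_\phi, w_\phi)$ with $|Q(M_\phi)| \le c \cdot n$ for some constant $c$ (after applying sparsification so that $m = \landau(n)$), suppose for contradiction that some algorithm decides $w \in L(M)$ in time $\landau^*(2^{o(|Q|)})$. Composing with the polynomial-time reduction, we could decide satisfiability of $\phi$ in time $\landau^*(2^{o(c n)}) = \landau^*(2^{o(n)})$, contradicting ETH. Hence no such algorithm exists.

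\textbf{Main obstacle.} The delicate part is getting the state count \emph{linear} in $|\phi|$ while still forcing global consistency purely through the commutative/counting mechanism --- a JFA cannot "remember" which literal of which clause it picked, it can only count symbols. The gadgets must therefore be arithmetically rigid: the multiset equation $\{\!\{$labels used$\}\!\} = \pi_\Sigma(w_\phi)$ must have solutions in bijection with satisfying assignments. Designing labels (e.g. using distinct "weights" realized by blocks $a^{j}$ of carefully chosen lengths, or using the two-letter alphabet with one letter acting as a "budget" and the other as a "clause token") so that no spurious computation can balance the counts --- and checking that padding to a binary alphabet does not blow up the state count beyond linear --- is the technical heart of the argument. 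I expect the verification of soundness (no cheating computations) to be where most of the care is needed.
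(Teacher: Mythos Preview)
Your proposal has a genuine and fatal gap: you plan to carry out the reduction over a \emph{fixed} (binary) alphabet, but the universal word problem for JFA over any fixed alphabet is solvable in polynomial time (as the paper itself recalls, citing~\cite{JedSze2001}). Hence no polynomial-time reduction from 3\textsc{SAT} into binary-alphabet JFA instances can exist unless $\pclass=\npclass$, and in particular your scheme cannot yield an ETH lower bound. The ``arithmetic rigidity'' you are aiming for is unattainable here: with alphabet size $k$ fixed, there are only $(|w|+1)^k$ many Parikh vectors below $\pi_\Sigma(w)$, and a straightforward dynamic program over (state, Parikh vector) pairs decides acceptance in time polynomial in $|Q|$ and $|w|$. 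Note also that JFA transitions carry \emph{single} letters, so your ``blocks $a^j$'' would have to be unfolded into chains of $j$ states, destroying the linear state bound even before the alphabet obstacle.

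The paper's construction sidesteps all of this by letting the alphabet \emph{grow}: it takes $\Sigma=\{c_1,\dots,c_m\}$, one letter per clause, and the target word $w=c_1c_2\cdots c_m$. The JFA has a chain of $n$ binary choices $q_{i-1}\to q_i^{\mathrm T}$ or $q_{i-1}\to q_i^{\mathrm F}$ (via $\varepsilon$-transitions), with self-loops at $q_i^{\mathrm X}$ labelled by exactly those $c_j$ for which setting $x_i:=\mathrm X$ satisfies clause $C_j$. An accepting run thus picks an assignment and can consume $c_j$ somewhere along the chain iff that assignment satisfies $C_j$; acceptance of $w$ is equivalent to satisfiability of $\phi$. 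This gives $|Q|=2n+1$ directly---no sparsification is needed, since the clause count only affects $|\Sigma|$ and $|w|$, not $|Q|$---and the ETH bound $2^{o(|Q|)}$ follows immediately.
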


\begin{proof}
Fix a 3-CNF formula $\phi=\bigwedge_{j=1}^{m} C_j$, where each $C_j$ is a disjunction of three literals over variables $x_1,x_2,\ldots,x_n$. Let $\Sigma=\{c_1,\ldots,c_m\}$. For each $i\in\{1\ldots n\}$, let $S_i^\mathrm{T}=\{c_j\suchthat x_i\in C_j\}$ and $S_i^\mathrm{F}=\{c_j\suchthat \neg x_i\in C_j\}$. We claim that the JFA $M=(Q,\Sigma,R,q_0,F)$ with
\begin{eqnarray*}
Q & = & \{q_0\}\cup\{q_1^\mathrm{T},\ldots,q_n^\mathrm{T}\}\cup\{q_1^\mathrm{F},\ldots,q_n^\mathrm{F}\}, \\
F & = & \{q_n^\mathrm{T},q_n^\mathrm{F}\},
\end{eqnarray*}
and transitions according to Figure~\ref{fig:univNPcJFA} accepts the word $w=c_1 c_2 \ldots c_m$ if and only if $\phi$  is satisfiable. 
\begin{itemize}
\item First, let $(\xi_1,\ldots,\xi_n)\in\{\mathrm{T},\mathrm{F}\}^n
$ be an assignment of $x_1,\ldots,x_n$ that satisfies $\phi$. Consider the path in $M$ that uses $\varepsilon$-transitions to visit the states $$q_0, q_1^{\xi_1}, q_2^{\xi_2}, \ldots,q_n^{\xi_n}$$ and, moreover, in each state uses all possible loops (i.\,e., loops labeled by letters that still appear within the input). Because each clause contains some $x_i$ with $\xi_i=\mathrm{T}$ or $\neg x_i$ with $\xi_i=\mathrm{F}$, it follows that each letter $c_j$ of the word $w$ lies in $S_i^{\xi_i}$ for some $i$ an thus is consumed by the loop on $q_i^{\xi_i}$.
\item Second, assume that $w$ is an accepting computation of $M$ on $w$. For each $i\in\{1,\ldots,n\}$ the corresponding path in $M$ must visit exactly one of the vertices $q_i^\mathrm{T},q_i^\mathrm{F}$; let $\xi_i=\mathrm{T}$ or $\xi_i=\mathrm{F}$ respectively. For each $j\in\{1,\ldots,m\}$, the letter $c_j$ is consumed from $w$ and thus lies in $S_i^{\xi_i}$ for some $i$. It follows that each clause contains a satisfied literal.
\end{itemize}
As $\left|Q\right|=2n+1$ and the construction of $M$ works in linear time, any algorithm that solves universal word problem for JFA in time $\landau^*(2^{o(\left|Q\right|)})$ violates ETH.
\end{proof}

For \emph{general} jumping finite automata the complexity of word problems increase considerably. In fact, there is a fixed general jumping finite automaton that accepts an $\npclass$-complete language, i.\,e., for GJFA even the fixed word problem is $\npclass$-complete. \par
Before proving that, let us give the following simple lemma, which is later used for reducing  alphabet sizes of GJFAs.
\begin{lemma}\label{ReducingGjfaAlph}
Let $M$ be a GJFA over $\Sigma=\{x_1,\dots,x_k\}$. Then there exists a homomorphism $h:\Sigma \rightarrow \{0,1\}^*$ and a GJFA $M'$ over $\{0,1\}$ such that, for each $w\in\Sigma^*$,  $w\in L(M)$ if and only if $h(w)\in L(M')$.
\end{lemma}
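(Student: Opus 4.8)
The plan is to encode each letter $x_i \in \Sigma$ by a distinct binary block and then replace every transition label of $M$ by the corresponding concatenation of blocks. First I would fix the homomorphism $h : \Sigma^* \to \{0,1\}^*$ by setting $h(x_i) = 0 1^i 0$ for $1 \le i \le k$; this is injective, prefix-free (no $h(x_i)$ is a prefix of $h(x_j)$ for $i \neq j$), and, more importantly, the image $h(\Sigma^*)$ is an ``unambiguously decodable'' code: every binary word decomposes in at most one way into blocks $01^i0$, and crucially, any factor of a word in $h(\Sigma^*)$ that itself lies in $h(\Sigma^*)$ must align with block boundaries. Then I would build $M'$ from $M$ by keeping the same state set, start state, and final states, and replacing each rule $p y \to q$ of $M$ (with $y = x_{i_1} x_{i_2} \cdots x_{i_\ell} \in \Sigma^*$) by the rule $p\, h(y) \to q$, where $h(y) = 0 1^{i_1} 0\, 0 1^{i_2} 0 \cdots 0 1^{i_\ell} 0$.

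The correctness argument has two directions. For the easy direction, if $w \in L(M)$ then there is an accepting jumping computation $u s v \curvearrowright^* f$ with $w = uv$; applying $h$ blockwise to each configuration and using the corresponding primed rules yields an accepting computation of $M'$ on $h(w)$, so $h(w) \in L(M')$. The converse is the point where I expect the main obstacle: I must show that \emph{every} accepting computation of $M'$ on a binary input $z$ forces $z \in h(\Sigma^*)$, say $z = h(w)$, and that the computation ``descends'' to an accepting computation of $M$ on $w$. The subtlety is the jumping feature: $M'$ consumes factors of $z$ and then jumps to an arbitrary remaining position, so a priori it could consume a factor that straddles two of the intended blocks or cuts a block in half, and the decoding argument for plain concatenation does not immediately apply.

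To handle this, I would argue by a careful invariant. Observe that every transition label of $M'$ is a word of the form $0 1^{i_1} 0 0 1^{i_2} 0 \cdots 0 1^{i_\ell} 0$ (for $\ell \geq 0$, the empty label for $\ell = 0$); in particular, every nonempty label begins and ends with $0$, contains no occurrence of $1 0 0 1$ that is not at a block junction, and — key point — the multiset of maximal $1$-runs appearing inside it, read together with the fact that the number of $0$'s is exactly $2\ell$ and they pair up, pins down the block decomposition. Given an accepting computation of $M'$ on $z$, consider the ordered sequence of factors $y'_1, y'_2, \ldots, y'_t$ consumed (in the order of consumption). Their total is a rearrangement that reconstitutes $z$; since each $y'_r$ is a concatenation of complete blocks of the special form and these are the \emph{only} binary words with the right ``$0$-parity and run structure'', a counting argument on the number of $0$'s and on the lengths of $1$-runs shows that the factor boundaries used by $M'$ must coincide with block boundaries of $z$ (any misaligned cut would produce a consumed factor with an odd number of $0$'s or a leading/trailing $1$, which no label has). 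Hence $z$ is a shuffle-free concatenation of blocks, i.e.\ $z = h(w)$ for a unique $w$, and undoing $h$ on each consumed factor and each configuration turns the $M'$-computation into an accepting $M$-computation on $w$. I would state this block-alignment fact as the technical heart and prove it by the parity-of-zeros plus run-length bookkeeping just sketched; everything else is routine.

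One remark on scope: the lemma as stated only claims the biconditional $w \in L(M) \iff h(w) \in L(M')$ for $w \in \Sigma^*$, so I do not even need to characterize $L(M')$ on all binary inputs — I only need the two implications above, which is exactly what the argument delivers.
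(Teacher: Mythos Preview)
Your approach is correct and essentially identical to the paper's: the paper takes $h(x_i)=10^i1$ (your $01^i0$ with $0$ and $1$ swapped), builds $M'$ by applying $h$ to every label, and handles the converse in a single sentence via exactly your block-alignment observation. One presentation remark: the cleanest way to deploy the alignment fact is as an inductive invariant on configurations---after each step the remaining tape content is again of the form $h(w')$, so the next consumed label $h(u)$ is again block-aligned and the invariant persists---rather than the global ``parity of zeros'' count you sketch at the end; as you phrased it, the $y'_r$ are factors of successively shortened words (possibly straddling earlier gaps), not factors of $z$ itself, so the global argument needs that inductive step anyway.
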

\begin{proof}
For each $1\le i \le k$, let $h(x_i)=10^i1$. Let $M'$ be obtained from $M$ by replacing each rule $(q,u,r)\in R$ with $(q,h(u),r)$. Clearly, if $w\in L(M)$, then $h(w)\in L(M')$. On the other hand, the definition of $h(x_1),\ldots,h(x_k)$ implies that a computation of $M$ on $h(w)$ can only consume factors of the form $h(x)$ corresponding to particular occurrences of $x$ in $w$.
\end{proof}

\begin{theorem}\label{GJFAsubNP}
$\mathscr{GJFA} \subseteq \npclass$.
\end{theorem}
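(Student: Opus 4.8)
The plan is to prove the slightly stronger statement that the \emph{universal} word problem for GJFA lies in $\npclass$ — i.e.\ there is a nondeterministic polynomial-time algorithm that, given a general finite machine $M=(Q,\Sigma,R,s,F)$ and a word $w$ with $|w|=n$, decides whether $w\in L_{\jfa}(M)$ — from which $\mathscr{GJFA}\subseteq\npclass$ follows at once by regarding $M$ as fixed.

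First I would record the combinatorial reformulation of GJFA acceptance obtained by unfolding the definitions of $\curvearrowright^{*}$ and $L_{\jfa}$: $w\in L_{\jfa}(M)$ if and only if there is a path $s=p_0,p_1,\dots,p_t=f$ with $f\in F$, witnessed by rules $(p_{i-1}\,y_i\rightarrow p_i)\in R$, together with words $w=w_0,w_1,\dots,w_t=\emptyword$ such that for each $i$ one has $w_{i-1}=\alpha_i y_i\beta_i$ and $w_i=\alpha_i\beta_i$ for some $\alpha_i,\beta_i\in\Sigma^{*}$. (The initial jump in the configuration $usv$ makes the first deletion as unconstrained as the later ones: choosing $u$ to end just before the chosen occurrence of $y_1$ lets any factor be deleted first.) In words, acceptance means $w$ can be reduced to $\emptyword$ by iteratively deleting contiguous factors equal to the labels read along a path from $s$ to a final state.

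Second, I would bound the length $t$ of such a witness polynomially. Among the rules $(p_{i-1}\,y_i\rightarrow p_i)$, those with $y_i\neq\emptyword$ delete at least one symbol, so there are at most $n$ of them; each maximal block of consecutive $\emptyword$-rules is an $\emptyword$-path in $M$ that deletes nothing, so we may assume it is simple (otherwise shortcut repeated states, which keeps the sequence valid and still a witness), hence of length less than $|Q|$; there are at most $n+1$ such blocks, so $t\le n+(n+1)|Q|=\landau(n|Q|)$. The $\npclass$ procedure then is: guess $t\le n+(n+1)|Q|$, guess rules $(p_0\,y_1\rightarrow p_1),\dots,(p_{t-1}\,y_t\rightarrow p_t)\in R$ with $p_0=s$ and $p_t\in F$, and guess positions $\ell_1,\dots,\ell_t$; deterministically set $w_0:=w$ and, for $i=1,\dots,t$, check that $w_{i-1}$ has the factor $y_i$ starting at position $\ell_i$ and let $w_i$ be $w_{i-1}$ with that occurrence deleted; accept iff $w_t=\emptyword$. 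Correctness in both directions is exactly the reformulation above, the guessed positions supplying the decompositions $w_{i-1}=\alpha_i y_i\beta_i$. The number of guessed bits is $\landau(n|Q|\log|R|+n|Q|\log n)$ and the verification runs in time $\landau(n^2|Q|)$, both polynomial in $|M|+|w|$.

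The only genuinely delicate points are: handling $\emptyword$-transitions and loops so that the witness stays polynomially bounded (done via the simple-block argument); and recognising that the algorithm really must guess the deletion \emph{positions} and simulate faithfully, rather than argue via multiset/Parikh data — because deleting a factor closes the gap and allows later labels to straddle the resulting seam, which is precisely why Parikh reasoning (enough for $\mathscr{JFA}$, via the construction before Theorem~\ref{permCharacterisationTheorem}) does not suffice for $\mathscr{GJFA}$.
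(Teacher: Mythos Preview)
Your proof is correct and takes essentially the same approach as the paper: bound the length of an accepting computation by observing that at most $|w|$ transitions consume letters while the intervening $\varepsilon$-runs may be assumed simple (hence of length at most $|Q|$), then use the resulting polynomially bounded trace as the $\npclass$ certificate. Your write-up is in fact more explicit than the paper's (which just speaks of a ``trace of configurations'' of total length at most $|Q||w|^2$), and your reformulation of acceptance as iterated factor deletion, together with the remark that the initial free split $w=uv$ makes the first deletion unconstrained, is a useful clarification.
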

\begin{proof}
For each GJFA $M=(Q,\Sigma,R,s,F)$ and $w\in L(M)$, there exists a computation of $M$ that consists of at most $\left|Q\right|\left|w\right|$ steps (at most $\left|Q\right|$ transitions labeled by $\varepsilon$ are taken between any two steps that shorten the current word). Thus, the trace of configurations (pairs from $\Sigma^* \times Q$) that leads to acceptance of $w$ has total length at most $\left|Q\right|\left|w\right|^2$. Such a witness for accepting $w$ can be easily checked in polynomial time.
\end{proof}
\begin{theorem}\label{GJFAFixedWordProbHardTheorem}
There exists a GJFA $M$ over a binary alphabet such that $L(M)$ is $\npclass$-complete.
\end{theorem}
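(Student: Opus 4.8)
The plan is to exhibit an $\npclass$-hard language that happens to be a GJFA language, and then invoke Theorem~\ref{GJFAsubNP} for membership in $\npclass$. The natural candidate is an encoding of a Hamiltonian-path / Hamiltonian-cycle type problem, or directly a variant of $3$-partition or exact cover, since GJFA acceptance is essentially about partitioning the input word into scattered factors that each equal a transition label. Concretely, I would first build a GJFA over a suitable fixed finite alphabet $\Sigma$ whose accepted language is $\npclass$-hard: the idea is that a single-state (or few-state) GJFA $M$ with rules $sy\to s$ for various words $y$ accepts exactly those inputs that can be carved up (in the scattered-factor / jumping sense) into blocks from the finite set $\{y : sy\to s\in R\}$, which is precisely an instance of a packing/partition question and can be made to encode, say, whether a given multiset of ``tiles'' can be assembled into a ``grid row'' encoding a Hamiltonian cycle in a graph supplied as part of the input word. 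This is the step where the real work lies.

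The cleanest route is probably to reduce from a known $\npclass$-complete problem with a very combinatorial flavor — e.g. Hamiltonian cycle, exactly as hinted for the universal word problem via \cite{Kop2015}, but here we need the automaton \emph{fixed}, so the graph must be encoded \emph{in the input word}. So the plan is: (i) choose an encoding of a graph $G$ on $k$ vertices as a word $w_G$ over a fixed alphabet, using delimiters to mark vertices and adjacency information; (ii) design a fixed finite set of transition-label words so that a successful ``jumping'' parse of $w_G$ corresponds exactly to selecting a permutation of the vertices that forms a Hamiltonian cycle — the jumping freedom is what lets the automaton ``pick up'' the relevant edge-blocks in the order dictated by a cycle rather than in the order they appear in $w_G$; (iii) verify both directions of the correspondence. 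Then $L(M)$ contains $w_G$ iff $G$ has a Hamiltonian cycle, so $L(M)$ is $\npclass$-hard.

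Once we have such an $M$ over some fixed alphabet $\Sigma=\{x_1,\dots,x_k\}$, we shrink to a binary alphabet using Lemma~\ref{ReducingGjfaAlph}: apply the homomorphism $h$ with $h(x_i)=10^i1$ and replace each rule $(q,u,r)$ by $(q,h(u),r)$, obtaining a GJFA $M'$ over $\{0,1\}$ with $w\in L(M)$ iff $h(w)\in L(M')$. Hence $L(M')$ is still $\npclass$-hard (the reduction $w\mapsto h(w)$ is polynomial-time computable and injective on the relevant instances, by the block-structure argument of that lemma), and $L(M')\in\npclass$ by Theorem~\ref{GJFAsubNP}. Therefore $L(M')$ is $\npclass$-complete, proving the theorem.

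The main obstacle is step (ii): getting a \emph{fixed} finite collection of label words so that scattered-factor parses of the graph encoding are in bijection with Hamiltonian cycles, without the parser being able to ``cheat'' — i.e. one must ensure the delimiter structure forces each vertex-block to be used exactly once and each chosen edge-block to connect consecutive vertices in the tour. Controlling this with only constantly many, input-length-independent label words, while letting the \emph{input} carry the unbounded graph, is the delicate part; padding/counting tricks (e.g. making each vertex occurrence carry a unique multiplicity so that blocks can only match the intended occurrence) are the likely tools, analogous to the $10^i1$ idea used in Lemma~\ref{ReducingGjfaAlph} itself.
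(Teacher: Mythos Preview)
Your overall architecture coincides with the paper's: build a fixed GJFA over some constant alphabet whose language is $\npclass$-hard by encoding the hard-problem instance into the input word, then apply Lemma~\ref{ReducingGjfaAlph} for the binary alphabet and Theorem~\ref{GJFAsubNP} for membership in $\npclass$.

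Where you diverge is the source problem, and this is also where the gap lies. You aim at Hamiltonian cycle and correctly flag the obstacle: with a \emph{fixed} automaton, vertex identifiers of unbounded length must be matched against one another using only constantly many transition labels. You point to ``padding/counting tricks'' but do not actually give the construction --- and that construction is the entire technical content of the theorem. As it stands, step~(ii) is a hope, not an argument.

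The paper avoids most of this difficulty by choosing a source problem that is already a word-factoring question: \textsc{Binary Exact Block Cover} ($\exactbcov2$), asking whether words $u_1,\dots,u_k\in\{0,1\}^*$ can be permuted so that their concatenation equals a target $v$. For an instance $(u_1,\dots,u_k,v)$ one forms the input $w=\star^{|v|}\,v\,t_1\cdots t_k$ with $t_i=\star^{|u_i|+2}c\,\overline{u}_i\,\overline{c}$, and a four-state GJFA is designed so that one lap through its states selects some $t_i$ and erases it in lockstep with a matching prefix of the current remainder of $v$: transitions alternately consume $\star\overline{b}$ and $\star b$ (for $b\in\{0,1\}$), so the $\star$'s act as positional fuel and the bar/unbar pairing enforces letter-by-letter equality of $u_i$ with the next segment of $v$. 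This is exactly the kind of counter gadget you anticipate, but $\exactbcov2$ keeps it short --- four states, labels of length~$2$ --- because no adjacency relation needs checking, only a sequential carving of $v$. If you pursue Hamiltonian cycle instead, you will need an additional layer of gadgetry to verify edges, on top of a mechanism of this type for matching vertex names; switching to a word-partitioning source problem such as $\exactbcov2$ is what makes your step~(ii) tractable.
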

\begin{proof}
Let $M=(Q,\{0,1,\overline{0},\overline{1},\star \},R,q_\mathrm{C},\{q_\mathrm{C}\})$ with $Q=\{q_\mathrm{C},q_\mathrm{D},q_0,q_1\}$ be defined according to Figure \ref{fig:NPcGJFA}.
\begin{figure}
\begin{centering}
\includegraphics{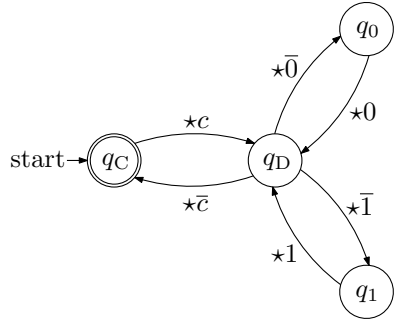}
\par\end{centering}
\caption{\label{fig:NPcGJFA}A GJFA $M$ that can solve the problem $\exactbcov2$.}
\end{figure}
For $u \in \{0, 1\}^*$, $\overline{u}$ is obtained from $u$ by replacing $0$ with $\overline{0}$ and $1$ with $\overline{1}$.\par
Let $u_1, u_2, \ldots, u_n, v \in \{0, 1\}^*$ and $t_i = \star^{|u_i|+2} c \overline{u}_i \overline{c}$, $1 \leq i \leq n$. First, we prove the equivalence of the following two statements. 
\begin{enumerate}
\item\label{enumOne} On input $w = \star^{|v|} v t_1 t_2 \ldots t_n$, $M$ can reach state $q_\mathrm{C}$ from state $q_\mathrm{C}$ with remaining input $w'$ and without visiting $q_\mathrm{C}$ in between.
\item\label{enumTwo} $w' = \star^{|v'|} v' t_1 t_2 \ldots t_{i - 1} t_{i + 1} \ldots t_n$ with $v = u_i v'$, for some $i$, $1 \leq i \leq n$.
\end{enumerate}
Assume that $M$ starts in $q_\mathrm{C}$ with input $w = \star^{|v|} v t_1 t_2 \ldots t_n$. In the first step, while changing into state $q_{\mathrm{D}}$, $M$ consumes a factor $\star c$ from $w$. After this step, the remaining input is $\star^{|v|} v t_1 t_2 \ldots t_{i - 1} \star^{|u_i|+1} \overline{u}_i \overline{c} t_{i + 1} \ldots t_n$, for some $i$, $1 \leq i \leq n$. Now, by using states $q_0$ and $q_1$, a sequence of factors $\star \overline{y}_1$, $\star y_1$, $\star \overline{y}_2$, $\star y_2$, $\ldots$, $\star \overline{y}_m$, $\star y_m$ is consumed, where $y_i \in \{0, 1\}$ for $1 \leq i \leq m$. All these factors only occur in the middle of the factor $\star^{|u_i|} \overline{u}_i$. Furthermore, $M$ can only change into state $q_{\mathrm{C}}$ again if there exists a factor $\star \overline{c}$, which is only the case when the whole factor $\star^{|u_i|} \overline{u}_i$ is consumed. This implies the second statement.\par
If the second statement holds, then the transitions described above (each $y_i$ being chosen such that $y_1 y_2 \ldots y_n = u_i$) will lead $M$ on input $\star^{|v|} v t_1 t_2 \ldots t_n$ from state $q_\mathrm{C}$ into state $q_\mathrm{C}$ without visiting $q_\mathrm{C}$ in between. \par
Next, consider the the following computational problem, which was shown to be $\npclass$-complete in~\cite{JiaSXXZZ2014}: 
\medskip\\
\parbox{1\columnwidth}{
\noindent \textsc{Binary Exact Block Cover} $(\exactbcov2)$ \\ \noindent
\emph{Instance}: Words $u_1, u_2, \ldots, u_k$, and $v$ over $\{0,1\}$.\\
\emph{Question}: Does there exist a permutation $\pi : \{1, 2, \ldots, k\} \to \{1, 2, \ldots, k\}$ such that $v = u_{\pi(1)} u_{\pi(2)} \ldots u_{\pi(k)}$?
}\medskip \\
Let $(u_1, u_2, \ldots, u_k, v)$ be an instance of $\exactbcov2$. If we apply the claim from above inductively, it follows immediately that $\star^{|v|} v t_1 t_2 \ldots t_n \in L(M)$ if and only if there exists a permutation $\pi$ with $v = u_{\pi(1)} u_{\pi(2)} \ldots u_{\pi(k)}$. The permutation $\pi$ corresponds to the order in which the factors $t_i$ are consumed by $M$.\par
Finally, Lemma~\ref{ReducingGjfaAlph} says that $M$ can be easily turned into a binary GJFA $M'$, while the corresponding homomorphism $h$ serves as a polynomial-time reduction from $L(M)$ to $L(M')$.
\end{proof}

Obviously, Theorem~\ref{GJFAFixedWordProbHardTheorem} implies that the universal word problem for GJFA is $\npclass$-complete as well, which has been shown by a separate reduction in the conference version of this paper \cite{FerParSch2015}. We wish to point out, however, that the hardness result for the universal word problem given in \cite{FerParSch2015} is stronger in the sense that it also holds under the restriction to GJFA that accept finite languages. 

\begin{theorem}[\cite{FerParSch2015}]\label{NPCompletenessTheorem}
The universal word problem is $\npclass$-complete for GJFAs accepting finite languages over binary alphabets.
\end{theorem}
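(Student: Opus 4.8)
There are two things to show: membership in $\npclass$ (for the unrestricted problem, hence a fortiori for the restriction to GJFAs with finite language), and $\npclass$-hardness that survives the finiteness restriction and a binary alphabet. The first is essentially free. For a GJFA $M=(Q,\Sigma,R,s,F)$ and a word $w$, the proof of Theorem~\ref{GJFAsubNP} shows that any accepting computation on $w$ has at most $|Q|\,|w|$ steps, so its trace of configurations has length $\mathcal{O}(|Q|\,|w|^2)$ and can be guessed and checked in polynomial time in $|M|+|w|$; this puts the universal word problem in $\npclass$, independently of whether $L(M)$ is finite.

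For hardness, the idea is to reduce from a problem whose natural shuffle description uses no iteration, so that the resulting automaton is \emph{acyclic} and hence recognizes a finite language. Concretely, I would reduce from \textsc{3-Dimensional Matching}: given $T\subseteq X\times Y\times Z$ with $|X|=|Y|=|Z|=q$, decide whether some $M\subseteq T$ with $|M|=q$ covers $X\cup Y\cup Z$ exactly once. (This reduction is, in spirit, the one behind \cite[Theorem 5.1]{MaySto94}.) Introduce one letter for each element of $X\cup Y\cup Z$, so the alphabet has size $3q$, and encode a triple $(x_a,y_b,z_c)$ by its three element letters. Build the GJFA $M_\phi$ whose state graph is a chain of $q+1$ ``layers'': from the $t$-th layer state, for every triple $\tau\in T$ there is a little forward path (with two fresh intermediate states) that reads the three element letters of $\tau$ one after another and lands in the $(t{+}1)$-st layer state; the first layer state is initial, the last is final. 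Take $w_\phi$ to be the concatenation of all $3q$ element letters. Since all transition labels here are single letters, $w_\phi\in L(M_\phi)$ holds iff some accepting computation consumes a multiset of letters equal to the all-ones Parikh vector; such a computation picks $q$ triples (with possible repetitions), and the sum of their three-letter indicator vectors equals the all-ones vector exactly when the $q$ chosen triples are pairwise disjoint, i.e.\ form a perfect matching, and conversely. Because the state graph is a DAG, every accepting computation has length below $|Q|$, so $L(M_\phi)$ is finite.

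The last step is to shrink the alphabet. Apply Lemma~\ref{ReducingGjfaAlph} to $M_\phi$: it produces a homomorphism $h:\Sigma\to\{0,1\}^*$ and a binary GJFA $M'_\phi$ with $w\in L(M_\phi)\iff h(w)\in L(M'_\phi)$. The lemma's construction only relabels transitions, so $M'_\phi$ has the same acyclic state graph as $M_\phi$; hence every accepting computation of $M'_\phi$ still has length below $|Q|$ and $L(M'_\phi)$ is again finite. Thus the map sending a 3DM instance to $(M'_\phi,\,h(w_\phi))$ is a polynomial-time reduction to the universal word problem for GJFAs with finite language over $\{0,1\}$, proving $\npclass$-hardness; together with the first paragraph this gives $\npclass$-completeness.

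\textbf{Expected main obstacle.} The delicate point is not any single calculation but keeping two requirements simultaneously satisfied: the intermediate automaton must be genuinely acyclic (so that finiteness is guaranteed both for $M_\phi$ and, crucially, for $M'_\phi$ after the relabeling), yet the problem must still be hard. Note that single-letter labels over a \emph{fixed} alphabet give a word problem in $\nlclass$, so the hardness must enter either through a growing alphabet or through genuinely multi-letter labels; the route above uses the growing alphabet of $M_\phi$ and then trades it for the multi-letter labels created by $h$. A secondary concern is ruling out spurious accepting computations of $M_\phi$ under the jumping semantics, but since all of its labels are single letters this reduces to the elementary Parikh-counting argument sketched above, and for $M'_\phi$ it is handled by the injectivity of $h$ built into Lemma~\ref{ReducingGjfaAlph}.
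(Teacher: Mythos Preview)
Your proposal is correct. Note that the paper itself does not contain a proof of this theorem; it is imported from the conference version~\cite{FerParSch2015}, and the surrounding text only records that the reduction there is ``separate'' and survives the finiteness restriction. So there is no in-paper proof to compare against in detail.

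That said, your argument is exactly in the spirit the paper points to: the remark preceding Theorem~\ref{JFAUniversalWordProblemETHTheorem} explicitly mentions that \cite[Theorem~5.1]{MaySto94} obtains hardness via a reduction from \textsc{3-Dimensional Matching} for expressions using only union and shuffle, which is precisely what your acyclic, single-letter-label automaton $M_\phi$ encodes. The three technical points that matter are all sound: (i) because every label of $M_\phi$ is a single letter, $M_\phi$ is in fact a JFA, so membership of $w_\phi$ reduces to the Parikh condition you state, and the equivalence with the existence of a perfect matching is immediate; (ii) the acyclic state graph bounds the length of every accepting computation, so $L(M_\phi)$ is finite; (iii) Lemma~\ref{ReducingGjfaAlph} only relabels transitions and keeps the state graph intact, so $M'_\phi$ is still acyclic and $L(M'_\phi)$ is finite as well, while $w_\phi\in L(M_\phi)\Leftrightarrow h(w_\phi)\in L(M'_\phi)$ gives the reduction over $\{0,1\}$. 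Membership in $\npclass$ via Theorem~\ref{GJFAsubNP} is, as you say, free.
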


A benefit of Theorem~\ref{GJFAFixedWordProbHardTheorem} is that the employed GJFA is rather simple; thus, we obtain a simple proof. 
However, by choosing a more complicated GJFA, we can obtain a reduction from 3\textsc{SAT} to the fixed word problem for GJFA, which allows us to conclude a stronger lower bound that relies on the exponential time hypothesis.

\begin{theorem}\label{GJFAFixedWordProbHardThreeSatTheorem}
There exists a GJFA $M$ over a binary alphabet such that, unless ETH fails, there is no algorithm that, for a given word $w$, decides whether $w \in L(M)$ and runs in time $\landau^*\!\left(2^{o\left({\frac{\left|w\right|}{\log\left|w\right|}}\right)}\right)$.
\end{theorem}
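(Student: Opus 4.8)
The plan is to strengthen Theorem~\ref{GJFAFixedWordProbHardTheorem} by replacing the reduction from $\exactbcov2$ with a more fine-grained reduction from $3$\textsc{SAT}, then invoke ETH. The key quantitative issue is that, for Theorem~\ref{JFAUniversalWordProblemETHTheorem}, we got a lower bound in terms of the number of states; but here $M$ is \emph{fixed}, so all the information about the formula $\phi$ must be encoded into the \emph{input word} $w$, and we need $|w|$ to be only slightly superlinear in the size of $\phi$ — specifically we want $|w| = \landau(s \log s)$ where $s$ is the length of $\phi$, so that an $\landau^*(2^{o(|w|/\log|w|)})$ algorithm would give an $\landau^*(2^{o(s)})$ algorithm for $3$\textsc{SAT}, contradicting ETH.

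First I would fix a binary GJFA $M$ whose single ``macro-loop'' (analogous to the $q_{\mathrm C},q_{\mathrm D},q_0,q_1$ gadget of Figure~\ref{fig:NPcGJFA}) can, in one traversal, nondeterministically guess a truth value for one variable and verify, clause by clause, that the guessed literal is consistent with a running partial assignment encoded in the remaining input. The input word $w = w_\phi$ will be a concatenation of $n$ ``variable blocks,'' each block in turn built from $m$ ``clause slots,'' with separator symbols (coded over $\{0,1\}$ via a homomorphism $h(x_i)=10^i1$ as in Lemma~\ref{ReducingGjfaAlph}, so that factors cannot be mis-parsed). Crucially, each of the $nm$ clause slots only needs $\landau(\log n + \log m)$ bits to name which variable/clause it refers to, and since a $3$\textsc{SAT} formula of size $s$ has $n,m = \landau(s)$, the whole word has length $|w_\phi| = \landau(nm\log s) = \landau(s^2 \log s)$ in the most naive encoding. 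To push this down to $\landau(s\log s)$ I would instead arrange the input so that it only records, for each clause, the (at most three) variables occurring in it and their polarities — i.e.\ encode $\phi$ essentially verbatim with delimiters — and have $M$ sweep over variable blocks in order $x_1,\dots,x_n$, at step $i$ consuming exactly the occurrences of $x_i$ (both polarities) scattered through the clause list, ``jumping'' to each such occurrence. A clause is marked satisfied as soon as one of its literals is consumed under the chosen assignment; acceptance requires every clause to be marked. Then $w_\phi$ is just (a $h$-image of) $\phi$ with a constant-size overhead, so $|w_\phi| = \landau(s)$, which is even better than needed; the $\log|w|$ slack in the statement absorbs any extra bookkeeping (e.g.\ padding each clause to a fixed format, or the blow-up from $h$, which multiplies lengths by at most $\landau(\log n)$).

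The correctness argument mirrors the two bullet points in the proof of Theorem~\ref{GJFAFixedWordProbHardTheorem}: (i) a satisfying assignment $(\xi_1,\dots,\xi_n)$ yields an accepting computation by, in the $i$-th macro-loop, guessing $\xi_i$ and then using the jumping feature to consume precisely those literal-occurrences in the clause list that become true, so every clause is eventually cleared; (ii) conversely, any accepting computation visits the variable-$i$ gadget exactly once, committing to a value $\xi_i$, and the only factors it can legally consume there (given the delimiter structure enforced by $h$) are the corresponding literal-occurrences, so the fact that all clause-markers get cleared forces $\phi(\xi)=\mathrm{T}$. Finally, if some algorithm decided $w\in L(M)$ in time $\landau^*\!\left(2^{o(|w|/\log|w|)}\right)$, then since $|w_\phi|=\landau(s)$ (hence $|w_\phi|/\log|w_\phi| = \landau(s/\log s)$ at worst, and in the slightly-padded encoding $\Theta(s)$), composing with the linear-time construction $\phi\mapsto w_\phi$ would solve $3$\textsc{SAT} in time $\landau^*(2^{o(s)})$, contradicting ETH.

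I expect the main obstacle to be the \emph{parsing-robustness} of the GJFA: because a GJFA may jump anywhere, I must design the separator coding and the transition labels of the fixed $M$ so that, no matter how the head jumps, the only factors it can ever consume at a given state are the ``intended'' ones — otherwise spurious computations could accept non-instances. The homomorphism $h$ of Lemma~\ref{ReducingGjfaAlph} handles the alphabet-reduction cleanly, but the internal structure of $w_\phi$ (variable blocks vs.\ clause slots vs.\ polarity markers vs.\ satisfied-flags) needs its own distinct delimiters, and checking that the fixed automaton's finitely many transition labels genuinely pin down a unique parse up to the choice of assignment is the delicate, case-heavy part of the argument; getting the length bound right simultaneously is what forces the more economical ``encode $\phi$ almost verbatim'' layout rather than the wasteful $nm$-slot layout.
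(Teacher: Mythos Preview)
Your high-level plan matches the paper's approach quite closely: encode $\phi$ into the input word using binary variable codes of length $\Theta(\log n)$, have a fixed GJFA process the variables in order, for each variable commit to a truth value and consume matching literal-occurrences from a clause region, then appeal to ETH using $|w|=\landau(m\log n)$. The length arithmetic and the final ETH step are essentially right.

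The genuine gap is in the sentence ``A clause is marked satisfied as soon as one of its literals is consumed \ldots; acceptance requires every clause to be marked.'' A GJFA cannot mark anything; it can only consume. If ``marked'' means consuming a dedicated clause-flag together with a satisfied literal, you still have to explain what happens to the \emph{other} literals in that clause, and to all literals in \emph{unsatisfied} polarity: they remain in the input and must also be consumed before acceptance, yet you must forbid consuming them in a way that fakes satisfaction. Your sketch gives no mechanism for this cleanup-with-verification, and it is precisely here that the construction is nontrivial. The paper's solution is a genuinely two-phase automaton: in phase~1 it walks an auxiliary segment $w_{\mathrm{aux}}=\star^{\,\cdots}\#u_1^{p_1}\#u_2^{p_2}\cdots\#u_n^{p_n}$ left-to-right (forced by pairing each $\#$ with a leading $\star$), and while sitting on variable~$i$ it can delete a literal-occurrence $t_{j,r}$ from the clause region \emph{only} by synchronously consuming one copy of $u_i$ from $w_{\mathrm{aux}}$ bit-by-bit against $\overline{u}_i$ inside $t_{j,r}$---this is the mechanism that ties ``step $i$'' to ``occurrences of $x_i$'' without the automaton knowing~$i$. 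Phase~2 then sweeps the clause region left-to-right (again forced by $\star$-pairing) and accepts only if each clause segment has at most two of its three $c_{\mathrm T}/c_{\mathrm F}$ markers remaining, which is exactly ``at least one literal was deleted''; the leftover literal bodies are consumed freely by loops in this phase. Your proposal anticipates the first trick (variable codes, forced order) but not the second (a separate verification sweep that \emph{counts} surviving literal-heads per clause rather than relying on a ``satisfied'' flag), and without it the correctness direction ``accepting run $\Rightarrow$ satisfiable'' does not go through.
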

\begin{proof}
Consider the GJFA $M=(Q,\Sigma,R,q_\mathrm{A},\{q_\mathrm{E}\})$, where
\begin{eqnarray*}
\Sigma & = & \{0,1,\overline{0},\overline{1},c_T,c_F,\overline{c},\star,\#, \overline{\star}, \overline{\#}\}, \\
Q & = & \{q_\mathrm{A},q_\mathrm{B}^\mathrm{T},q_\mathrm{B}^\mathrm{F},q_\mathrm{C}^\mathrm{T},q_\mathrm{C}^\mathrm{F},q_0^\mathrm{T},q_0^\mathrm{F},q_1^\mathrm{T},q_1^\mathrm{F},q_\mathrm{D},q_\mathrm{E},q_\mathrm{F},q_\mathrm{G}\},
\end{eqnarray*}
as defined in Figure \ref{fig:GJFA3SAT}. 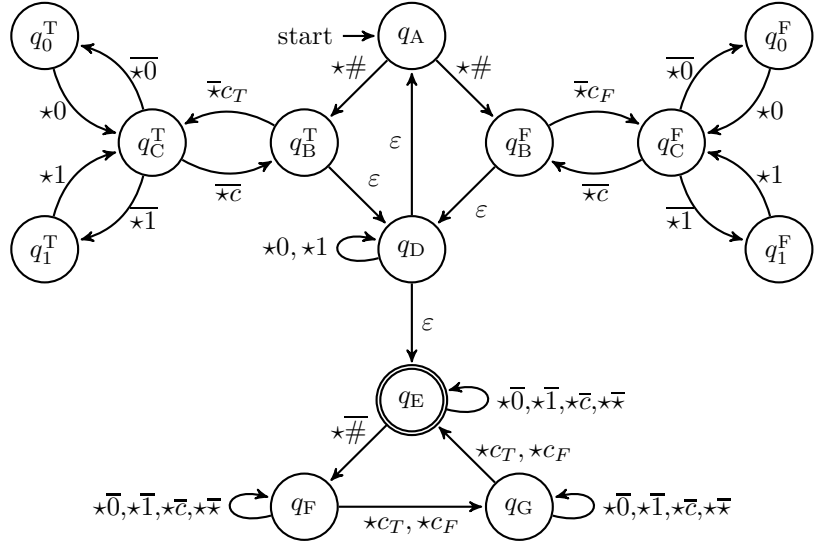
\begin{figure}
\begin{center}
\begin{tikzpicture}[->,>=stealth',shorten >=1pt,auto,node distance=2.0cm, scale = 1, thick]
  \tikzstyle{every state}=[fill=white,draw=black,text=black]
  \node[state]  (q_0)                    {$q_\mathrm{C}^\mathrm{T}$};
  \node[state]  (q_1)     [right of=q_0] {$q_\mathrm{B}^\mathrm{T}$};
  \node[initial,state] (q_2) [above right of=q_1] {$q_\mathrm{A}$};
  \node[state]          (q_3) 	  [below right of=q_2] {$q_\mathrm{B}^\mathrm{F}$};
  \node[state]          (q_4) 	  [right of=q_3] {$q_\mathrm{C}^\mathrm{F}$};
  \node[state]  (q_5)     [below right of=q_1] {$q_\mathrm{D}$};
  \node[state,accepting](q_6) 	  [below of=q_5] {$q_{\mathrm{E}}$};
  \node[state]  (q_7)     [below left of=q_6] {$q_{\mathrm{F}}$};
  \node[state]  (q_8)     [below right of=q_6] {$q_{\mathrm{G}}$};   
   \node[state]  (q_9)    [above left of=q_0] {$q_0^{\mathrm{T}}$};   
    \node[state]  (q_10)     [below left of=q_0] {$q_1^{\mathrm{T}}$};   
       \node[state]  (q_11)     [above right of=q_4] {$q_0^{\mathrm{F}}$};   
    \node[state]  (q_12)     [below right of=q_4] {$q_1^{\mathrm{F}}$}; 
  \path (q_0) edge [bend right] node [below] {$\overline{\star c}$} (q_1)
  		(q_1) edge [bend right] node [above] {$\overline{\star}c_T$} (q_0)
        (q_3) edge [bend left] node [above] {$\overline{\star}c_F$} (q_4)
  		(q_4) edge [bend left] node [below] {$\overline{\star c}$} (q_3)
        (q_0) edge [bend right] node [anchor=center, right] {$\overline{\star 0}$} (q_9)
  		(q_9) edge [bend right] node [anchor=center, left] {$\star 0$} (q_0)
        (q_0) edge [bend left] node [anchor=center, right] {$\overline{\star 1}$} (q_10)
  		(q_10) edge [bend left] node [anchor=center, left] {$\star 1$} (q_0)
        (q_4) edge [bend left] node [anchor=center, left] {$\overline{\star 0}$} (q_11)
  		(q_11) edge [bend left] node [anchor=center, right] {$\star 0$} (q_4)
        (q_4) edge [bend right] node [anchor=center, left] {$\overline{\star 1}$} (q_12)
  		(q_12) edge [bend right] node [anchor=center, right] {$\star 1$} (q_4)
		(q_2) edge node [above] {$\star\#\ \ $} (q_1)
        (q_1) edge node {$\varepsilon$} (q_5)
        (q_5) edge node {$\varepsilon$} (q_2)
        (q_5) edge [loop left] node {$\star 0,\star 1$} (q_5)
        (q_3) edge node {$\varepsilon$} (q_5)
        (q_2) edge node [above] {$\ \ \star\#$} (q_3)
        (q_5) edge node {$\varepsilon$} (q_6)
        (q_6) edge node [anchor=center, above] {$\star\overline{\#}\ \ $} (q_7)
        (q_8) edge node [anchor=center, right] {$\star c_T , \star c_F$} (q_6)
        (q_7) edge node [below] {$\star c_T , \star c_F$} (q_8)
        (q_7) edge [in=165,out=195, loop] node [anchor=center, left] {$\star \overline{0}$,$\star \overline{1}$,$\star \overline{c}$,$\star \overline{\star}$} (q_7)
        (q_8) edge [in=15,out=345,loop] node [anchor=center, right] {$\star \overline{0}$,$\star \overline{1}$,$\star \overline{c}$,$\star \overline{\star}$} (q_8)
        (q_6) edge [in=15,out=345,loop] node [anchor=center, right] {$\star \overline{0}$,$\star \overline{1}$,$\star \overline{c}$,$\star \overline{\star}$} (q_6);
\end{tikzpicture} 
\end{center}
\caption {A GJFA $M$ that can solve 3-SAT.}
\label{fig:GJFA3SAT}
\end{figure}
Fix a 3-CNF formula $\phi=\bigwedge_{j=1}^{m} C_j$, where $$C_j=\lambda_{j,1}\vee\lambda_{j,2}\vee\lambda_{j,3}$$ and  $\lambda_{j,1},\lambda_{j,2},\lambda_{j,3}$ are literals over variables $x_1,x_2,\ldots,x_n$. Suppose that for each $i\in\{1,\ldots,n\}$, the variable $x_i$ occurs $p_i$ times in $\phi$. Note that $\sum_{i=1}^n p_i=3m$.
Let $L=\left\lceil \log_2(n)\right\rceil$ and fix distinct codes $u_1,\ldots,u_n\in\{0,1\}^L$ for the variables. Let 
\begin{eqnarray*}
w_\mathrm{aux} & = & \star^{n+3mL}\# u_1^{p_1}\# u_2^{p_2}\cdots\# u_n^{p_n}, \\
w_\phi & = & \star^{m+m\cdot 6(L+2)} \overline{\#}t_1\overline{\#}t_2\cdots \overline{\#}t_m,
\end{eqnarray*}
where
\begin{eqnarray*}
t_{j,r} & = & \begin{cases}
c_\mathrm{T}\overline{u}_i \overline{c} &\mbox{if }\lambda_{j,r}=x_i,\\
c_\mathrm{F}\overline{u}_i \overline{c} &\mbox{if }\lambda_{j,r}=\neg x_i,
\end{cases} \\
t_j & = & \overline{\star}^{L+2}t_{j,1}\overline{\star}^{L+2}t_{j,2}\overline{\star}^{L+2}t_{j,3}
\end{eqnarray*}
for each $j\in \{1,\ldots,m\}$ and $r\in \{1,2,3\}$. Finally, let
$w=w_\mathrm{aux}w_\phi$ and let us claim that $M$ accepts the word $w$ if and only if $\phi$ is satisfiable. \par
It is clear that the machine works in two phases:
\begin{enumerate}
\item In the \emph{first phase}, which ends once the transition from $q_\mathrm{D}$ to $q_\mathrm{E}$ is taken, some parts of both $w_\mathrm{aux}$ and $w_\phi$ are consumed.
\item In the \emph{second phase}, only the transitions between the states $q_\mathrm{E}$, $q_\mathrm{F}$, and $q_\mathrm{G}$ can be used. Observe:
\begin{itemize}
\item Each of the transition labels contains $c_\mathrm{T}$, $c_\mathrm{F}$, or a letter with bar, which implies that only factors from $w_\phi$ are consumed in the second phase. 
\item Each of the transition labels starts with $\star$, which implies that the remainder of $\overline{\#}t_1\overline{\#}t_2\cdots \overline{\#}t_m$ is consumed left-to-right only.
\item The occurrences of $\overline{\#}$, $c_\mathrm{T}$, and $c_\mathrm{F}$ in transition labels imply that the second phase is successful only if $c_\mathrm{T}$ and $c_\mathrm{F}$ together occur at most twice between successive occurrences of $\overline{\#}$. 
\end{itemize}
It follows that before the second phase starts, at least one of the three occurrences of $c_\mathrm{T}$ and $c_\mathrm{F}$ must be consumed from each of the segments $t_1,t_2,\ldots,t_m$. This corresponds to at least one literal of each clause being satisfied. 
\end{enumerate}
It remains to check that:
\begin{itemize}
\item A run of the first phase must follow some fixed asignment of variables while consuming parts of $t_1,t_2,\ldots,t_m$ (i.\,e., only factors $t_{j,r}$ standing for satisfied literals are consumed).
\item Vice versa, for each assignment there is a run of the first phase that delete all the factors $t_{j,r}$ standing for satisfied literals. 
\end{itemize}
Suppose that $M$ is in the state $q_\mathrm{A}$. It must use a transition labeled with $\star\#$ leading to $q_\mathrm{B}^\mathrm{X}$ for $\mathrm{X}\in \{\mathrm{T},\mathrm{F}\}$, which implies that the remainder of $w_\mathrm{aux}$ is then of the form $\star\cdots\star u_i^{p_i}\# u_{i+1}^{p_{i+1}}\cdots\# u_n^{p_n}$. Then, before passing to $q_\mathrm{D}$ it can repeat the following process up to $p_i$ times:
\begin{itemize}
\item \emph{Open} a literal in any clause by consuming $\overline{\star}c_\mathrm{X}$ from $\overline{\star}^{L+2}t_{j,r}$. If $\mathrm{X}=\mathrm{T}$ (or $\mathrm{X}=\mathrm{F}$), only a positive (negative, respectively) literal can be open.
\item Use the transitions between $q_0^\mathrm{X}$, $q_1^\mathrm{X}$ and $q_\mathrm{C}^\mathrm{X}$ to consume $\overline{\star}^L \overline{u}_i$ from $\overline{\star}^{L+2}t_{j,r}$ together with consuming $\star^L u_i$ from $w_\mathrm{aux}$. This is necessary because passing back to $q_\mathrm{B}^\mathrm{X}$  is not possible until only $\overline{\star c}$ remains from $\overline{\star}^{L+2}t_{j,r}$.
\item $Close$ the literal, i.\,e., finish consuming $\overline{\star}^{L+2}t_{j,r}$ with passing back to $q_\mathrm{B}^\mathrm{T}$.
\end{itemize}
Then, $M$ passes to $q_\mathrm{D}$ and must use the loops on $q_\mathrm{D}$ to consume a possible reminder of $u_i^{p_i}$ from $w_\mathrm{aux}$. After that, if $w_\mathrm{aux}$ is not fully consumed, the first phase must continue, i.\,e., $M$ passes back to $q_\mathrm{A}$.
\par 
Together, for each $1\le i\le n$ the automaton chooses $\mathrm{X}\in \{\mathrm{T},\mathrm{F}\}$ and then deletes from $w_\phi$ an arbitrary number of factors that stand for occurrences of the exact literal $x_i$ or $\neg x_i$, respectively.
\par
Finally, Lemma~\ref{ReducingGjfaAlph} converts $M$ to a binary GJFA $M'$ and gives the homomorphism $h$ with $\left| h(w) \right|\le 13\left|w\right|$. 
Because $\left|h(w)\right|=\landau (m\log n)$ and the construction of $w$ and $h(w)$ from $\phi$ works in linear time, any algorithm deciding whether $h(w)\in L(M')$, running in time $\landau^*\!\left(2^{o\left({\frac{\left|w\right|}{\log\left|w\right|}}\right)}\right)$, runs in time $\landau^*\!\left(2^{o(m)}\right)$ and violates ETH.
\end{proof}

A special feature of the two particular GJFAs used in the proofs of Theorems~\ref{GJFAFixedWordProbHardTheorem}~and~\ref{GJFAFixedWordProbHardThreeSatTheorem} (see Figures~\ref{fig:NPcGJFA}~and~\ref{fig:GJFA3SAT}) is that 
the length of transition labels is at most $2$ (note that for JFAs, i.\,e., machines with labels of length at most $1$, the fixed word problem lies in $\pclass$). However, Lemma~\ref{ReducingGjfaAlph}, converting the GJFAs to binary ones, increases the lengths of labels.
It is open whether the fixed word problem remains $\npclass$-hard for GJFAs with binary alphabets and with words of length at most $2$ in the transitions.\par
The hardness results presented here point out that the difference between finite machines and general finite machines is crucial if we interpret them as jumping finite automata. In contrast to this, the universal word problem for classical finite automata on the one hand and classical general finite automata on the other is very similar in terms of complexity, i.\,e., in both cases it can be solved in polynomial time.

\LV{The fact that descriptional mechanisms could yield $\npclass$-hard universal word problems if both shuffle and concatenation operations are somehow involved was already known.
For instance, in \cite{OgdRidRou78} it is remarked at the end that 
expressions formed like ordinary regular expressions, but with shuffle as an additional operator, lead to a type of expressions with 
an $\npclass$-complete universal word problem. However, unlike in the case of GJFA, the class of languages that  can be described is just $\mathscr{REG}$ and hence at least the fixed word problem lies in $\nlclass$ for this type of expressions.
}





Let us comment on one more aspect of parsing JFA. Although this mechanism was presented as a device to accept words (elements of the free monoid), the very nature of the acceptance mode brings along the idea to present input words as tuples of integers. This makes no difference as long as the integers are encoded in unary, but it does make a difference if they are encoded in binary.
This is the standard encoding for the commutative grammars / semilinear sets as studied by Huynh \cite{Huy82,Huy83}
and also explains why his complexity results seemingly deviate from ours.
More precisely, he has shown that the
universal word problem for JFA (when they are considered as processing tuples of numbers encoded in binary) is indeed $\npclass$-complete.
\subsection{Non-Disjointness and Non-Universality}

The \emph{non-disjointness problem} is the task to decide, for given automata $M_1$ and $M_2$, whether there exists a word $w$ that is accepted by both $M_1$ and $M_2$, i.\,e., whether it holds that $L(M_1) \cap L(M_2) \neq \emptyset$. In the case of JFA, we encounter a similar situation as for the universal word problem, i.\,e., it can be decided in polynomial time for fixed alphabets, while it becomes $\npclass$-complete in general~\cite{Kop2015}. \par
The $\npclass$-hardness of non-disjointness also follows easily from the proof of Theorem~\ref{JFAUniversalWordProblemETHTheorem}. Moreover, the complexity lower bound depending on ETH applies to this problem as well:
\begin{theorem}
There is no algorithm deciding, for given JFAs $M_1,M_2$ with state sets $Q_1,Q_2$, whether $L(M_1) \cap L(M_2) \neq \emptyset$ in time $\landau^*\!\left(2^{o\left(\left|Q_1\right|+\left|Q_2\right|\right)}\right)$, unless ETH fails.
\end{theorem}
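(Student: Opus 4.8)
The plan is to reduce directly from \textsc{3SAT} by adapting the construction in the proof of Theorem~\ref{JFAUniversalWordProblemETHTheorem}, splitting the work of that single JFA across two JFAs and using their intersection to simulate acceptance of the fixed word $w=c_1c_2\ldots c_m$. First I would recall that in that proof the JFA $M$ built from $\phi$ had $|Q|=2n+1$ states and satisfied: $M$ accepts $w=c_1\cdots c_m$ if and only if $\phi$ is satisfiable. The simplest route is to let $M_1$ be exactly that JFA and let $M_2$ be a trivial JFA accepting only the single word $w$ (over $\Sigma=\{c_1,\ldots,c_m\}$); since JFA languages are perm-closed, $M_2$ actually accepts $\perm(w)$, but that is harmless because membership of $w$ in $L(M_1)$ is all that matters and $w\in L(M_1)\cap L(M_2)$ iff $w\in L(M_1)$ iff $\phi$ is satisfiable. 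A JFA accepting $\perm(c_1\cdots c_m)$ can be built with $m+1$ states as the $\alpha$-\textsc{SHUF} expression $c_1\shuffle c_2\shuffle\cdots\shuffle c_m$ translated into a finite machine by the standard Thompson-style construction (treating $\shuffle$ as catenation), using Theorem~\ref{thm-JFA=a-SHUF} and Lemma~\ref{lem-reg-alphashuf}. Hence $|Q_1|=2n+1$ and $|Q_2|=m+1$, both linear in the size of $\phi$.

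The key steps, in order, are: (1) invoke the construction of Theorem~\ref{JFAUniversalWordProblemETHTheorem} to obtain $M_1$ with $L(M_1)\ni w \iff \phi$ satisfiable and $|Q_1|=2n+1$; (2) construct $M_2$ with $L(M_2)=\perm(c_1\cdots c_m)$ and $|Q_2|=m+1$, which is straightforward; (3) observe that $L(M_1)\cap L(M_2)\neq\emptyset$ iff $w\in L(M_1)$, because $L(M_2)$ consists solely of permutations of $w$ and, since $L(M_1)$ is perm-closed by Corollary~\ref{cor-closedsets}, it contains $w$ iff it contains some permutation of $w$; (4) note that the whole reduction runs in polynomial (indeed linear) time and that $|Q_1|+|Q_2|=2n+m+2=\landau(n+m)$ is linear in the number of variables plus clauses of $\phi$; (5) conclude that an algorithm for non-disjointness running in time $\landau^*(2^{o(|Q_1|+|Q_2|)})$ would solve \textsc{3SAT} in time $\landau^*(2^{o(n+m)})=\landau^*(2^{o(n)})$, contradicting ETH (using the sparsification lemma, as discussed in the Appendix, so that we may assume $m=\landau(n)$).

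I do not expect a serious obstacle here; the one point requiring a little care is step~(3), making sure that using a perm-closed $M_2$ rather than one that literally accepts only $w$ does not break the reduction — but this is immediate from the perm-closedness of $\mathscr{JFA}$ languages. A second minor point is to state the ETH bound in terms of $n+m$ rather than just $n$: after applying the sparsification lemma we may assume the number of clauses is linear in $n$, so $|Q_1|+|Q_2|=\landau(n)$ and a $2^{o(|Q_1|+|Q_2|)}$ algorithm gives a $2^{o(n)}$ algorithm for \textsc{3SAT}. One could alternatively avoid introducing $M_2$ at all and instead split the transitions of the single JFA of Theorem~\ref{JFAUniversalWordProblemETHTheorem} into two automata that are forced to agree on the consumed letters, but the product-with-a-fixed-word approach above is cleaner and already yields the claimed linear dependence on $|Q_1|+|Q_2|$.
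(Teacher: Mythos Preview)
Your proposal is correct and follows essentially the same approach as the paper: take $M_1$ to be the JFA from Theorem~\ref{JFAUniversalWordProblemETHTheorem} and $M_2$ a JFA with $\landau(m)$ states accepting $\perm(w)$, so that $L(M_1)\cap L(M_2)\neq\emptyset$ iff $w\in L(M_1)$ iff $\phi$ is satisfiable, and then invoke ETH via the $\landau(n+m)$ bound on $|Q_1|+|Q_2|$. Your write-up is actually more explicit than the paper's (you spell out the perm-closedness justification for step~(3) and the role of the sparsification lemma), but the underlying reduction is identical.
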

\begin{proof}
The construction from the proof of Theorem~\ref{JFAUniversalWordProblemETHTheorem} produces, for given formula $\phi$ with $n$ variables and $m$ clauses, a JFA $M_1$ with $\landau(n)$ states and a word $w$ of length $m$ such that $\phi$ is satisfiable if and only if $w\in L(M_1)$. We can trivially construct a JFA $M_2$ with $\landau(m)$ states such that $L(M_2)=\perm(w)$. Then $w\in L(M_1)$ if and only if $L(M_1) \cap L(M_2) \neq \emptyset$. \par 
Thus, any algorithm answering $L(M_1) \cap L(M_2) \neq \emptyset$ in time $\landau^*\!\left(2^{o\left(\left|Q_1\right|+\left|Q_2\right|\right)}\right)$ can solve 3-SAT in time $\landau^*(2^{o(n+m)})$, which violates ETH.
\end{proof}
Another basic decision problem is the \emph{non-universality problem}, where the task is to decide, for a given automaton $M$, whether $L(M) \neq \Sigma^*$. 
Results of \cite{MeySto72} and the fact that, on unary alphabets, classical nondeterministic machines coincide with JFAs, imply that the non-universality problem for JFA is $\npclass$-hard even if restricted to JFA with unary alphabets.
On the other hand, \cite{Kop2015} shows (in terms of a more general model) that non-universality lies in $\npclass$ for any fixed alphabet size.
For the unrestricted variant of non-universality, which is trivially $\npclass$-hard as well, no close upper bound of the complexity is known \cite{Kop2015}.
{\renewcommand{\arraystretch}{1.6}
\begin{table}
\begin{centering}
\begin{tabular}{|>{\raggedright}m{25mm}|>{\centering}m{9mm}>{\raggedleft}m{4mm}|>{\centering}m{9mm}>{\raggedleft}m{4mm}|>{\centering}m{9mm}>{\raggedleft}m{4mm}|>{\centering}m{9mm}>{\raggedleft}m{4mm}|}
\cline{2-9} 
\multicolumn{1}{>{\raggedright}m{25mm}|}{} & 

\multicolumn{2}{>{\centering}m{13mm}|}{ \hspace*{5mm} \textbf{JFA} } &

\multicolumn{2}{>{\centering}m{13mm}|}{
\hspace*{4mm}\textbf{JFA} \hspace*{4mm}\mbox{\textbf{\footnotesize{$\left|\Sigma\right|\!=\! k$}}}} &

\multicolumn{2}{>{\centering}m{13mm}|}{ \hspace*{4mm}\textbf{GJFA} } &

\multicolumn{2}{>{\centering}m{13mm}|}{
\hspace*{3mm}\textbf{GJFA} \hspace*{4mm}\mbox{\textbf{\footnotesize{$\left|\Sigma\right|\!=\! k$}}}}

\tabularnewline \hline 

\textbf{fixed word~problem} & 
\hspace*{6mm}P &
\cite{JedSze2001} &
\hspace*{6mm}P &  
& 
\hspace*{4mm}NPC &  
& 
\hspace*{2mm} \begin{minipage}[t]{11mm} \begin{center} NPC{\footnotesize{ if~$k\!\ge\!2$}} \par\end{center} \end{minipage} & 
$\blacklozenge$
\tabularnewline \hline  

\textbf{universal word~problem} &
\hspace*{4mm}NPC & 
\cite{MaySto94} & 
\hspace*{6mm}P & 
\cite{JedSze2001} & 
\hspace*{4mm}NPC &  
& 
\hspace*{2mm} \begin{minipage}[t]{11mm} \begin{center} NPC{\footnotesize{ if~$k\!\ge\!2$}} \par\end{center} \end{minipage} & 
$\blacklozenge$
\tabularnewline \hline 
\textbf{non-disjointness}&
\hspace*{4mm}NPC &
\cite{Kop2015} &
\hspace*{6mm}P & 
\cite{Kop2015} &
\hspace*{4mm}UND &
& 
\hspace*{2mm} \begin{minipage}[t]{11mm} \begin{center} UND{\footnotesize{ if~$k\!\ge\!18$}} \par\end{center} \end{minipage} &
\cite{VO6}
\tabularnewline \hline 
\textbf{non-universality} & 
\hspace*{-1mm} \mbox{NP-hard} & 
\cite{MeySto72} & 
\hspace*{2mm} \begin{minipage}[t]{11mm} \begin{center} NPC{\footnotesize{ if~$k\!\ge\!1$}} \par\end{center} \end{minipage} &
\cite{Kop2015} & 
\hspace*{4mm}UND & \cite{VO9} & \hspace*{7mm}\textbf{?} & \tabularnewline
\hline  \end{tabular} \par\end{centering}
\caption{Lower and upper bounds on complexity of basic problems. Legend: NPC --- $\npclass$-complete; UND --- undecidable; \textbf{?} --- no information; $\blacklozenge$ --- present results}

\end{table}}


\LV{\section{Discussions and Prospects}

We have related the concept of jumping finite automata to the, 
actually quite well-studied, area of expressions involving shuffle operators. This immediately opens up further questions, and it also shows some limitations for this type of research programme.
\begin{itemize}
\item Is there a characterization of the class of languages accepted by general jumping finite automata in terms of expressions?\footnote{We claimed to have found such a characterization at the German Formal Language community meeting in 2014, but this claim turned out to be flawed.} We are currently working on this question and other ones related to GJFAs.
\item The original motivation for introducing variants of expressions involving shuffle operators was to model  parallel features from programming languages; see, e.\,g., \cite{CamHab74,Maz75,Sha78}. It is well-known that adding all according features immediately lead to expressions that are computationally complete, i.\,e., they characterize the recursively enumerable languages~\cite{AraTok81}.
Notably, expressions with limited nesting of iterated concatenation and iterated shuffle operators (as provided by our main normal form results for $\alpha$-SHUF expressions) have a descriptive power limited by 
Petri nets (without inhibitor arcs), so that in particular the non-emptiness problem for such limited expressions is decidable (in contrast to the general situation), confer \cite{AraKagTok81,EspNie94,Kos82,May84}.
Yet, decidability questions for Petri nets are quite hard, so that in any case the study of restricted versions of shuffle expressions or related devices is of considerable practical interest.
\item The inductive definition of $\alpha$-SHUF expressions starts with single letters (plus symbols for the empty set and the empty word). This is contrasting the definition of SHUF expressions, which starts with any finite language as a basis. As it is well-known, for classical regular expressions this difference vanishes. 
Hierarchies as the one explained in \cite{FliKud2012b} should inspire similar research for $\alpha$-SHUF expressions as introduced in this paper.
\item 
As there is a number of variations and restrictions of the shuffle operation itself~\cite{KarSos2005,Kud97,KudMat97,MatRozSal98}, it would be also interesting to study
expressions that contain some of these. We plan to deal with this topic in the near future.
\item The whole area seems to be related to \emph{membrane systems}, also known as \emph{P systems}. The reason is that membrane computing often reduces to \emph{multiset computing}, which is just another name for dealing with subsets of $\mathbb{N}^\Sigma$. These connections are explained by Kudlek and Mitrana in \cite{KudMit0203}.
\item We have somehow initiated the study of complexity aspects of JFA and related models under ETH. Many other automata problems can be investigated in this paradigm (as also indicated in the Appendix), and more importantly from an algorithmic point of view, it would be interesting to know of procedures that match the proven lower bounds.
\end{itemize}
Summarizing, the study of expressions involving the shuffle operation,
as well as of variants of jumping automata, 
still offers a lot of interesting questions, as it is also
indicated in the recent survey of Restivo~\cite{Res2015}.
}

%
\section*{Acknowledgements}
\noindent
Meenakshi Paramasivan --- Supported by the DAAD Phd Funding Programme - Research Grants for Doctoral Candidates and Young Academics and Scientists - Programme ID: 57076385. \\
Vojt\v{e}ch Vorel --- Supported by the Czech Science Foundation grant GA14-10799S and the GAUK grant No. 52215.

\section*{References} 

\bibliographystyle{plain}
\bibliography{abbrev,hen,zukopieren,vojtaTemp}

\section{Appendix: The Exponential Time Hypothesis\label{Sec:ETHappendix}}

The Exponential Time Hypothesis (ETH) was formulated by Impagliazzo, Paturi and Zane
in \cite{ImpPatZan2001}: 
\begin{quote}
There is a positive real
$s$ such that 3SAT with  $n$ variables and $m$ clauses cannot be solved in time $2^{sn}(n + m)^{\landau(1)}$.
\end{quote}

ETH considerably strengthens the well-known and broadly accepted hypothesis
that  $\pclass\neq \npclass$.
A slightly weaker but more compact formulation of ETH  (which we will hence adopt
in this paper) is the following hypothesis:

\begin{quote}There is no algorithm that solves 3SAT with  $n$ variables and $m$ clauses  in time $\landau^*(2^{o(n)})$.
\end{quote}
The famous \emph{sparsification lemma}
of Impagliazzo, Paturi and Zane~\cite{ImpPatZan2001}
can hence be stated as follows:

\begin{quote}Assuming ETH, there is no algorithm that solves 3SAT with  $n$ variables and $m$ clauses  in time $\landau^*(2^{o(n+m)})$.
\end{quote}

Notice that this seemingly minor modification gives in fact a tremendous advantage to everybody aiming at proving complexity statements that hold, unless ETH fails.
Observe that, in order to prove such complexity results, one also needs a special type of reductions called SERF reductions, see \cite{ImpPatZan2001},
or also the survey \cite{LokMarSau2011b}. This type of reduction is essentially 
a subexponential-time Turing reduction, and the very power of such a a type of reduction is exploited in the sparsification lemma.
However, the easiest case of such a reduction is in fact a 
polynomial-time many-one reduction from 3-SAT such that there is a linear dependence of the size measure of the reduced instance on the number of variables and clauses of the given 3SAT instance. Several (but not all)
textbook reductions enjoy this kind of \emph{linearity property}.

It is worth mentioning that not all textbook reductions enjoy this linearity property. For instance, out of the five problems reduced (directly or indirectly) from 3-SAT in Sec. 3.1 of \cite{GarJoh79}, only \textsc{Vertex Cover} and \textsc{Clique} enjoy this property. Conversely, the given reduction from 3-SAT to \textsc{3-Dimensional Matching} (3-DM) produces $\landau(n^2m^2)$ many triples from a given 3-SAT instance with  $m$ clauses and $n$ variables.
Hence, under ETH we can only rule out algorithms running in time $\landau^*(2^{o(\sqrt[4]{t})})$ for \textsc{3-Dimensional Matching} instances with $t$ triples.
So, we might need new (and possibly also more complicated) reductions to make proper use of ETH. This venue is also exemplified by reductions presented in this paper (see Theorems \ref{JFAUniversalWordProblemETHTheorem} and \ref{GJFAFixedWordProbHardThreeSatTheorem}).

Notice that the lower bounds that can be obtained by using published proofs that only go back to 3-SAT by a chain of reductions could be really weak.
We make this statement clearer by one concrete example.
In the following, we denote the `loss' that is incurred by a reduction by given the `root term'; for instance, we have the following losses:
\begin{itemize}
\item From 3-SAT to 3-DM: $\sqrt[4]{\cdot}$ \cite{GarJoh79}.
\item From 3-DM to 4-\textsc{Packing}:  $\sqrt[4]{\cdot}$ \cite{GarJoh79}.
\item From 4-\textsc{Packing} to 3-\textsc{Packing}: $\sqrt[2]{\cdot}$ \cite{GarJoh79}.
\item From 3-\textsc{Packing} to \textsc{Exact Block Cover} EBC$_2$: linear \cite{JiaSXXZZ2014}.
\item From EBC$_2$ to the fixed word problem of a GJFA: linear (Theorem \ref{JFAUniversalWordProblemETHTheorem}).
\end{itemize}

This chain of reduction would hence incur a loss of $\sqrt[32]{\cdot}$, which also 
shows the need to exhibit yet another reduction for the purpose of making proper use of ETH (Theorem \ref{GJFAFixedWordProbHardThreeSatTheorem}).

Another example for a reduction from 3-SAT that only gives a weak-looking bound is offered by the reduction of Stockmeyer and Meyer (mentioned several time throughout this paper) that shows that the question whether any word is not accepted by a given unary regular language is $\npclass$-hard. As can be seen by analyzing that proof, 
under ETH only only the existence of 
an $\landau^*(2^{\sqrt[4]{q-\varepsilon}})$-time algorithm is ruled out (for $q$-state unary NFA).
As presented in
a talk on
\emph{Lower Bound Results for Hard Problems Related to Finite Automata}
in the workshop \emph{Satisfiability Lower Bounds and Tight Results for Parameterized and Exponential-Time Algorithms} at the Simons Institute, Berkeley, in early November 2015, this 
can be improved to the following statement:

\begin{theorem}\label{thm-tally-NFA-universality} 
Unless ETH fails, there is no $\landau^*(2^{o(q^{1/3})})$-time algorithm for deciding,
given a unary NFA $M$ on $q$ states, whether $L(M)\neq\{a\}^*$. 
\end{theorem}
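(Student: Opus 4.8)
The plan is to give a polynomial-time reduction from \textsc{3-Coloring} to the non-universality problem for unary NFAs and then to invoke the sparsification lemma. Let $G=(V,E)$ be a graph with $V=\{1,\dots,v\}$ and $|E|=e$. Fix pairwise coprime integers $m_1,\dots,m_v\ge 3$ (for instance the $v$ smallest primes $\ge 3$, so that $m_i=\landau((v+e)\log(v+e))$), and read a natural number $N$ as a tentative colouring $c^N\colon i\mapsto N\bmod m_i$. The goal is to build a unary NFA $M_G$ such that $a^N\in L(M_G)$ if and only if $c^N$ is \emph{not} a proper $3$-colouring of $G$. Then $G$ is $3$-colourable iff $\{a\}^*\setminus L(M_G)\neq\emptyset$, and since, by the Chinese Remainder Theorem, the set of good $N$ is a union of residue classes modulo $m_1\cdots m_v$, it is either empty or infinite; this yields the stronger property that $G$ is $3$-colourable iff $\{a\}^*\setminus L(M_G)$ is infinite, which is exactly what Theorem~\ref{thm-non-REG-complexity} and Corollaries~\ref{Cor:regularityETH}--\ref{Cor:commutativityETH} inherit.

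The automaton $M_G$ is a union of simple directed cycles attached to a single initial state $s$: on the first letter, $s$ branches into the start position of every cycle, and accepting positions are rotated to absorb this offset. There are two families of cycles. For each vertex $i$ we take a cycle of length $m_i$ whose accepting states are the positions $3,4,\dots,m_i-1$; it accepts $a^N$ precisely when $c^N(i)\notin\{0,1,2\}$, i.e.\ the colour of $i$ is out of range. For each edge $\{i,j\}\in E$ we take a cycle of length $m_im_j$ (using $\gcd(m_i,m_j)=1$) whose accepting states are the three positions congruent to some $r\in\{0,1,2\}$ both modulo $m_i$ and modulo $m_j$; it accepts $a^N$ precisely when $c^N(i)=c^N(j)\in\{0,1,2\}$, i.e.\ the edge $\{i,j\}$ is monochromatic with a legal colour. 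Hence $a^N\in L(M_G)$ iff some vertex colour is out of range or some edge is monochromatic, i.e.\ iff $c^N$ is not a proper $3$-colouring; and conversely, since the CRT realises every element of $\{0,1,2\}^V$ as some $c^N$, a proper $3$-colouring exists iff some $a^N$ is rejected.

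For the running time, $M_G$ has $q=1+\sum_i m_i+\sum_{\{i,j\}\in E}m_im_j=\landau\!\big((v+e)^3\big)$ states, up to logarithmic factors coming from the size of the primes. The crucial point is that the constraints of \textsc{3-Coloring} are \emph{binary}, so each ``conflict'' cycle is a product of only two moduli; this is why the exponent is $3$, whereas the analogous (ternary-clause) Stockmeyer--Meyer reduction from 3-SAT produces cycles that are products of three primes and only yields an exponent of $4$. Finally, the textbook reduction from 3-SAT to \textsc{3-Coloring} is linear, and by the sparsification lemma we may assume $m=\landau(n)$, so an $n$-variable instance is turned into a graph with $v+e=\landau(n)$ and hence into a unary NFA with $q=\landau(n^3)$ up to lower-order factors. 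An $\landau^*(2^{o(q^{1/3})})$-time algorithm for unary-NFA non-universality would thus decide 3-SAT in time $\landau^*(2^{o(n)})$, contradicting ETH.

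The step I expect to be the main obstacle is the size bookkeeping needed to push the exponent down to $3$ cleanly: pairwise coprimality forces the moduli up to size $\Theta((v+e)\log(v+e))$, so one must be careful (e.g.\ by assigning the smallest primes to the highest-degree vertices, or by starting from a bounded-degree variant of \textsc{3-Coloring}) to keep $\sum_{\{i,j\}\in E}m_im_j$ genuinely cubic in $v+e$ rather than incurring extra polylogarithmic blow-up, and to argue that the resulting bound still matches $q^{1/3}$ in the sense needed. One must also verify carefully the off-by-one interaction between the start state, the ``out of range'' cycles, and the ``monochromatic edge'' cycles, so that the equivalence $a^N\notin L(M_G)\iff c^N\text{ is a proper }3\text{-colouring}$ holds with no exceptional small values of $N$. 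The rest --- polynomiality of the construction and the ETH deduction via sparsification --- is routine.
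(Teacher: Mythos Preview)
Your approach is exactly what the paper points to: it does not give a proof of this theorem but states that ``that proof (not delivered in the appendix) was from \textsc{3-Coloring}'' and that it yields the property ``$G$ is 3-colorable if and only if $\{a\}^*\setminus L(M_G)$ is infinite''. Your CRT-based construction (one cycle of length $m_i$ per vertex to catch out-of-range colours, one cycle of length $m_im_j$ per edge to catch monochromatic edges) is precisely this reduction, and your explanation of why it beats Stockmeyer--Meyer --- binary edge constraints give products of two primes instead of three, hence $q\approx (v+e)^3$ rather than $(n+m)^4$ --- is the right insight. The infinity property you derive from periodicity via the CRT is also exactly what the paper uses downstream for Corollaries~\ref{Cor:regularityETH} and~\ref{Cor:commutativityETH}.

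The one point you flag as an obstacle is genuine and is indeed the only real work left. With the first $v$ primes one gets $m_i=\Theta(v\log v)$ and hence $q=\landau(v^3\log^2 v)$ even for bounded-degree graphs, and the stray polylog in $q^{1/3}=\Theta(v\log^{2/3}v)$ does not automatically disappear inside $o(\cdot)$: a hypothetical $2^{q^{1/3}/\log^{1/3}q}$ algorithm would still only give a $2^{\Theta(v\log^{1/3}v)}$ procedure for \textsc{3-Coloring}, which does not contradict ETH. Your two suggested remedies (start from a bounded-degree variant of \textsc{3-Coloring}; assign the small primes carefully) reduce but do not eliminate this gap. Since the paper itself gives no argument here and, in the analogous discussion of the Stockmeyer--Meyer bound, already treats the corresponding polylog loss loosely, the cleanest honest formulation of what your construction actually proves is a lower bound of the form $2^{o\left((q/\operatorname{polylog} q)^{1/3}\right)}$, with the paper's $2^{o(q^{1/3})}$ read as shorthand for this. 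Everything else in your sketch is correct.
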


As this seems to be the currently best bound of its kind, we make use of it in Corollaries \ref{Cor:regularityETH} and \ref{Cor:commutativityETH}.
Stronger reductions are known for the case of binary input alphabets, as shown in the same talk.

\end{document}